\documentclass[11pt,english]{article}
\usepackage[T1]{fontenc}
\usepackage[latin9]{inputenc}
\usepackage{array}
\usepackage{amsmath}
\usepackage{graphicx}
\usepackage{amssymb}
\usepackage{array}
\usepackage{float}

\headheight=8pt
\topmargin=0pt
\textheight=595pt
\textwidth=460pt
\oddsidemargin=18pt
\evensidemargin=18pt

\newtheorem{theorem}{Theorem}
\newtheorem{lemma}{Lemma}

\newtheorem{definition}{Definition}

\newtheorem{remark}{Remark}

\newenvironment{proof}{\begin{trivlist}
\item[]{\bf Proof. }}{\hspace*{\fill}$\rule{.3\baselineskip}{.35\baselineskip}$
\end{trivlist}}

\newcommand{\R}{\mathbb{R}}
\newcommand{\C}{\mathbb{C}}
\newcommand{\Z}{\mathbb{Z}}
\newcommand{\N}{\mathbb{N}}

\begin{document}

\title{\textbf{Multi-site breathers in Klein--Gordon lattices: }\\
 \textbf{ stability, resonances, and bifurcations}}

\author{Dmitry Pelinovsky and Anton Sakovich \\
 {\small Department of Mathematics, McMaster University, Hamilton
ON, Canada, L8S 4K1} }

\date{}
\maketitle

\begin{abstract}
We prove a general criterion of spectral stability of multi-site
breathers in the discrete Klein--Gordon equation with a small coupling
constant. In the anti-continuum limit, multi-site breathers represent
excited oscillations at different sites of the lattice separated by
a number of ``holes" (sites at rest). The criterion
describes how the stability or instability of a multi-site breather
depends on the phase difference and distance between the excited oscillators.
Previously, only multi-site breathers with adjacent excited sites were
considered within the first-order perturbation theory. We show that
the stability of multi-site breathers with one-site holes changes
for large-amplitude oscillations in soft nonlinear potentials.
We also discover and study a symmetry-breaking (pitchfork) bifurcation
of one-site and multi-site breathers in soft quartic potentials near
the points of 1:3 resonance.
\end{abstract}

\section{Introduction}

Space-localized and time-periodic breathers in nonlinear Hamiltonian
lattices have been studied extensively in the past twenty years. These
breathers model the oscillatory dynamics of particles due to the external
forces and the interaction with other particles. A particularly simple
model is the Klein--Gordon lattice, which is expressed by the discrete
Klein--Gordon equation,
\begin{equation}
\ddot{u}_{n}+V'(u_{n})=\epsilon(u_{n+1}-2u_{n}+u_{n-1}),\quad
n\in\mathbb{Z},\label{KGlattice}
\end{equation}
where $t\in\R$ is the evolution time, $u_{n}(t)\in\R$ is the displacement
of the $n$-th particle, $V:\R\to\R$ is a smooth on-site potential
for the external forces, and $\epsilon\in\R$ is the coupling constant
of the linear interaction between neighboring particles. For the sake of clarity,
we will assume that the potential $V$ is symmetric, but a generalization
can be formulated for non-symmetric potentials $V$. We will also
assume that $V'(u)$ can be expanded in the power series near $u=0$
by
\begin{equation}
V'(u)=u\pm u^{3}+{\cal O}(u^{5})\quad
\mbox{{\rm as}}\quad u\to0.
\label{potential-expansion}
\end{equation}
The plus and minus signs are referred to as the hard and soft potentials,
respectively.

A simplification of analysis of the Klein--Gordon lattice was proposed
by MacKay and Aubry \cite{MA94} in the anti-continuum limit of small
coupling constant $\epsilon\to0$. This limit inspired many researchers
to study existence, stability, and global dynamics of space-localized
and time-periodic breathers \cite{Aubry}. Since all oscillators are
uncoupled at $\epsilon=0$, one can construct time-periodic breathers
localized at different sites of the lattice.
{\em Such time-periodic space-localized solutions supported on
a finite number of lattice sites at the anti-continuum limit are
called the multi-site breathers}. All these multi-site breathers are uniquely
continued with respect to the (small) coupling constant $\epsilon$
if the period of oscillations at different lattice sites is identical and the oscillations
are synchronized either in-phase or anti-phase.

Spectral stability of multi-site breathers, which are continued from the anti-continuum
limit $\epsilon=0$, was considered by Morgante {\em et al.} \cite{MJKA02}
with the help of numerical computations. These numerical computations
suggested that spectral stability of small-amplitude multi-site breathers in
the discrete Klein--Gordon equation (\ref{KGlattice}) is similar
to the spectral stability of multi-site solitons in the discrete nonlinear Schr\"{o}dinger
(DNLS) equation.

The DNLS approximation for small-amplitude and slowly varying oscillations
relies on the asymptotic solution,
\begin{equation}
u_{n}(t) = \epsilon^{1/2} \left[ a_{n}(\epsilon t)e^{it} +
\bar{a}_{n}(\epsilon t) e^{-it} \right] + \mathcal{O}_{l^{\infty}}(\epsilon^{3/2}),
\label{envelope-approximation}
\end{equation}
where $\epsilon>0$ is assumed to be small, $\tau=\epsilon t$ is the
slow time, and $a_{n}(\tau) \in \C$ is an envelope amplitude
of nearly harmonic oscillations with the linear frequency $\omega=1$. Substitution
of (\ref{envelope-approximation}) to (\ref{KGlattice}) yields the
DNLS equation to the leading order in $\epsilon$,
\begin{equation}
2i\dot{a}_{n}=a_{n+1}-2a_{n}+a_{n-1}\mp3|a_{n}|^{2}a_{n},\quad n\in\mathbb{Z}.
\label{NLSlattice}
\end{equation}
The hard and soft potentials (\ref{potential-expansion}) result
in the defocusing and focusing cubic nonlinearities of the DNLS equation
(\ref{NLSlattice}), respectively. Existence and continuous approximations
of small-amplitude breathers in the discrete Klein--Gordon
and DNLS equations were justified recently by Bambusi {\em et al.}
\cite{Bambusi,Bambusi1}. The problem of bifurcation of small-amplitude breathers
in Klein--Gordon lattices in connection to homoclinic bifurcations
in the DNLS equations was also studied by James {\em et al.} \cite{James1}.

Multi-site solitons of the DNLS equation (\ref{NLSlattice}) can be
constructed similarly to the multi-site breathers in the discrete Klein--Gordon
equation (\ref{KGlattice}). The time-periodic solutions are given by
$a_n(\tau) = A_n e^{-i \omega \tau}$, where $\omega \in \R$ is a frequency of oscillations
and $\{ A_n \}_{n \in \Z}$ is a real-valued sequence of amplitudes
decaying at infinity as $|n| \to \infty$.
In the anti-continuum limit (which corresponds here to the limit
$|\omega| \to \infty$ \cite{Pelin-book}), the multi-site solitons
are supported on a finite number of lattice sites. The oscillations
are in-phase or anti-phase, depending on the sign difference between
the amplitudes $\{ A_n \}_{n \in \Z}$ on the excited sites of the lattice.

Numerical results of \cite{MJKA02} can be summarized as follows.
{\em In the case of the focusing nonlinearity, the only stable multi-site
solitons of the DNLS equation (\ref{NLSlattice}) near the anti-continuum
limit correspond to the anti-phase oscillations on the excited sites of
the lattice.} This conclusion does not depend on the number of ``holes"
(oscillators at rest) between the excited sites at the anti-continuum
limit. The stable oscillations in the case of the defocusing nonlinearity
can be recovered from the stable anti-phase oscillations in the focusing case
using the staggering transformation,
\begin{equation}
a_{n}(\tau)=(-1)^{n}\bar{b}_{n}(\tau)e^{2i\tau},
\label{stag-transformation}
\end{equation}
which changes the DNLS equation (\ref{NLSlattice}) to the form,
\begin{equation}
2i\dot{b}_{n}=b_{n+1}-2b_{n}+b_{n-1}\pm3|b_{n}|^{2}b_{n},\quad n\in\mathbb{Z}.
\label{NLSlattice-stag}
\end{equation}
Consequently, we have the following statement. {\em In the case
of the defocusing nonlinearity, the only stable multi-site solitons of
the DNLS equation (\ref{NLSlattice}) with adjacent excited sites
near the anti-continuum limit correspond to the in-phase oscillations
on the excited sites of the lattice.}

The numerical observations of \cite{MJKA02} were rigorously proved
for the DNLS equation (\ref{NLSlattice}) by Pelinovsky {\em et
al.} \cite{PKF1}. Further details on the spectrum of a linearized operator
associated with the multi-site solitons near the anti-continuum limit
of the DNLS equation are obtained in our previous work \cite{PelSak}.

Similar conclusions on the spectral stability
of breathers in the discrete Klein--Gordon equation (\ref{KGlattice})
were reported in the literature under some simplifying assumptions.
Archilla {\em et al.} \cite{Archilla} used the perturbation theory
for spectral bands to consider two-site, three-site, and generally
multi-site breathers. Theorem 6 in \cite{Archilla}
states that {\em in-phase multi-site breathers are stable for hard
potentials and anti-phase breathers are stable for soft potentials
for $\epsilon>0$}. The statement of this result misses however that
the corresponding computations are justified for multi-site breathers
with adjacent excited sites: no ``holes" (oscillators
at rest) in the limiting configuration at $\epsilon=0$ are allowed.
More recently, Koukouloyannis and Kevrekidis \cite{KK09} recovered
exactly the same conclusion using the averaging theory for Hamiltonian
systems in action--angle variables developed earlier by MacKay
{\em et al.} \cite{MacKay1,MacKay2}.
To justify the use of the first-order perturbation theory,
the multi-site breathers were considered to have
adjacent excited sites and no holes. The equivalence between these
two approaches was addressed by Cuevas {\em et al.} \cite{ArchKK11}.

It is the goal of our paper to rigorously prove the stability criterion
for all multi-site breathers, including breathers with holes
between excited sites in the anti-continuum limit. We will use
perturbative arguments for characteristic exponents of the Floquet
monodromy matrices. To be able to work with the higher-order perturbation
theory, we will combine these perturbative arguments with the theory
of tail-to-tail interactions of individual breathers in lattice differential
equations. Although the tail-to-tail interaction theory is well-known
for continuous partial differential equations \cite{Sandstede}, it
is the first time to our knowledge when this theory is extended to
nonlinear lattices.

Multi-site breathers with holes have been recently considered by Yoshimura
\cite{Yoshimura} in the context of the diatomic Fermi-Pasta-Ulam
lattice near the anti-continuum limit. In order to separate
variables $n$ and $t$ and to perform computations
using the discrete Sturm theorem (similar to the one used in the context
of NLS lattices in \cite{PKF1}), the interaction potential was assumed
to be nearly homogeneous of degree four and higher. Similar work was
performed for the Klein--Gordon lattices with a purely anharmonic interaction
potential \cite{Yoshimura-2}. Compared to this work, our treatment is valid
for a non-homogeneous on-site potential $V$ satisfying expansion (\ref{potential-expansion})
and for the quadratic interaction potential.

Within our work, we have discovered new important details on the spectral
stability of multi-site breathers, which were missed in the previous
works \cite{Archilla,KK09,MJKA02}. In the case of soft potentials,
breathers of the discrete Klein--Gordon equation (\ref{KGlattice})
can not be continued far away from the small-amplitude limit described
by the DNLS equation (\ref{NLSlattice}) because of the resonances
between the nonlinear oscillators at the excited sites and the linear
oscillators at the sites at rest. Branches of
breather solutions continued from the anti-continuum limit
above and below the resonance are disconnected. In addition,
these resonances change the stability
conclusion. In particular, the anti-phase oscillations may become
unstable in soft nonlinear potentials even if the coupling constant
is sufficiently small.

Another interesting feature of soft potentials is the symmetry-breaking
(pitchfork) bifurcation of one-site and multi-site breathers that
occur near the point of resonances. In symmetric potentials, the first
non-trivial resonance occurs near $\omega=\frac{1}{3}$, that is,
at 1:3 resonance. We analyze this bifurcation by using asymptotic
expansions and reduction of the discrete Klein--Gordon equation (\ref{KGlattice})
to a normal form, which coincides with the nonlinear
Duffing oscillator perturbed by a small harmonic forcing.
It is interesting that the normal form equation for 1:3 resonance which
we analyze here is different from
the normal form equations considered in the previous studies of 1:3
resonance \cite{Broer,Simo,Simo-2}. The difference is explained by the
fact that we are looking at bifurcations of periodic solutions far from the
equilibrium points, whereas the standard normal form equations for 1:3
resonance are derived in a neighborhood of equilibrium points. Note
that an analytical study of bifurcations of small breather solutions close
to a point of 1:3 resonance for a diatomic Fermi--Pasta--Ulam lattice
was performed by James \& Kastner \cite{James2}.

The paper is organized as follows. Existence of space-localized and
time-periodic breathers near the anti-continuum limit is reviewed in
Section 2. Besides the persistence results based on the implicit
function arguments as in \cite{MA94}, we also develop
a new version of the tail-to-tail interaction theory for multi-site
breathers in the discrete Klein--Gordon equation (\ref{KGlattice}).
The main result on spectral stability of multi-site breathers for
small coupling constants is formulated and proved in Section 3.
Section 4 illustrates the existence and spectral stability
of multi-site breathers in soft potentials numerically. Section
5 is devoted to studies of the symmetry-breaking (pitchfork) bifurcation
using asymptotic expansions and normal forms for the 1:3 resonance.
Section 6 summarizes our findings. Appendix A
compares Floquet theory with the spectral band theory and Hamiltonian averaging to
show equivalence of our conclusions with those reported earlier in
\cite{Archilla,KK09}.

\section{Existence of multi-site breathers near the anti-continuum limit}

In what follows, we will use bold-faced notations for vectors in discrete
space $l^{p}(\mathbb{Z})$ defined by their norms
\[
\|{\bf u}\|_{l^{p}}:=\left(\sum_{n\in\mathbb{Z}}|u_{n}|^{p}\right)^{1/p},\quad p\geq1.
\]
Components of ${\bf u}$ are denoted by $u_{n}$ for $n\in\mathbb{Z}$.
These components can be functions of $t$, in which case they are
considered in Hilbert--Sobolev spaces $H_{{\rm per}}^{s}(0,T)$ of
$T$-periodic functions equipped with the norm,
\[
\|f\|_{H_{{\rm per}}^{s}}:=\left(\sum_{m\in\mathbb{Z}}(1+m^{2})^{s}|c_{m}|^{2}\right)^{1/2},\quad s\geq0,
\]
where the set of coefficients $\{c_{m}\}_{m\in\Z}$ defines
the Fourier series of a $T$-periodic function $f$,
\[
f(t)=\sum_{m\in\mathbb{Z}}c_{m}\exp\left(\frac{2\pi imt}{T}\right),\quad t\in[0,T].
\]

We consider space-localized and time-periodic breathers
${\bf u}\in l^{2}(\mathbb{Z},H_{{\rm per}}^{2}(0,T))$
of the discrete Klein--Gordon equation (\ref{KGlattice}) with smooth
even $V$ and $\epsilon>0$. Parameter $T>0$ represents the fundamental
period of the time-periodic breathers. Accounting for symmetries,
we shall work in the restriction of $H_{{\rm per}}^{s}(0,T)$ to the
space of even $T$-periodic functions,
\[
H_{e}^{s}(0,T) = \left\{ f\in H_{{\rm per}}^{s}(0,T): \quad
f(-t)=f(t),\quad t\in\R\right\} ,\quad s\geq0.
\]

At $\epsilon=0$, we have an arbitrary family of multi-site breathers,
\begin{equation}
{\bf u}^{(0)}(t)=\sum_{k\in S}\sigma_{k}\varphi(t){\bf e}_{k},
\label{limiting-breather}
\end{equation}
where ${\bf e}_{k}$ is the unit vector in $l^{2}(\Z)$, $S\subset\Z$
is a finite set of excited sites of the lattice, $\sigma_{k}\in\{+1,-1\}$
encodes the phase factor of the $k$-th oscillator, and $\varphi\in H_{{\rm per}}^{2}(0,T)$
is an even solution of the nonlinear oscillator equation at the energy level
$E$,
\begin{equation}
\ddot{\varphi}+V'(\varphi)=0 \quad\Rightarrow\quad
E=\frac{1}{2}\dot{\varphi}^{2}+V(\varphi).
\label{nonlinear-oscillator}
\end{equation}
The unique even solution $\varphi(t)$ satisfies the initial condition,
\begin{equation}
\label{ic-oscillator}
\varphi(0)=a,\quad\dot{\varphi}(0)=0,
\end{equation}
where $a$ is the smallest positive root of $V(a)=E$ for a fixed
value of $E$. Period of oscillations $T$ is uniquely defined by the energy level
$E$,
\begin{equation}
T=\sqrt{2}\int_{-a}^{a}\frac{d\varphi}{\sqrt{E-V(\varphi)}}.
\label{period-energy}
\end{equation}
Because $\varphi(t)$ is $T$-periodic, we have
\begin{eqnarray}
\label{id-oscillator1}
\partial_{E}\varphi(T) & = & a'(E) = \frac{1}{V'(a)}, \\
\label{id-oscillator3}
\partial_{E}\dot{\varphi}(T) & = & -\ddot{\varphi}(T)T'(E) = V'(a)T'(E).
\end{eqnarray}

Our main example of the nonlinear potential $V$ is the truncation of the
expansion (\ref{potential-expansion}) at the first two terms:
\begin{equation}
V'(u)=u\pm u^{3}.\label{potential-numerics}
\end{equation}
The dependence $T$ versus $E$ is computed numerically from (\ref{period-energy})
and is shown on Figure \ref{fig1} together with the phase portraits
of the system (\ref{nonlinear-oscillator}). For the hard potential
with the plus sign, the period $T$ is a decreasing function of $E$
in $(0,2\pi)$, whereas for the soft potential with the minus sign,
the period $T$ is an increasing function of $E$ with $T>2\pi$.

\begin{figure}
\begin{tabular}{cc}
\includegraphics[width=0.47\textwidth]{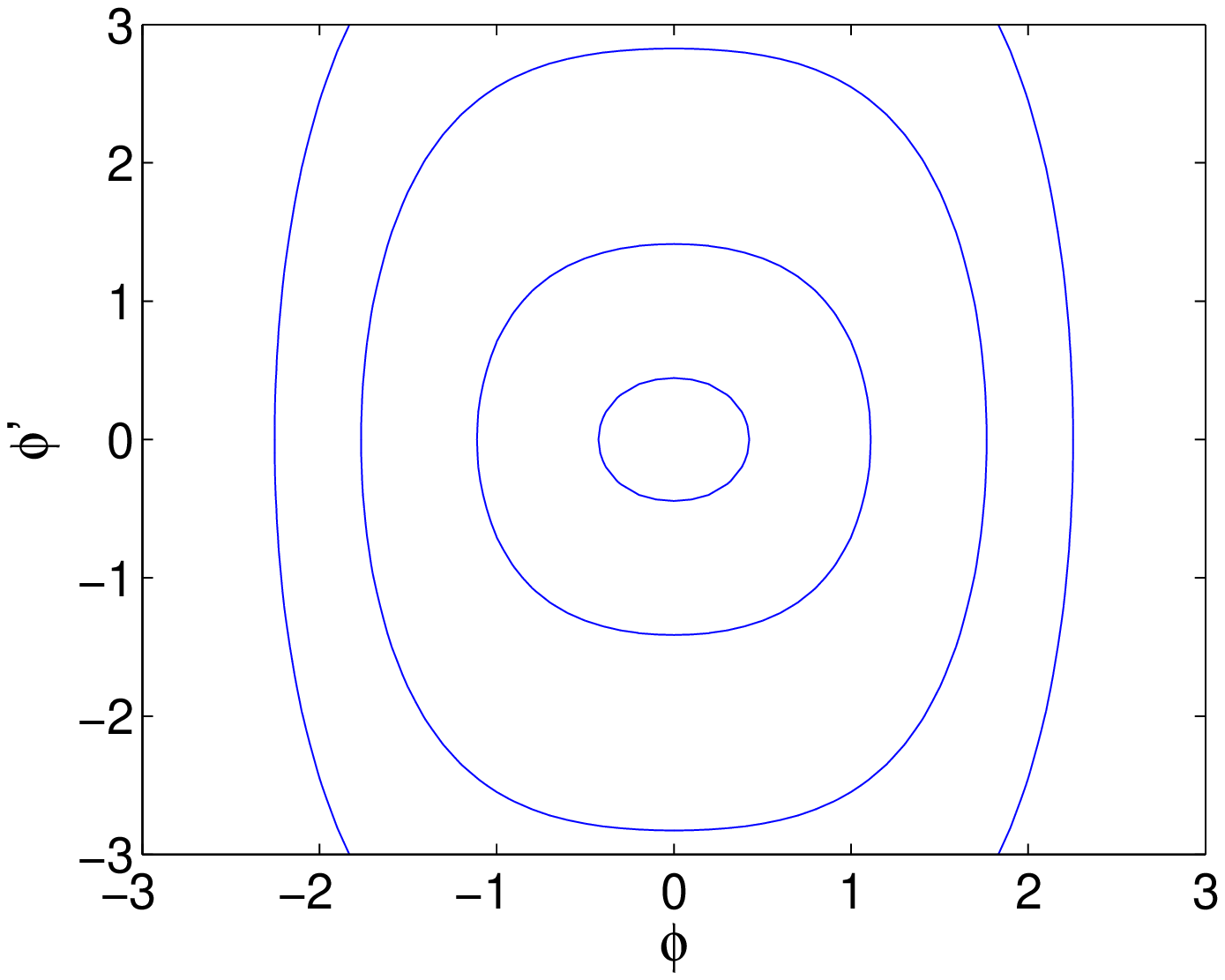}  & \includegraphics[width=0.47\textwidth]{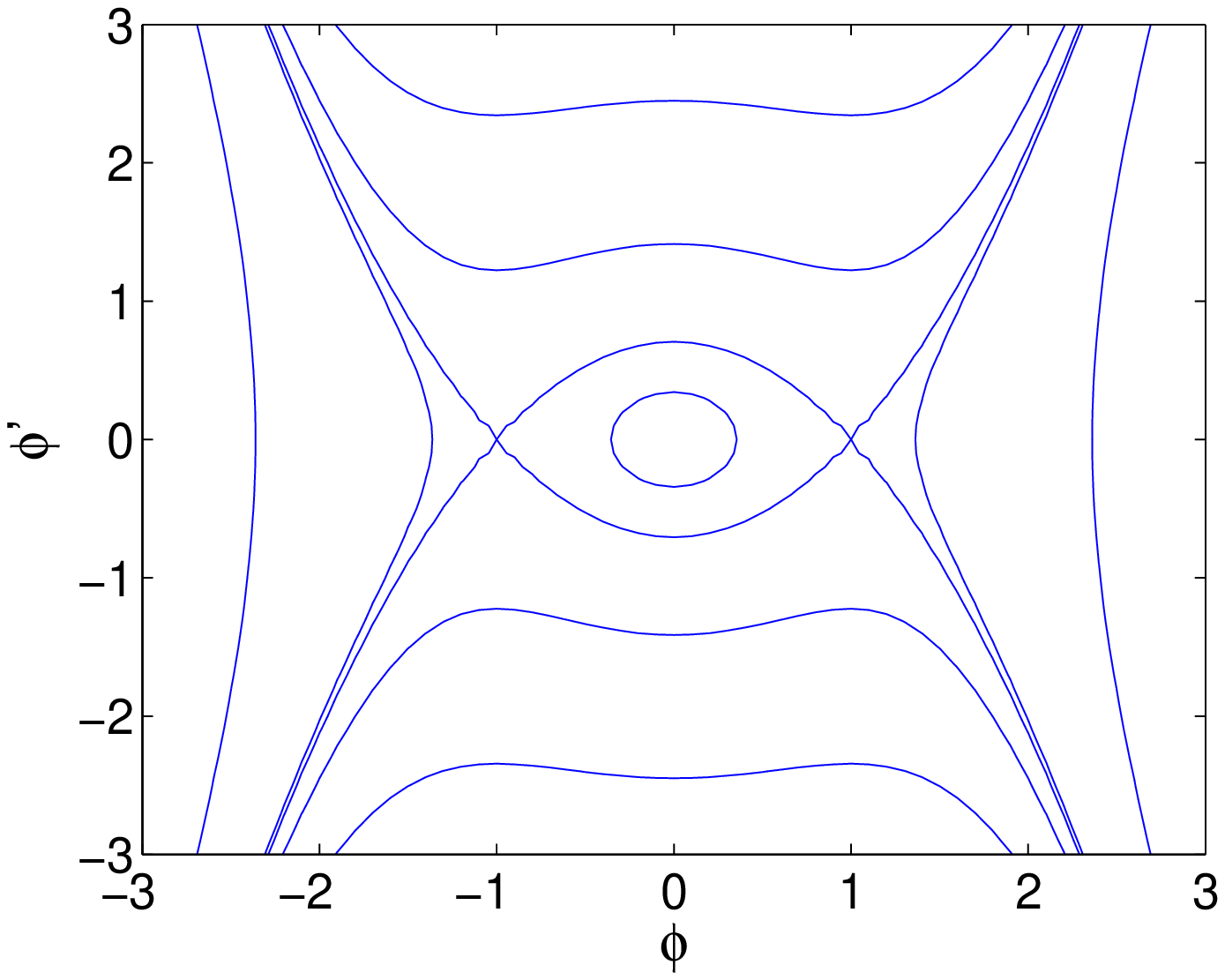}\tabularnewline
\multicolumn{2}{c}{\includegraphics[width=0.47\textwidth]{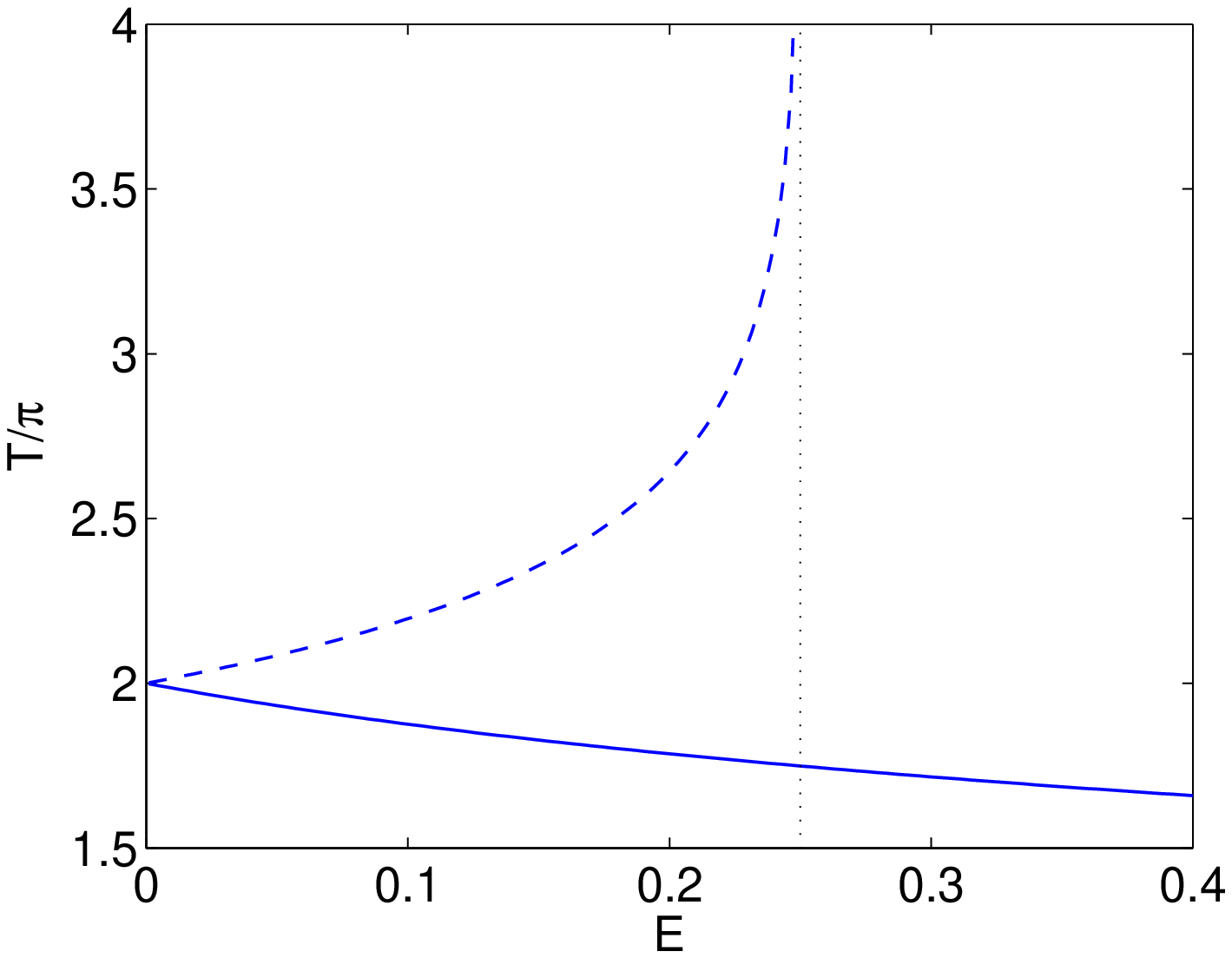}}\tabularnewline
\end{tabular} \caption{Top: the phase plane $(\varphi,\dot{\varphi})$ for the hard (left)
and soft (right) potentials (\ref{potential-numerics}). Bottom: the
period $T$ versus energy $E$ for the hard (solid) and soft (dashed)
potentials.}
\label{fig1}
\end{figure}

\begin{remark}
All nonlinear oscillators at the excited sites of
$S\subset\Z$ in the limiting configuration (\ref{limiting-breather})
have the same period $T$. Two oscillators at the $j$-th and $k$-th
sites are said to be in-phase if $\sigma_{j}\sigma_{k}=1$ and anti-phase
if $\sigma_{j}\sigma_{k}=-1$.
\end{remark}

Persistence of the limiting configuration (\ref{limiting-breather})
as a space-localized and time-periodic breather of the discrete Klein--Gordon
equation (\ref{KGlattice}) for small values of $\epsilon$ is established
by MacKay \& Aubry \cite{MA94}. The following theorem gives the relevant
details of the theory that are useful in our analysis.

\begin{theorem}
Fix the period $T$ and the solution $\varphi\in H_e^{2}(0,T)$
of the nonlinear oscillator equation (\ref{nonlinear-oscillator})
with an even $V\in C^{\infty}(\R)$ satisfying (\ref{potential-expansion})
and assume that $T\neq2\pi n$, $n\in\N$ and $T'(E)\neq0$. Define
${\bf u}^{(0)}$ by the representation (\ref{limiting-breather})
with fixed $S\subset\Z$ and $\{\sigma_{k}\}_{k\in S}$. There are
$\epsilon_{0}>0$ and $C>0$ such that for all $\epsilon\in(-\epsilon_{0},\epsilon_{0})$,
there exists a unique solution ${\bf u}^{(\epsilon)}\in l^{2}(\mathbb{Z},H_e^{2}(0,T))$
of the discrete Klein--Gordon equation (\ref{KGlattice}) satisfying
\begin{equation}
\|{\bf u}^{(\epsilon)}-{\bf u}^{(0)}\|_{l^{2}(\mathbb{Z},H_{{\rm per}}^{2}(0,T))}\leq C |\epsilon|.
\end{equation}
Moreover, the map $\R\ni\epsilon\mapsto{\bf u}^{(\epsilon)}\in l^{2}(\mathbb{Z},H_e^{2}(0,T))$
is $C^{\infty}$ for all $\epsilon\in(-\epsilon_{0},\epsilon_{0})$.
\label{theorem-continuum}
\end{theorem}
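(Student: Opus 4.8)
The plan is to apply the implicit function theorem in the Banach space $l^2(\mathbb{Z}, H^2_e(0,T))$ to the map

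\[
F(\mathbf{u},\epsilon)_n := \ddot u_n + V'(u_n) - \epsilon(u_{n+1} - 2u_n + u_{n-1}),
\]

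viewed as a smooth map from a neighbourhood of $(\mathbf{u}^{(0)},0)$ in $l^2(\mathbb{Z},H^2_e(0,T)) \times \mathbb{R}$ into $l^2(\mathbb{Z},L^2(0,T))$ (or better, into the even subspace $l^2(\mathbb{Z},L^2_e(0,T))$). First I would check that $F$ is well-defined and $C^\infty$: this uses smoothness of $V$ together with the fact that $H^2_{\rm per}(0,T)$ is a Banach algebra, so $\mathbf{u} \mapsto V'(\mathbf{u})$ acts componentwise and is smooth from $l^2(\mathbb{Z},H^2_e)$ to itself, while the discrete Laplacian $(\Delta \mathbf{u})_n = u_{n+1}-2u_n+u_{n-1}$ is a bounded linear operator on $l^2(\mathbb{Z})$. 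Since $\mathbf{u}^{(0)}$ is supported on the finite set $S$ and its components are either $0$ or $\pm\varphi$, we have $F(\mathbf{u}^{(0)},0) = 0$ because each $\sigma_k\varphi$ solves the nonlinear oscillator equation (\ref{nonlinear-oscillator}).

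The heart of the matter is invertibility of the Jacobian $D_{\mathbf{u}}F(\mathbf{u}^{(0)},0)$. At $\epsilon = 0$ the linearization decouples site-by-site, acting on the $n$-th component as $v_n \mapsto \ddot v_n + V''(u^{(0)}_n) v_n$. For a hole ($n \notin S$) this is $v_n \mapsto \ddot v_n + V''(0)v_n = \ddot v_n + v_n$ (using $V'(u) = u + O(u^3)$, so $V''(0)=1$), which on $H^2_{\rm per}(0,T)$ has kernel spanned by $e^{\pm 2\pi i t/T}$; this is trivial on the even subspace precisely when $T \ne 2\pi n$ for $n \in \mathbb{N}$, which is our hypothesis. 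For an excited site ($k \in S$) the operator is the Hill-type operator $L v := \ddot v + V''(\varphi) v$. Differentiating (\ref{nonlinear-oscillator}) in $t$ shows $L\dot\varphi = 0$, but $\dot\varphi$ is odd and hence not in $H^2_e$, so $L$ has trivial kernel on the even subspace. To get invertibility (not just injectivity) I would invoke the standard fact that $L$ is a self-adjoint Fredholm operator of index zero from $H^2_e$ to $L^2_e$ — it is a relatively compact perturbation of $\ddot v + v$ — so triviality of the kernel gives a bounded inverse. The condition $T'(E) \ne 0$ is what guarantees the kernel is genuinely trivial: the second solution of $Lv=0$, obtained by differentiating the periodic solution family with respect to $E$, fails to be $T$-periodic exactly when $T'(E)\ne0$ (this is the content of identities (\ref{id-oscillator1})--(\ref{id-oscillator3}), since $\partial_E\dot\varphi(T) = V'(a)T'(E) \ne 0$). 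Assembling the countably many block inverses, with uniform bounds away from $S$, gives a bounded inverse of $D_{\mathbf{u}}F(\mathbf{u}^{(0)},0)$ on all of $l^2(\mathbb{Z},L^2_e(0,T))$.

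With $F$ smooth and $D_{\mathbf{u}}F(\mathbf{u}^{(0)},0)$ boundedly invertible, the implicit function theorem yields $\epsilon_0 > 0$, a neighbourhood of $\mathbf{u}^{(0)}$, and a $C^\infty$ branch $\epsilon \mapsto \mathbf{u}^{(\epsilon)}$ solving $F(\mathbf{u}^{(\epsilon)},\epsilon)=0$, unique in that neighbourhood, with $\mathbf{u}^{(0)}$ the value at $\epsilon=0$. The Lipschitz bound $\|\mathbf{u}^{(\epsilon)} - \mathbf{u}^{(0)}\|_{l^2(\mathbb{Z},H^2_{\rm per})} \le C|\epsilon|$ follows from $\frac{d}{d\epsilon}\mathbf{u}^{(\epsilon)} = -[D_{\mathbf{u}}F]^{-1} \partial_\epsilon F$ together with $\partial_\epsilon F = -\Delta\mathbf{u}$ being uniformly bounded along the branch, integrated in $\epsilon$. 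The main obstacle, and the step deserving the most care, is the \emph{uniform} invertibility of $D_{\mathbf{u}}F(\mathbf{u}^{(0)},0)$ across all infinitely many lattice sites: one must verify that the block inverses at the holes have operator norms bounded independently of $n$ (they are all copies of the single operator $(\ddot v + v)^{-1}$ on $H^2_e$, so this is immediate once non-resonance $T\ne 2\pi n$ holds) and then check that the full block-diagonal operator on $l^2(\mathbb{Z},L^2_e)$ inherits a bounded inverse — which it does, being block-diagonal with a finite number of "exceptional" blocks (the excited sites) plus a uniform tail. A clean way to package all of this is to note that $D_{\mathbf{u}}F(\mathbf{u}^{(0)},0) = D_0 + K$ where $D_0 v_n = \ddot v_n + v_n$ is invertible on $l^2(\mathbb{Z},H^2_e)$ and $K$ is the finite-rank (in the $n$ variable) correction $(V''(u^{(0)}_n) - 1)v_n$ supported on $S$, so invertibility reduces to a finite-dimensional Fredholm alternative handled by the two scalar conditions $T \ne 2\pi n$ and $T'(E) \ne 0$.
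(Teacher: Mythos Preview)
Your proposal is correct and takes essentially the same route as the paper's proof: an implicit function theorem argument in $l^2(\mathbb{Z}, H^2_e(0,T))$, reducing invertibility of the block-diagonal Jacobian at $\epsilon=0$ to the two scalar operators $L_0 = \partial_t^2 + 1$ and $L_e = \partial_t^2 + V''(\varphi)$ on the even subspace, handled respectively by the hypotheses $T \neq 2\pi n$ and $T'(E) \neq 0$. One small slip to fix: the functions $e^{\pm 2\pi i t/T}$ are not in the kernel of $L_0$ on $H^2_{\rm per}(0,T)$ unless $T=2\pi$; the correct statement is that the Fourier multiplier $1-(2\pi m/T)^2$ is nonzero for all $m\in\mathbb{Z}$ precisely when $T\neq 2\pi n$, which is the non-resonance condition you want.
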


\begin{proof}
Thanks to the translational invariance of the nonlinear
oscillator equation (\ref{nonlinear-oscillator}), we fix an even
$\varphi$ according to the initial condition (\ref{ic-oscillator})
and consider $H_e^s(0,T)$, the restriction of $H_{{\rm per}}^{s}(0,T)$
to even functions. Under the condition
$T'(E)\neq0$, operator
\[
L_{e}=\partial_{t}^{2}+V''(\varphi(t)):H_{e}^{2}(0,T) \to L_{e}^{2}(0,T)
\]
is invertible, because the only eigenvector $\dot{\varphi}$ of
$L=\partial_{t}^{2}+V''(\varphi(t)):H_{{\rm per}}^{2}(0,T)\to L_{{\rm per}}^{2}(0,T)$
is odd in $t$. Similarly, operator
\[
L_{0}=\partial_{t}^{2}+1:H_{{\rm per}}^{2}(0,T)\to L_{{\rm per}}^{2}(0,T)
\]
is invertible if $T\neq2\pi n$, $n\in\N$.

Substituting ${\bf u}={\bf u}^{(0)}+{\bf w}$, where ${\bf u}^{(0)}$
is the limiting breather (\ref{limiting-breather}) to the discrete
Klein--Gordon equation (\ref{KGlattice}), we obtain the coupled system
of differential-difference equations
\begin{equation}
\label{system-ift1}
L_{e}w_{n}=\epsilon(u_{n+1}^{(0)}-2u_{n}^{(0)}+u_{n-1}^{(0)})+N_{n}({\bf w},\epsilon),\quad n\in S
\end{equation}
and
\begin{equation}
\label{system-ift2}
L_{0}w_{n}=\epsilon(u_{n+1}^{(0)}-2u_{n}^{(0)}+u_{n-1}^{(0)})+N_{n}({\bf w},\epsilon),\quad n\in\Z\backslash S,
\end{equation}
where
\[
N_{n}({\bf w},\epsilon) = \epsilon(w_{n+1}-2w_{n}+w_{n-1}) + Q_n(w_n)
\]
and
\[
Q_n(w_n) = V'(u_{n}^{(0)})+V''(u_{n}^{(0)})w_{n}-V'(u_{n}^{(0)}+w_{n}).
\]
Since $V \in C^{\infty}(\R)$, the nonlinear function $Q_n(w_n)$ is $C^{\infty}$ for all $w_n \in \R$.
Because the discrete Laplacian is a bounded operator from $l^2(\Z)$ to $l^2(\Z)$
and $H_{\rm per}^{2}(0,T)$ forms a Banach algebra with respect to pointwise multiplication, we
conclude that the vector field ${\bf N}({\bf w},\epsilon) : l^{2}(\mathbb{Z},H_{{\rm per}}^{2}(0,T))
\times \R \to l^{2}(\mathbb{Z},H_{{\rm per}}^{2}(0,T))$ is a $C^{\infty}$ map.
Moreover, for any ${\bf w}$ in a ball in $l^{2}(\mathbb{Z},H_{{\rm per}}^{2}(0,T))$
centered at $0$ with radius $\delta>0$, there are constants $C_{\delta},D_{\delta}>0$
such that
\[
\|{\bf N}({\bf w},\epsilon)\|_{l^{2}(\mathbb{Z},H_{{\rm per}}^{2}(0,T))}\leq C_{\delta}\left(\epsilon\|{\bf w}\|_{l^{2}(\mathbb{Z},H_{{\rm per}}^{2}(0,T))}+\|{\bf w}\|_{l^{2}(\mathbb{Z},H_{{\rm per}}^{2}(0,T))}^{2}\right)
\]
and
\begin{eqnarray*}
\|{\bf N}({\bf w}_{1},\epsilon)-{\bf N}({\bf w}_{2},\epsilon)\|_{l^{2}(\mathbb{Z},H_{{\rm per}}^{2}(0,T))}\leq D_{\delta}\left(\epsilon+\|{\bf w}_{1}\|_{l^{2}(\mathbb{Z},H_{{\rm per}}^{2}(0,T))}+\|{\bf w}_{2}\|_{l^{2}(\mathbb{Z},H_{{\rm per}}^{2}(0,T))}\right)\\
\times\|{\bf w}_{1}-{\bf w}_{2}\|_{l^{2}(\mathbb{Z},H_{{\rm per}}^{2}(0,T))}.
\end{eqnarray*}
Thanks to the invertibility of the linearized operators $L_{e}$
and $L_{0}$ on $L_{e}^{2}$, the result of the theorem follows from
the Implicit Function Theorem (Theorem 4.E in \cite{Zeidler})
and the map $\R\ni\epsilon\mapsto{\bf u}^{(\epsilon)}\in l^{2}(\mathbb{Z},H_e^{2}(0,T))$
is $C^{\infty}$ for all small $\epsilon$.
\end{proof}

\begin{remark}
Although persistence of other breather configurations,
where oscillators are neither in-phase nor anti-phase, can not be
apriori excluded, we restrict our studies to the most important and
physically relevant breather configurations covered by Theorem \ref{theorem-continuum}.
\end{remark}

We shall now introduce the concept of the fundamental breather for
the set $S=\{0\}$. Multi-site breathers for small $\epsilon>0$ can
be approximated by the superposition of fundamental breathers at a
generic set $S$ of excited sites up to and including the order, at which
the tail-to-tail interactions of these breathers occur.

\begin{definition}
Let ${\bf u}^{(\epsilon)}\in l^{2}(\mathbb{Z},H_e^{2}(0,T))$
be the solution of the discrete Klein--Gordon equation (\ref{KGlattice})
for small $\epsilon>0$ defined by Theorem \ref{theorem-continuum}
for a given ${\bf u}^{(0)}(t)=\varphi(t){\bf e}_{0}$. This solution
is called the fundamental breather and we denote it by $\mbox{\boldmath\ensuremath{\phi}}^{(\epsilon)}$.
\label{definition-breather}
\end{definition}

By Theorem \ref{theorem-continuum}, we can use the Taylor series
expansion,
\begin{equation}
\mbox{\boldmath\ensuremath{\phi}}^{(\epsilon)}=\mbox{\boldmath\ensuremath{\phi}}^{(\epsilon,N)}+{\cal O}_{l^{2}(\mathbb{Z},H_{{\rm per}}^{2}(0,T))}(\epsilon^{N+1}), \quad
\mbox{\boldmath\ensuremath{\phi}}^{(\epsilon,N)} =
\sum_{k=0}^{N}\frac{\epsilon^{k}}{k!}\frac{d^{k}}{d\epsilon^{k}} \mbox{\boldmath\ensuremath{\phi}}^{(\epsilon)}\biggr|_{\epsilon=0},
\label{decomposition-phi}
\end{equation}
up to any integer $N\geq0$. Thanks to the discrete translational
invariance of the lattice, the fundamental breather can be centered
at any site $j\in\Z$. Let $\tau_{j}:l^{2}(\Z)\to l^{2}(\Z)$ be the
shift operator defined by
\[
(\tau_{j}{\bf u})_{n}=u_{n-j},\quad n\in\Z.
\]
If $\mbox{\boldmath\ensuremath{\phi}}^{(\epsilon)}$ is centered
at site $0$, then $\tau_{j}\mbox{\boldmath\ensuremath{\phi}}^{(\epsilon)}$
is centered at site $j\in\Z$. The simplest multi-site breather is
given by the two excited nodes at $j\in\Z$ and $k\in\Z$ with $j\neq k$.

\begin{lemma}
Let ${\bf u}^{(0)}(t)=\sigma_{j}\varphi(t){\bf e}_{j}+\sigma_{k}\varphi(t){\bf e}_{k}$
with $j\neq k$ and $N=|j-k|\geq1$. Let ${\bf u}^{(\epsilon)}\in l^{2}(\mathbb{Z},H_e^{2}(0,T))$
be the corresponding solution of the discrete Klein--Gordon equation
(\ref{KGlattice}) for small $\epsilon>0$ defined by Theorem \ref{theorem-continuum}.
Let $\{\varphi_{m}\}_{m=1}^{N}\in H_{e}^{2}(0,T)$ be defined recursively
by
\begin{equation}
L_{0}\varphi_{m}:=(\partial_{t}^{2}+1)\varphi_{m}=\varphi_{m-1},\quad m=1,2,...,N,
\label{inhomogen-1}
\end{equation}
starting with $\varphi_{0}=\varphi$, and let $\psi_{N} \in H_e^2(0,T)$ be defined
by
\begin{equation}
L_{e}\psi_{N}:=(\partial_{t}^{2}+V''(\varphi(t)))\psi_{N}=\varphi_{N-1}.
\label{inhomogen-2}
\end{equation}
Then, we have
\begin{equation}
{\bf u}^{(\epsilon)}=\sigma_{j}\tau_{j}\mbox{\boldmath\ensuremath{\phi}}^{(\epsilon,N)}+
\sigma_{k}\tau_{k}\mbox{\boldmath\ensuremath{\phi}}^{(\epsilon,N)}+
\epsilon^{N}\left(\sigma_{j}{\bf e}_{k}+\sigma_{k}{\bf e}_{j}\right)(\psi_{N}-\varphi_{N})+
{\cal O}_{l^{2}(\mathbb{Z},H_{{\rm per}}^{2}(0,T))}(\epsilon^{N+1}).\label{expansion-u}
\end{equation}
\label{lemma-tail-interaction}
\end{lemma}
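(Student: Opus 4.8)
The plan is to compute the Taylor coefficients of $\mathbf{u}^{(\epsilon)}$ in $\epsilon$ around $\epsilon=0$ directly from the coupled system (\ref{system-ift1})--(\ref{system-ift2}), and to match them term by term against the proposed expansion (\ref{expansion-u}). Writing $\mathbf{u}^{(\epsilon)} = \sum_{l\ge 0}\epsilon^l \mathbf{u}^{[l]}$ with $\mathbf{u}^{[0]} = \mathbf{u}^{(0)}$, the equations obtained from (\ref{KGlattice}) at successive orders of $\epsilon$ are, schematically, $L_e u_n^{[l]} = (\text{discrete Laplacian of }\mathbf{u}^{[l-1]})_n + (\text{lower-order nonlinear terms})$ for $n\in S$, and the same with $L_e$ replaced by $L_0$ for $n\notin S$. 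Since the initial configuration $\mathbf{u}^{(0)}$ is supported only on $\{j,k\}$ and the lattice coupling spreads support by one site per order of $\epsilon$, the component $u_n^{[l]}$ vanishes identically for $l < \mathrm{dist}(n,\{j,k\})$; in particular, the two breathers $\sigma_j\tau_j\boldsymbol{\phi}^{(\epsilon)}$ and $\sigma_k\tau_k\boldsymbol{\phi}^{(\epsilon)}$ do not "see" each other through orders $0,1,\dots,N-1$, so to that order $\mathbf{u}^{(\epsilon)}$ is exactly the superposition of the two shifted fundamental breathers.

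First I would establish the structure of the fundamental breather $\boldsymbol{\phi}^{(\epsilon)}$ itself: from $\mathbf{u}^{(0)} = \varphi\,\mathbf{e}_0$, the recursion at order $\epsilon^m$ forces the $m$-th neighbour component to satisfy $L_0 \phi_m = \phi_{m-1}$ with $\phi_0 = \varphi$ (for $1\le m$), which is precisely the definition (\ref{inhomogen-1}) of $\varphi_m$; and the central ($n=0$) component receives, at order $\epsilon^N$, an inhomogeneity whose leading new piece is $\varphi_{N-1}$ via $L_e$, i.e. involves $\psi_N$ from (\ref{inhomogen-2}). The key bookkeeping point is that in the two-site problem the site $n=k$ (at distance $N$ from $j$) is reached at order $\epsilon^N$ both by the tail of the breather centered at $j$ — contributing $L_e$-dynamics because $k\in S$, hence a term governed by $L_e\psi_N = \varphi_{N-1}$, i.e. $\psi_N$ — and is already populated at order $\epsilon^0,\dots,\epsilon^{N-1}$ only through the breather centered at $k$ itself, whose corresponding far-tail pieces would have been $\varphi_N$ had site $k$ been a "hole". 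The discrepancy between what actually happens at site $k$ (an excited site, ruled by $L_e$) and what the naive superposition $\sigma_j\tau_j\boldsymbol{\phi}^{(\epsilon,N)} + \sigma_k\tau_k\boldsymbol{\phi}^{(\epsilon,N)}$ predicts there is exactly $\epsilon^N(\psi_N - \varphi_N)$, weighted by $\sigma_j$; by the $j\leftrightarrow k$ symmetry the same correction with weight $\sigma_k$ appears at site $j$. This accounts for the extra term $\epsilon^N(\sigma_j\mathbf{e}_k + \sigma_k\mathbf{e}_j)(\psi_N - \varphi_N)$ in (\ref{expansion-u}).

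Concretely, I would define $\mathbf{r}^{(\epsilon)} := \mathbf{u}^{(\epsilon)} - \sigma_j\tau_j\boldsymbol{\phi}^{(\epsilon,N)} - \sigma_k\tau_k\boldsymbol{\phi}^{(\epsilon,N)} - \epsilon^N(\sigma_j\mathbf{e}_k + \sigma_k\mathbf{e}_j)(\psi_N-\varphi_N)$ and show $\mathbf{r}^{(\epsilon)} = \mathcal{O}_{l^2(\mathbb{Z},H^2_{\rm per})}(\epsilon^{N+1})$. Because $\mathbf{u}^{(\epsilon)}$ is $C^\infty$ in $\epsilon$ by Theorem \ref{theorem-continuum}, it suffices to verify that $\mathbf{r}^{(\epsilon)}$ has vanishing Taylor coefficients up to and including order $N$, i.e. $\left.\tfrac{d^l}{d\epsilon^l}\mathbf{r}^{(\epsilon)}\right|_{\epsilon=0} = 0$ for $l = 0,1,\dots,N$. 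Orders $0$ through $N-1$ are immediate from the support/locality argument above (the added correction term starts at order $\epsilon^N$, and through order $N-1$ the three terms being subtracted exactly reproduce the superposed-breather Taylor expansion of $\mathbf{u}^{(\epsilon)}$); the order-$N$ identity is the substantive computation, and it reduces to checking, componentwise at the finitely many sites $n$ with $\mathrm{dist}(n,\{j,k\})\le N$, that the inhomogeneous linear equations (either $L_0 v = \cdots$ or $L_e v = \cdots$) produced by collecting the $\epsilon^N$ terms of (\ref{KGlattice}) are solved by the claimed expression. At the "interior" sites strictly between $j$ and $k$ (if $N\ge 2$) the two tails simply add and the nonlinear terms $Q_n$ contribute nothing new at this order since those components start at order $\epsilon^{\ge 1}$; at the excited sites $j,k$ the correction $\psi_N - \varphi_N$ supplies exactly the mismatch between the $L_e$-equation these sites obey and the $L_0$-equation that the far tail of the opposite breather would have solved.

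The main obstacle is the careful order-$N$ matching at the excited sites $j$ and $k$: one must keep precise track of which inhomogeneous terms at order $\epsilon^N$ come from the discrete Laplacian acting on lower-order components of $\tau_j\boldsymbol{\phi}^{(\epsilon)}$ versus $\tau_k\boldsymbol{\phi}^{(\epsilon)}$, confirm that the nonlinear remainder $Q_n$ does not contribute at order $\epsilon^N$ at these sites (it does not, because $V$ is odd, $V''(\varphi)$ multiplies the already-present $\varphi_{N-1}$-type term, and the genuinely nonlinear pieces are of order $\epsilon^{2N}$ or higher there), and then recognize the resulting equation as $L_e \psi_N = \varphi_{N-1}$. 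A secondary technical point is that (\ref{inhomogen-2}) has a solution in $H^2_e$ precisely because $L_e$ is invertible on the even subspace under $T'(E)\neq 0$ (as established in the proof of Theorem \ref{theorem-continuum}), so $\psi_N$ is well-defined; and that all the $\varphi_m$ are well-defined in $H^2_e$ since $L_0$ is invertible when $T\neq 2\pi n$. Once the Taylor coefficients of $\mathbf{r}^{(\epsilon)}$ vanish through order $N$, Taylor's theorem with the $C^\infty$ dependence from Theorem \ref{theorem-continuum} gives the stated $\mathcal{O}(\epsilon^{N+1})$ bound, completing the proof.
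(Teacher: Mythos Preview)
Your proposal is correct and follows essentially the same approach as the paper's own proof: both expand $\mathbf{u}^{(\epsilon)}$ as a Taylor series in $\epsilon$, use the one-site-per-order propagation of support to show the two-breather superposition is exact through order $N-1$, argue that nonlinear cross terms first appear beyond order $N$, and identify the order-$N$ correction at the excited sites $j,k$ as $\psi_N-\varphi_N$ via the mismatch between the $L_e$-equation those sites actually obey and the $L_0$-equation the far tail of the opposite fundamental breather would have solved. One small wording slip: at the excited sites $Q_n$ \emph{does} contribute at order $\epsilon^N$ (indeed at every order $\geq 2$), but only through single-breather terms already present in $\tau_k\boldsymbol{\phi}^{(\epsilon,N)}$; the point you need (and which the paper also uses) is that no \emph{cross} contribution from $Q_n$ appears until order $\epsilon^{N+1}$, and likewise $V$ is even (hence $V'$ odd), not ``$V$ is odd''.
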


\begin{proof}
By Theorem \ref{theorem-continuum}, the limiting configuration
${\bf u}^{(0)}(t)=\sigma_{j}\varphi(t){\bf e}_{j}+\sigma_{k}\varphi(t){\bf e}_{k}$
with two excited sites generates a $C^{\infty}$ map, which can be
expanded up to the $N+1$-order,
\begin{equation}
{\bf u}^{(\epsilon)}=\sum_{k=0}^{N}\frac{\epsilon^{k}}{k!}\frac{d^{k}}{d\epsilon^{k}}{\bf u}^{(\epsilon)}\biggr|_{\epsilon=0}+{\cal O}_{l^{2}(\mathbb{Z},H_{{\rm per}}^{2}(0,T))}(\epsilon^{N+1}).
\label{decomposition-u}
\end{equation}
Substituting (\ref{decomposition-u}) into (\ref{KGlattice}) generates
a sequence of equations at each order of $\epsilon$, which we consider
up to and including the terms of order $N$.

The central excited site at $n=0$ in the fundamental
breather $\mbox{\boldmath\ensuremath{\phi}}^{(\epsilon)}$
generates fluxes, which reach sites $n=\pm m$ at the $m$-th order.
Because $\mbox{\boldmath\ensuremath{\phi}}^{(\epsilon,m)}$
is compactly supported on $\{-m,-m+1,...,m\}$ and all sites with
$n\neq0$ contain no oscillations at the $0$-th order, we have
\begin{equation}
\mbox{\boldmath\ensuremath{\phi}}_{\pm m}^{(\epsilon,m)}=\epsilon^{m}\varphi_{m},
\label{fluxes}
\end{equation}
where $\{\varphi_{m}\}_{m=1}^{N}\in H_{e}^{2}(0,T)$ are computed from
the linear inhomogeneous equations (\ref{inhomogen-1}) starting with
$\varphi_{0}=\varphi$. Note that equations (\ref{inhomogen-1}) are
uniquely solvable because $T\neq2\pi n$, $n\in\N$.

For definiteness, let us assume that $j=0$ and $k=N\geq1$. The fluxes
from the excited sites $n=0$ and $n=N$ meet at the $N/2$-th order
at the middle site $n=N/2$ if $N$ is even or they overlap at the
$(N+1)/2$-th order at the two sites $n=(N-1)/2$ and $n=(N+1)/2$
if $N$ is odd. In either case, because of the expansion (\ref{potential-expansion}),
the nonlinear superposition of these fluxes affects terms at the order $3N/2$-th
or $3(N+1)/2$-th orders, that is, beyond the $N$-th order of the
expansion (\ref{expansion-u}). Therefore, the nonlinear superposition
of fluxes in higher orders of $\epsilon$ will definitely be beyond
the $N$-th order of the expansion (\ref{expansion-u}).

Up to the $N$-th order, all correction terms are combined together
as a sum of correction terms from the decomposition (\ref{decomposition-phi})
centered at the $j$-th and $k$-th sites, that is, we have
\begin{equation}
\label{superposition-u}
{\bf u}(\epsilon)=\sigma_{j}\tau_{j} \mbox{\boldmath\ensuremath{\phi}}^{(\epsilon,N-1)}(\epsilon) + \sigma_{k}\tau_{k}\mbox{\boldmath\ensuremath{\phi}}^{(\epsilon,N-1)}(\epsilon)+
{\cal O}_{l^{2}(\mathbb{Z},H_{{\rm per}}^{2}(0,T))}(\epsilon^{N}).
\end{equation}
At the $N$-th order, the flux from $j$-th site arrives to the $k$-th
site and vice versa. Therefore, besides the $N$-th order correction
terms from the decomposition (\ref{decomposition-phi}), we have additional
terms $\epsilon^{N}\left(\sigma_{j}{\bf e}_{k}+\sigma_{k}{\bf e}_{j}\right) \psi_N$
at the sites $n=j$ and $n=k$. Thanks to the linear superposition
principle, these additional terms are given by solutions of the inhomogeneous
equations (\ref{inhomogen-2}), which are uniquely solvable in $H_{e}^{2}(0,T)$
because $T'(E)\neq 0$. We also have to subtract
$\epsilon^{N}\left(\sigma_{j}{\bf e}_{k}+\sigma_{k}{\bf e}_{j}\right) \varphi_N$
from the $N$-th order of $\sigma_{j}\tau_{j}\mbox{\boldmath\ensuremath{\phi}}^{(\epsilon,N)}+
\sigma_{k}\tau_{k}\mbox{\boldmath\ensuremath{\phi}}^{(\epsilon,N)}$,
because these terms were computed under the assumption that the $k$-th site
contained no oscillations at the order $0$ for $\sigma_j \tau_j \mbox{\boldmath $\phi$}^{(\epsilon,N)}$
and vice versa. Combined all together, the expansion (\ref{expansion-u})
is justified up to terms of the $N$-th order.
\end{proof}

\section{Stability of multi-site breathers near the anti-continuum limit}
\label{sec:stability}

Let ${\bf u}\in l^{2}(\mathbb{Z},H_e^{2}(0,T))$ be a multi-site
breather in Theorem \ref{theorem-continuum} and $\epsilon>0$ be
a small parameter of the discrete Klein--Gordon equation (\ref{KGlattice}).
When we study stability of breathers, we understand the spectral stability,
which is associated with the linearization of the discrete Klein--Gordon
equation (\ref{KGlattice}) by using a perturbation ${\bf w}(t)$
in the decomposition ${\bf u}(t)+{\bf w}(t)$. Neglecting quadratic
and higher-order terms in ${\bf w}$, we obtain the linearized discrete
Klein--Gordon equation,
\begin{equation}
\ddot{w}_{n}+V''(u_{n})w_{n}=\epsilon(w_{n+1}-2w_{n}+w_{n-1}),\quad n\in\mathbb{Z}.
\label{KGlattice-lin}
\end{equation}
Because ${\bf u}(t+T)={\bf u}(t)$, an infinite-dimensional analogue
of the Floquet theorem applies and the Floquet monodromy
matrix ${\cal M}$ is defined by ${\bf w}(T)={\cal M}{\bf w}(0)$. We say that
the breather is stable if all eigenvalues of ${\cal M}$, called Floquet
multipliers, are located on the unit circle and it is unstable if
there is at least one Floquet multiplier outside the unit disk. Because
the linearized system (\ref{KGlattice-lin}) is Hamiltonian,
Floquet multipliers come in pairs $\mu_{1}$ and $\mu_{2}$ with $\mu_{1}\mu_{2}=1$.

For $\epsilon=0$, Floquet multipliers can be computed explicitly
because ${\cal M}$ is decoupled into a diagonal combination of $2$-by-$2$
matrices $\{M_{n}\}_{n\in\Z}$, which are computed from solutions
of the linearized equations
\begin{equation}
\ddot{w}_{n}+w_{n}=0,\quad n\in\mathbb{Z}\backslash S
\label{KGlattice-lin-1}
\end{equation}
and
\begin{equation}
\ddot{w}_{n}+V''(\varphi)w_{n}=0,\quad n\in S.
\label{KGlattice-lin-2}
\end{equation}

The first problem (\ref{KGlattice-lin-1}) admits the exact solution,
\begin{equation}
w_{n}(t)=w_{n}(0)\cos(t)+\dot{w}_{n}(0)\sin(t)\quad\Rightarrow\quad M_{n}=\left[\begin{array}{cc}
\cos(T) & \sin(T)\\
-\sin(T) & \cos(T)\end{array}\right],\quad n\in\mathbb{Z}\backslash S.
\label{exact-solution-1}
\end{equation}
Each $M_{n}$ for $n\in\mathbb{Z}\backslash S$ has two Floquet multipliers
at $\mu_{1}=e^{iT}$ and $\mu_{2}=e^{-iT}$. If $T\neq2\pi n$, $n \in \N$,
the Floquet multipliers $\mu_1$ and $\mu_2$ are located on the unit circle bounded away
from the point $\mu = 1$.

The second problem (\ref{KGlattice-lin-2}) also admits the exact
solution,
\begin{equation}
w_{n}(t)=\frac{\dot{w}_{n}(0)}{\ddot{\varphi}(0)}\dot{\varphi}(t) + \frac{w_{n}(0)}{\partial_{E}\varphi(0)}\partial_{E}\varphi(t),\quad n\in S,
\label{exact-solution-2}
\end{equation}
where $\varphi(t)$ is a solution of the nonlinear oscillator equation
(\ref{nonlinear-oscillator}) with the initial condition (\ref{ic-oscillator}).
Using identities (\ref{id-oscillator1})--(\ref{id-oscillator3}), we obtain,
\[
M_{n}=\left[\begin{array}{cc} 1 & 0\\
T'(E) [V'(a)]^{2} & 1\end{array}\right],\quad n\in S.
\]
Note that $V'(a)\neq0$ (or $T$ is infinite). If $T'(E)\neq0$,
each $M_{n}$ for $n\in S$ has the Floquet multiplier $\mu = 1$
of geometric multiplicity one and algebraic multiplicity two.

We conclude that if $T\neq2\pi n$, $n\in\N$ and $T'(E)\neq0$, the
limiting multi-site breather (\ref{limiting-breather}) at the anti-continuum
limit $\epsilon=0$ has an infinite number of semi-simple Floquet
multipliers at $\mu_{1}=e^{iT}$ and $\mu_{2}=e^{-iT}$ bounded away
from the Floquet multiplier $\mu = 1$ of algebraic multiplicity $2|S|$
and geometric multiplicity $|S|$.

Semi-simple multipliers on the unit circle are structurally stable
in Hamiltonian dynamical systems (Chapter III in \cite{Yakub}).
Under perturbations in the Hamiltonian, Floquet multipliers of the
same Krein signature do not move off the unit circle unless they coalesce with
Floquet multipliers of the opposite Krein signature \cite{Bridges}.
Therefore, the instability of the multi-site breather may only arise
from the splitting of the Floquet multiplier $\mu = 1$ of algebraic multiplicity $2|S|$
for $\epsilon \neq 0$.

To consider Floquet multipliers, we can introduce the characteristic
exponent $\lambda$ in the decomposition ${\bf w}(t)={\bf W}(t)e^{\lambda t}$.
If $\mu=e^{\lambda T}$ is the Floquet multiplier of the monodromy
operator ${\cal M}$, then ${\bf W}\in l^{2}(\Z,H_{{\rm per}}^{2}(0,T))$
is a solution of the eigenvalue problem,
\begin{equation}
\ddot{W}_{n}+V''(u_{n})W_{n}+2\lambda\dot{W}_{n}+\lambda^{2}W_{n}=\epsilon(W_{n+1}-2W_{n}+W_{n-1}),\quad n\in\mathbb{Z}.
\label{KGlattice-BVP}
\end{equation}
In particular, Floquet multiplier $\mu=1$ corresponds to the characteristic
exponent $\lambda=0$. The generalized eigenvector ${\bf Z} \in l^{2}(\Z,H_{{\rm per}}^{2}(0,T))$
of the eigenvalue problem (\ref{KGlattice-BVP}) for $\lambda=0$ solves the inhomogeneous problem,
\begin{equation}
\ddot{Z}_{n}+V''(u_{n})Z_{n}=\epsilon(Z_{n+1}-2Z_{n}+Z_{n-1})-2\dot{W}_{n},\quad n\in\mathbb{Z},
\label{KGlattice-BVP-generalized}
\end{equation}
where ${\bf W}$ is the eigenvector of (\ref{KGlattice-BVP}) for
$\lambda=0$. To normalize ${\bf Z}$ uniquely, we add a constraint that ${\bf Z}$
is orthogonal to ${\bf W}$ with respect to the inner product
$$
\langle {\bf W}, {\bf Z} \rangle_{l^2(\mathbb{Z},L^2_{\rm per}(0,T))} :=
\sum_{n \in \Z} \int_0^T \bar{W}_n(t) Z_n(t) dt.
$$

At $\epsilon=0$, the eigenvector ${\bf W}$ of the eigenvalue problem
(\ref{KGlattice-BVP}) for $\lambda=0$ is spanned by the linear combination
of $|S|$ fundamental solutions,
\begin{equation}
{\bf W}^{(0)}(t)=\sum_{k\in S}c_{k}\dot{\varphi}(t){\bf e}_{k}.
\label{limiting-eigenvector}
\end{equation}
The generalized eigenvector ${\bf Z}$ is spanned
by the linear combination of $|S|$ fundamental solutions,
\begin{equation}
{\bf Z}^{(0)}(t)=-\sum_{k\in S}c_{k} v(t) {\bf e}_{k}, \quad
v := 2 L_{e}^{-1}\ddot{\varphi},
\label{limiting-generalized-eigenvector}
\end{equation}
where $L_{e}=\partial_{t}^{2}+V''(\varphi(t)):H_{e}^{2}(0,T)\to L_{e}^{2}(0,T)$
is invertible and $\ddot{\varphi}\in L_{e}^{2}(0,T)$ (see the proof of
Theorem \ref{theorem-continuum}). Note that $\langle \dot{\varphi}, v \rangle_{L^2_{\rm per}(0,T)} = 0$
because $\dot{\varphi}$ is odd and $v$ is even in $t$.

Because of the translational invariance in $t$, we note that if ${\bf u}=\mbox{\boldmath\ensuremath{\phi}}^{(\epsilon)}$
is the fundamental breather in Definition \ref{definition-breather},
then ${\bf W} = \partial_{t} \mbox{\boldmath\ensuremath{\phi}}^{(\epsilon)} \equiv \mbox{\boldmath\ensuremath{\theta}}^{(\epsilon)} \in l^{2}(\Z,H_{{\rm per}}^{2}(0,T))$ is the eigenvector
of the eigenvalue problem (\ref{KGlattice-BVP})
for $\lambda=0$ and small $\epsilon>0$ and there exists a generalized
eigenvector ${\bf Z}\equiv\mbox{\boldmath\ensuremath{\mu}}^{(\epsilon)} \in l^{2}(\Z,H_{{\rm per}}^{2}(0,T))$
of the inhomogeneous problem (\ref{KGlattice-BVP-generalized}), which
exists because $\partial_{t}\mbox{\boldmath\ensuremath{\theta}}^{(\epsilon)}$
has the opposite parity in $t$ compared to $\mbox{\boldmath\ensuremath{\theta}}^{(\epsilon)}$.

By Taylor series expansions (\ref{decomposition-phi}), for any integer $N \geq 0$,
we have
\begin{equation}
\mbox{\boldmath\ensuremath{\theta}}^{(\epsilon)} =
\mbox{\boldmath\ensuremath{\theta}}^{(\epsilon,N)} +
{\cal O}_{l^{2}(\mathbb{Z},H_{{\rm per}}^{2}(0,T))}(\epsilon^{N+1}), \quad
\mbox{\boldmath\ensuremath{\mu}}^{(\epsilon)} =
\mbox{\boldmath\ensuremath{\mu}}^{(\epsilon,N)} +
{\cal O}_{l^{2}(\mathbb{Z},H_{{\rm per}}^{2}(0,T))}(\epsilon^{N+1}),
\end{equation}
where $\mbox{\boldmath\ensuremath{\theta}}^{(\epsilon,N)}$ and
$\mbox{\boldmath\ensuremath{\mu}}^{(\epsilon,N)}$ are polynomials in $\epsilon$ of
degree $N$. It follows from equations (\ref{limiting-eigenvector}) and (\ref{limiting-generalized-eigenvector})
that
\begin{equation}
\mbox{\boldmath\ensuremath{\theta}}^{(0)}=\dot{\varphi}(t){\bf e}_{0},\quad\mbox{\boldmath\ensuremath{\mu}}^{(0)}=-v(t){\bf e}_{0}.
\end{equation}

This formalism sets up the scene for the perturbation theory, which
is used to prove the main result on spectral stability of multi-site
breathers. We start with a simple multi-site breather configuration
with equal distances between excited sites and then upgrade this result
to multi-site breathers with non-equal distances between excited sites.

\begin{lemma}
Under assumptions of Theorem \ref{theorem-continuum},
let ${\bf u}^{(0)}(t)=\sum_{j=1}^{M}\sigma_{j}\varphi(t){\bf e}_{jN}$
with fixed $M,N\in\N$ and ${\bf u}^{(\epsilon)}\in l^{2}(\mathbb{Z},H_e^{2}(0,T))$
be the corresponding solution of the discrete Klein--Gordon equation
(\ref{KGlattice}) for small $\epsilon>0$ defined by Theorem \ref{theorem-continuum}.
Let $\{\varphi_{m}\}_{m=0}^{N}$ be defined by Lemma \ref{lemma-tail-interaction}
starting with $\varphi_{0}=\varphi$. Then the eigenvalue problem (\ref{KGlattice-BVP})
for small $\epsilon>0$ has $2M$ small eigenvalues,
\[
\lambda=\epsilon^{N/2}\Lambda+{\cal O}(\epsilon^{(N+1)/2}),
\]
where $\Lambda$ is an eigenvalue of the matrix eigenvalue problem
\begin{equation}
-\frac{T^{2}(E)}{T'(E)}\Lambda^{2}{\bf c}=K_{N}\mathcal{S}{\bf c},\quad{\bf c}\in\C^{M}.
\label{reduced-eigenvalue}
\end{equation}
Here the numerical coefficient $K_N$ is given by
\[
K_{N}=\int_{0}^{T}\dot{\varphi}\dot{\varphi}_{N-1}dt
\]
and the matrix $\mathcal{S}\in\mathbb{M}^{M\times M}$ is given by
\[
\mathcal{S} = \left[ \begin{array}{ccccccc} -\sigma_1 \sigma_2 & 1 & 0 & \ldots & 0 & 0 \\
1 & -\sigma_2(\sigma_1+\sigma_3) & 1 & \ldots & 0 & 0 \\
0 & 1 & -\sigma_3(\sigma_2+\sigma_4) & \ldots & 0 & 0 \\
\vdots & \vdots & \vdots & \ddots & \vdots & \vdots \\
0 & 0 & 0 & \ldots & -\sigma_{M-1} (\sigma_{M-2}+\sigma_M) & 1 \\
0 & 0 & 0 & \ldots & 0 & -\sigma_{M} \sigma_{M-1} \end{array}
\right].
\]
\label{lemma-eigenvalue}
\end{lemma}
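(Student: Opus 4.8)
The plan is to perform a Lyapunov--Schmidt reduction of the quadratic (in $\lambda$) eigenvalue problem (\ref{KGlattice-BVP}) onto the $2M$-dimensional ``slow'' subspace that, at $\epsilon=0$, is generated by the eigenvectors $\dot{\varphi}{\bf e}_{jN}$ and the generalized eigenvectors $v{\bf e}_{jN}$, $j=1,\dots,M$. Write (\ref{KGlattice-BVP}) as $\mathcal{A}(\lambda,\epsilon){\bf W}=0$ with $\mathcal{A}(\lambda,\epsilon)=\partial_t^{2}+V''({\bf u}^{(\epsilon)})+2\lambda\partial_t+\lambda^{2}-\epsilon\Delta$, where $\Delta$ denotes the discrete Laplacian. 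At $\lambda=\epsilon=0$ this reduces to $\mathcal{A}_0:={\rm diag}(L_e\text{ on }S,\,L_0\text{ off }S)$, whose kernel $\mathcal{K}={\rm span}\{\dot{\varphi}{\bf e}_{jN}\}_{j=1}^{M}$ is $M$-dimensional and whose restriction to the orthogonal complement of $\mathcal{K}$ is invertible with a bounded inverse, uniformly in the lattice index, because $T\neq 2\pi n$ and $T'(E)\neq 0$ (as in the proof of Theorem \ref{theorem-continuum}).

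First I substitute the dressed ansatz
\[
{\bf W}=\sum_{j=1}^{M}c_{j}\left(\tau_{jN}\boldsymbol{\theta}^{(\epsilon)}+\lambda\,\tau_{jN}\boldsymbol{\mu}^{(\epsilon)}\right)+{\bf W}_{\perp},\qquad \lambda=\epsilon^{N/2}\Lambda,
\]
where $\boldsymbol{\theta}^{(\epsilon)}=\partial_t\boldsymbol{\phi}^{(\epsilon)}$ and $\boldsymbol{\mu}^{(\epsilon)}$ are the eigenvector and generalized eigenvector associated with the fundamental breather (so $(\partial_t^{2}+V''(\boldsymbol{\phi}^{(\epsilon)})-\epsilon\Delta)\boldsymbol{\theta}^{(\epsilon)}=0$ and $(\partial_t^{2}+V''(\boldsymbol{\phi}^{(\epsilon)})-\epsilon\Delta)\boldsymbol{\mu}^{(\epsilon)}=-2\ddot{\boldsymbol{\phi}}^{(\epsilon)}$), and ${\bf W}_{\perp}\perp\mathcal{K}$. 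Two facts make this ansatz effective. (i) Since $\tau_{jN}\boldsymbol{\theta}^{(\epsilon)}$ lies in the kernel of the linearization about the single breather $\tau_{jN}\boldsymbol{\phi}^{(\epsilon)}$, one has $\mathcal{A}(0,0)\,\tau_{jN}\boldsymbol{\theta}^{(\epsilon)}=[V''({\bf u}^{(\epsilon)})-V''(\tau_{jN}\boldsymbol{\phi}^{(\epsilon)})]\tau_{jN}\boldsymbol{\theta}^{(\epsilon)}$, which is controlled purely by the tail-to-tail interaction and is $\mathcal{O}_{l^{2}(\Z,H_{\rm per}^{2})}(\epsilon^{N})$ by (the equal-spacing $M$-site version of) Lemma \ref{lemma-tail-interaction}. (ii) The $\mathcal{O}(\lambda)$ contribution $2\lambda\partial_t\tau_{jN}\boldsymbol{\theta}^{(\epsilon)}=2\lambda\tau_{jN}\ddot{\boldsymbol{\phi}}^{(\epsilon)}$ is cancelled exactly by $\lambda(\partial_t^{2}+V''(\boldsymbol{\phi}^{(\epsilon)})-\epsilon\Delta)\tau_{jN}\boldsymbol{\mu}^{(\epsilon)}=-2\lambda\tau_{jN}\ddot{\boldsymbol{\phi}}^{(\epsilon)}$. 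Hence, after $\lambda^{2}=\epsilon^{N}\Lambda^{2}$, the residual $\mathcal{R}:=\mathcal{A}(\lambda,\epsilon)\sum_{j}c_{j}(\tau_{jN}\boldsymbol{\theta}^{(\epsilon)}+\lambda\tau_{jN}\boldsymbol{\mu}^{(\epsilon)})$ collects only the tail terms, $\lambda^{2}\sum_{j}c_{j}\tau_{jN}\boldsymbol{\theta}^{(\epsilon)}$ and $2\lambda^{2}\sum_{j}c_{j}\tau_{jN}\dot{\boldsymbol{\mu}}^{(\epsilon)}$, all of size $\mathcal{O}(\epsilon^{N}\|{\bf c}\|)$. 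Solving the $\mathcal{K}^{\perp}$-part of $\mathcal{A}{\bf W}=0$ for ${\bf W}_{\perp}$ by the implicit function theorem (using the uniform invertibility of $\mathcal{A}_0$ on $\mathcal{K}^{\perp}$ and smallness of $\mathcal{A}-\mathcal{A}_0$) gives ${\bf W}_{\perp}=\mathcal{O}(\epsilon^{N}\|{\bf c}\|)$, linear in ${\bf c}$; the remaining $\mathcal{K}$-part is the $M\times M$ homogeneous system $\langle\dot{\varphi}{\bf e}_{iN},\mathcal{R}\rangle_{l^{2}(\Z,L_{\rm per}^{2})}+\mathcal{O}(\epsilon^{N+\min(1,N/2)}\|{\bf c}\|)=0$, $i=1,\dots,M$, whose matrix is polynomial in $\lambda$ and analytic in $\epsilon$.

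Next I evaluate the $\epsilon^{N}$-coefficients, using $\boldsymbol{\theta}^{(0)}=\dot{\varphi}{\bf e}_{0}$ and $\boldsymbol{\mu}^{(0)}=-v{\bf e}_{0}$. The diagonal $\lambda^{2}$-contribution at the $iN$-th site is $c_{i}\lambda^{2}\langle\dot{\varphi},\dot{\varphi}-2\dot{v}\rangle_{L_{\rm per}^{2}}+\mathcal{O}(\epsilon^{N+1}c_i)$; here I use the explicit form $v=t\dot{\varphi}+\tfrac{T}{T'(E)}\partial_{E}\varphi$ (which follows from $L_e(t\dot{\varphi})=2\ddot{\varphi}$, $L_e\partial_{E}\varphi=0$, and $\partial_{E}\varphi(t+T)-\partial_{E}\varphi(t)=-T'(E)\dot{\varphi}(t)$) together with $\tfrac12\partial_{E}(\dot\varphi^{2})=1-V'(\varphi)\partial_{E}\varphi$ to obtain $\langle\dot{\varphi},\dot{v}\rangle=\tfrac12\|\dot\varphi\|_{L^{2}}^{2}+\tfrac{T^{2}}{2T'}$, hence $\langle\dot{\varphi},\dot{\varphi}-2\dot{v}\rangle=-T^{2}(E)/T'(E)$. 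For the tail contribution, Lemma \ref{lemma-tail-interaction} (equal-spacing $M$-site version, with the convention $\sigma_{0}=\sigma_{M+1}=0$) gives $u_{iN}^{(\epsilon)}=\sigma_{i}\boldsymbol{\phi}_{0}^{(\epsilon)}+\epsilon^{N}(\sigma_{i-1}+\sigma_{i+1})\psi_{N}+\mathcal{O}(\epsilon^{N+1})$ and $\boldsymbol{\phi}_{\pm N}^{(\epsilon)}=\epsilon^{N}\varphi_{N}+\mathcal{O}(\epsilon^{N+1})$; then the $j=i$ term produces $c_{i}\sigma_{i}(\sigma_{i-1}+\sigma_{i+1})\epsilon^{N}\int_{0}^{T}V'''(\varphi)\dot{\varphi}^{2}\psi_{N}\,dt$, while the $j=i\pm1$ terms produce $c_{i\pm1}\epsilon^{N}\int_{0}^{T}(V''(\varphi)-1)\dot{\varphi}\dot{\varphi}_{N}\,dt$. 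Two oscillator identities finish this: from $L_e\ddot{\varphi}=-V'''(\varphi)\dot\varphi^{2}$ and $L_e\psi_{N}=\varphi_{N-1}$ one gets $\int_{0}^{T}V'''(\varphi)\dot\varphi^{2}\psi_{N}\,dt=-\int_{0}^{T}\ddot{\varphi}\varphi_{N-1}\,dt=\int_{0}^{T}\dot{\varphi}\dot{\varphi}_{N-1}\,dt=K_{N}$; and from $(V''(\varphi)-1)\dot{\varphi}=-L_0\dot{\varphi}$ with $L_0\varphi_{N}=\varphi_{N-1}$ one gets $\int_{0}^{T}(V''(\varphi)-1)\dot{\varphi}\dot{\varphi}_{N}\,dt=-\int_{0}^{T}\dot{\varphi}\dot{\varphi}_{N-1}\,dt=-K_{N}$. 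Collecting the $M$ equations (and checking the rows $i=1,M$ against $\sigma_{0}=\sigma_{M+1}=0$) yields $-\tfrac{T^{2}}{T'}\lambda^{2}c_{i}=\epsilon^{N}K_{N}(\mathcal{S}{\bf c})_{i}+\mathcal{O}(\epsilon^{N+\min(1,N/2)}\|{\bf c}\|)$, with $\mathcal{S}$ precisely the stated tridiagonal matrix.

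Finally, dividing by $\epsilon^{N}$ and using $\lambda=\epsilon^{N/2}\Lambda$ turns the bifurcation equation into $-\tfrac{T^{2}(E)}{T'(E)}\Lambda^{2}{\bf c}=K_{N}\mathcal{S}{\bf c}+\mathcal{O}(\epsilon^{\min(1,N/2)})$; since the matrix pencil $\Lambda\mapsto K_{N}\mathcal{S}+\tfrac{T^{2}}{T'}\Lambda^{2}I$ has $2M$ roots $\Lambda$ counted with multiplicity, depending continuously on the coefficients, a standard perturbation argument (or the implicit function theorem applied to the characteristic polynomial) yields, for every eigenvalue $\Lambda$ of (\ref{reduced-eigenvalue}), a characteristic exponent $\lambda=\epsilon^{N/2}\Lambda+\mathcal{O}(\epsilon^{(N+1)/2})$ of (\ref{KGlattice-BVP}), and conversely all $2M$ characteristic exponents near $0$ are obtained in this way. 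The main obstacle is establishing that $\mathcal{R}$, hence ${\bf W}_{\perp}$, is genuinely $\mathcal{O}_{l^{2}(\Z,H_{\rm per}^{2})}(\epsilon^{N})$ rather than merely $\mathcal{O}(\epsilon)$: this is exactly the tail-to-tail interaction content of Lemma \ref{lemma-tail-interaction}, and it is what forces the use of the dressed vectors $\tau_{jN}\boldsymbol{\theta}^{(\epsilon)}$, $\tau_{jN}\boldsymbol{\mu}^{(\epsilon)}$ (with the generalized-eigenvector cancellation of the $\mathcal{O}(\lambda)$ term) in place of the bare $\dot{\varphi}{\bf e}_{jN}$; the remaining work is the bookkeeping of the $\epsilon$-expansions that pins down the coefficients $T^{2}/T'$ and $K_{N}$.
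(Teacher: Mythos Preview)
Your proposal is correct and follows essentially the same strategy as the paper: expand the eigenvector in terms of the dressed modes $\tau_{jN}\boldsymbol{\theta}^{(\epsilon)}$ and $\tau_{jN}\boldsymbol{\mu}^{(\epsilon)}$ of the fundamental breather, use the tail-to-tail interaction Lemma~\ref{lemma-tail-interaction} to push the residual to order $\epsilon^{N}$, and project onto $\dot{\varphi}{\bf e}_{iN}$ to obtain the reduced $M\times M$ problem. The computation of the $\lambda^{2}$-coefficient $-T^{2}(E)/T'(E)$ via the explicit formula $v=t\dot{\varphi}+\tfrac{T}{T'}\partial_{E}\varphi$ is identical to the paper's.

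The only notable difference is bookkeeping for the off-diagonal entries. The paper uses the \emph{truncated} vectors $\boldsymbol{\theta}^{(\epsilon,N)}$ with an explicit subtraction of $\epsilon^{N}\dot{\varphi}_{N}$ at the neighboring excited sites (see the ansatz~(\ref{decomposition-W})); the flux then enters directly as $\dot{\varphi}_{N-1}$ in the inhomogeneous equation~(\ref{inhom-eq-W}), so the off-diagonal coefficient is $\int_{0}^{T}\dot{\varphi}\dot{\varphi}_{N-1}\,dt=K_{N}$ by definition. You instead keep the full $\boldsymbol{\theta}^{(\epsilon)}$ and extract the off-diagonal from the potential difference $(V''(\varphi)-1)\dot{\varphi}_{N}$, reducing it to $K_{N}$ via the identity $(V''(\varphi)-1)\dot{\varphi}=-L_{0}\dot{\varphi}$ together with $L_{0}\dot{\varphi}_{N}=\dot{\varphi}_{N-1}$. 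Both routes are equivalent; yours is slightly more systematic as a Lyapunov--Schmidt reduction, while the paper's subtraction makes the flux structure more transparent.
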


\begin{proof}
At $\epsilon=0$, the eigenvalue problem (\ref{KGlattice-BVP})
admits eigenvalue $\lambda=0$ of geometric multiplicity $M$ and
algebraic multiplicity $2M$, which is isolated from the rest of the
spectrum. Perturbation theory in $\epsilon$ applies thanks to the
smoothness of ${\bf u}^{(\epsilon)}$ in $\epsilon$ and $V'$ in
$u$. Perturbation expansions (so-called Puiseux series, see
Chapter 2 in \cite{Kato} and recent work \cite{Welters})
are smooth in powers of $\epsilon^{1/2}$ thanks
to the Jordan block decomposition at $\epsilon=0$.

We need to find out how the eigenvalue $\lambda = 0$
of algebraic multiplicity $2M$ split for small $\epsilon > 0$.
Therefore, we are looking for the eigenvectors of the eigenvalue
problem (\ref{KGlattice-BVP}) in the subspace associated with
the eigenvalue $\lambda=0$ using the substitution
$\lambda=\epsilon^{N/2}\tilde{\lambda}$ and the decomposition
\begin{eqnarray}
\label{decomposition-W}
{\bf W} = \sum_{j=1}^{M} c_{j}  \left( \tau_{jN}
\mbox{\boldmath\ensuremath{\theta}}^{(\epsilon,N)} - \epsilon^N ({\bf e}_{(j-1)N} + {\bf e}_{(j+1)N})
\dot{\varphi}_N \right) + \epsilon^{N/2} \tilde{\lambda} \sum_{j=1}^{M} c_{j} \tau_{jN}
\mbox{\boldmath\ensuremath{\mu}}^{(\epsilon,N_*)} + \epsilon^{N}\tilde{{\bf W}},
\end{eqnarray}
where $N_* = N/2$ if $N$ is even and $N_* = (N-1)/2$ if $N$ is odd, whereas
$\tilde{\bf W}$ is the remainder term at the $N$-th order in $\epsilon$. The decomposition
formula (\ref{decomposition-W}) follows from the superposition (\ref{expansion-u})
up to the $N$-th order in $\epsilon$. The terms $\epsilon^N \sum_{j=1}^{M} c_{j}
({\bf e}_{(j-1)N} + {\bf e}_{(j+1)N}) \dot{\psi}_N$ from the superposition (\ref{expansion-u})
are to be accounted at the equation for $\tilde{\bf W}$. Note that our convention in
writing (\ref{decomposition-W}) is to drop the boundary terms with
${\bf e}_{0N}$ and ${\bf e}_{(M+1)N}$.

Substituting (\ref{decomposition-W})
to (\ref{KGlattice-BVP}), all equations
are satisfied up to the $N$-th order. At the $N$-th order,
we divide (\ref{KGlattice-BVP}) by $\epsilon^{N}$ and
collect equations at the excited sites $n=jN$ for $j\in\{1,2,...,M\}$,
\begin{eqnarray}
\ddot{\tilde{W}}_{jN}+V''(\varphi)\tilde{W}_{jN} & = & (c_{j+1} +c_{j-1})
\dot{\varphi}_{N-1} - \sigma_{j}(\sigma_{j+1} +\sigma_{j-1}) c_{j}
V'''(\varphi)\psi_{N}\dot{\varphi}\nonumber \\
& \phantom{t} & +\tilde{\lambda}^{2} c_j (2\dot{v}-\dot{\varphi})+\mathcal{O}(\epsilon^{1/2}),
\label{inhom-eq-W}
\end{eqnarray}
where we admit another convention that $\sigma_0 = \sigma_{M+1} = 0$ and $c_0 = c_{M+1} = 0$.
In the derivation of equations (\ref{inhom-eq-W}),
we have used the fact that the term $\dot{\varphi}_{N-1}$ comes from the fluxes from
$n=(j+1)N$ and $n=(j-1)N$ sites generated by the derivatives of
the linear inhomogeneous equations (\ref{inhomogen-1}) and the term
$\sigma_{j}(\sigma_{j+1} +\sigma_{j-1}) c_{j}  V'''(\varphi)\psi_{N}\dot{\varphi}$ comes from the expansion
(\ref{expansion-u}) of the nonlinear potential $V''(u_{jN})$.

Expanding $\tilde{\lambda}=\Lambda+{\cal O}(\epsilon^{1/2})$ and
projecting the system of linear inhomogeneous equations (\ref{inhom-eq-W})
to $\dot{\varphi}\in H_{{\rm per}}^{2}(0,T)$, the kernel of $L=\partial_{t}^{2}+V''(\varphi) :
H_{{\rm per}}^{2}(0,T) \to L_{{\rm per}}^{2}(0,T)$,
we obtain the system of difference equations,
\begin{eqnarray*}
\Lambda^{2}c_{j} \int_{0}^{T} \left(\dot{\varphi}^{2}+2 v \ddot{\varphi}\right)dt =
(c_{j+1} + c_{j-1}) \int_{0}^{T}\dot{\varphi}\dot{\varphi}_{N-1}dt
- \sigma_{j}(\sigma_{j+1} + \sigma_{j-1})c_{j}\int_{0}^{T}V'''(\varphi)\psi_{N}\dot{\varphi}^{2}dt,
\end{eqnarray*}
where the integration by parts is used to simplify the left-hand
side. Differentiating the linear inhomogeneous equation (\ref{inhomogen-2})
and projecting it to $\dot{\varphi}$, we infer that
\begin{eqnarray*}
\int_{0}^{T}V'''(\varphi)\psi_{N}\dot{\varphi}^{2}dt=\int_{0}^{T}\dot{\varphi}\dot{\varphi}_{N-1}dt\equiv K_{N}.
\end{eqnarray*}

The system of difference equations yields the matrix eigenvalue problem
(\ref{reduced-eigenvalue}) if we can verify that
\begin{eqnarray*}
\int_{0}^{T}\left(\dot{\varphi}^{2} + 2 v \ddot{\varphi}\right)dt=-\frac{T^{2}(E)}{T'(E)}.
\end{eqnarray*}
To do so, we note that $v\equiv2L_{e}^{-1}\ddot{\varphi}$ is even
in $t\in\R$, so that it is given by the exact solution,
\[
v(t)=t\dot{\varphi}(t)+C\partial_{E}\varphi(t),
\]
where $C\in\R$. From the condition of $T$-periodicity for $v(t)$
and $\dot{v}(t)$, we obtain
\[
v(0)=v(T)=Ca'(E),\quad\dot{v}(0)=0=\dot{v}(T)=T\ddot{\varphi}(0)-CT'(E)\ddot{\varphi}(0),
\]
hence $C=T(E)/T'(E)$ and
\begin{eqnarray*}
\int_{0}^{T}\left(\dot{\varphi}^{2}+2 v \ddot{\varphi}\right)dt & = &
2C\int_{0}^{T}\ddot{\varphi}\partial_{E}\varphi dt=-C\int_{0}^{T}
\left(\dot{\varphi}\partial_{E}\dot{\varphi} + V'(\varphi)\partial_{E}\varphi\right)dt\\
& = & -C\int_{0}^{T}\frac{\partial}{\partial E}\left(\frac{1}{2}\dot{\varphi}^{2}+
V(\varphi)\right)dt = -C T(E) =-\frac{T^{2}(E)}{T'(E)},
\end{eqnarray*}
where equation (\ref{nonlinear-oscillator}) has been used.
Finally, the matrix eigenvalue problem (\ref{reduced-eigenvalue})
defines $2 M$ small eigenvalues that bifurcate from $\lambda = 0$
for small $\epsilon > 0$. The proof of the lemma is complete.
\end{proof}

We shall now count eigenvalues of the matrix eigenvalue problem (\ref{reduced-eigenvalue})
to classify stable and unstable configurations of multi-site breathers
near the anti-continuum limit.

\begin{lemma}
Let $n_{0}$ be the numbers of negative elements in
the sequence $\{\sigma_{j}\sigma_{j+1}\}_{j=1}^{M-1}$. If $T'(E)K_{N}>0$,
the matrix eigenvalue problem (\ref{reduced-eigenvalue}) has exactly
$n_{0}$ pairs of purely imaginary eigenvalues $\Lambda$ and $M-1-n_{0}$
pairs of purely real eigenvalues $\mu$ counting their multiplicities,
in addition to the double zero eigenvalue. If $T'(E)K_{N}<0$, the
conclusion changes to the opposite.
\label{lemma-count}
\end{lemma}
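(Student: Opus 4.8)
The plan is to reduce the statement to two ingredients: determining the inertia of the symmetric tridiagonal matrix $\mathcal{S}$, and then reading off the nature of $\Lambda$ from the sign of the corresponding eigenvalue of $\mathcal{S}$ in the generalized eigenvalue problem (\ref{reduced-eigenvalue}).

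\emph{Step 1 (inertia of $\mathcal{S}$).} Since every off-diagonal entry of $\mathcal{S}$ equals $1$, $\mathcal{S}$ is symmetric. First I would conjugate by the diagonal orthogonal matrix $\Sigma := \mathrm{diag}(\sigma_1,\ldots,\sigma_M)$, which satisfies $\Sigma^2 = I$; this leaves the diagonal of $\mathcal{S}$ unchanged and replaces each $(j,j+1)$-entry $1$ by $\epsilon_j := \sigma_j\sigma_{j+1} \in \{+1,-1\}$. Writing out the entries (with the conventions $\sigma_0 = \sigma_{M+1} = 0$ already used in Lemma \ref{lemma-eigenvalue}), a direct inspection shows that $-\Sigma\mathcal{S}\Sigma$ is the matrix of the quadratic form
\[
q(\mathbf{d}) = \sum_{j=1}^{M-1} \epsilon_j\,(d_j - d_{j+1})^2, \qquad \mathbf{d} \in \R^M,
\]
that is, a signed Laplacian of the path graph on vertices $1,2,\ldots,M$ with edge weights $\epsilon_j$. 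The invertible linear change of variables $e_j = d_j - d_{j+1}$ for $j = 1,\ldots,M-1$ together with $e_M = d_M$ diagonalizes $q$ into $\sum_{j=1}^{M-1}\epsilon_j\,e_j^2$. By Sylvester's law of inertia, $-\Sigma\mathcal{S}\Sigma$ --- and hence $-\mathcal{S}$, since conjugation by $\Sigma$ is a congruence --- has exactly $n_0$ negative eigenvalues, $M-1-n_0$ positive eigenvalues, and one zero eigenvalue; here $n_0 = \#\{j : \epsilon_j = -1\}$ is precisely the number of anti-phase adjacent pairs in the statement, and the kernel of $\mathcal{S}$ is spanned by $(\sigma_1,\ldots,\sigma_M)$. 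Equivalently, $\mathcal{S}$ has $n_0$ positive, $M-1-n_0$ negative, and one (simple) zero eigenvalue, counting multiplicities.

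\emph{Step 2 (counting eigenvalues of (\ref{reduced-eigenvalue})).} Next I would diagonalize the symmetric matrix $\mathcal{S}$ in an orthonormal eigenbasis $\{\mathbf{c}^{(i)}\}_{i=1}^M$ with $\mathcal{S}\mathbf{c}^{(i)} = s_i\,\mathbf{c}^{(i)}$; substituting $\mathbf{c} = \mathbf{c}^{(i)}$ into (\ref{reduced-eigenvalue}) yields the scalar relation
\[
\Lambda^2 = -\frac{T'(E)\,K_N}{T^2(E)}\,s_i .
\]
Since $T(E) > 0$, we get $\mathrm{sign}(\Lambda^2) = -\,\mathrm{sign}(T'(E)K_N)\,\mathrm{sign}(s_i)$ whenever $s_i \neq 0$. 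If $T'(E)K_N > 0$, each of the $n_0$ positive eigenvalues $s_i$ produces a pair $\pm\Lambda$ of purely imaginary eigenvalues, each of the $M-1-n_0$ negative eigenvalues $s_i$ produces a pair of purely real eigenvalues, and the simple eigenvalue $s_i = 0$ produces the double zero eigenvalue $\Lambda = 0$; this accounts for all $2M$ eigenvalues. If $T'(E)K_N < 0$, the words ``purely imaginary'' and ``purely real'' are simply interchanged. This is exactly the assertion of the lemma. (A consistency check: for two adjacent excited sites, $K_1 = \int_0^T \dot{\varphi}^2\,dt > 0$, while $T'(E) < 0$ for hard and $T'(E) > 0$ for soft potentials, which recovers the known stability of in-phase hard-potential and anti-phase soft-potential two-site breathers.)

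\emph{Main obstacle.} The whole argument hinges on recognizing that, after conjugation by $\Sigma$, the matrix $-\mathcal{S}$ is a signed path Laplacian, which supplies the explicit congruence to the diagonal form $\sum_j \epsilon_j\,e_j^2$ and thereby its inertia; once this is in place, Sylvester's law and the scalar pencil computation are routine. The points that still need care are the verification of the two boundary rows of $\mathcal{S}$ against the form $q$, and the bookkeeping with multiplicities --- in particular, that the zero eigenvalue of $\mathcal{S}$ is simple, so that $\Lambda = 0$ has algebraic multiplicity exactly two in (\ref{reduced-eigenvalue}).
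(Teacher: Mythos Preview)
Your proof is correct and follows the same opening move as the paper: the substitution $c_j=\sigma_j b_j$ there is precisely your conjugation by $\Sigma=\mathrm{diag}(\sigma_1,\ldots,\sigma_M)$, and both arrive at the same reduced tridiagonal form. The difference is in how the inertia is then obtained. The paper, after writing the difference equation (\ref{difference-eigenvalue}), cites Sandstede's Lemma~5.4 and Appendix~C for the count of positive and negative eigenvalues. You instead give a self-contained argument: recognizing $-\Sigma\mathcal{S}\Sigma$ as the signed path Laplacian with quadratic form $\sum_{j=1}^{M-1}\epsilon_j(d_j-d_{j+1})^2$, you explicitly congruence-diagonalize it via $e_j=d_j-d_{j+1}$ and read off the inertia from Sylvester's law. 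This buys you an elementary and transparent proof of the eigenvalue count without any external reference; it also makes the simplicity of the zero eigenvalue and the identification of the kernel vector $(\sigma_1,\ldots,\sigma_M)$ immediate. Step~2, passing from the eigenvalues $s_i$ of $\mathcal{S}$ to the pairs $\pm\Lambda$, is identical in spirit to the paper's remark that $\Lambda^2=-\frac{T'(E)K_N}{T^2(E)}\gamma$.
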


\begin{proof}
We shall prove that the matrix $\mathcal{S}$ has exactly
$n_{0}$ positive and $M-1-n_{0}$ negative eigenvalues counting their
multiplicities, in addition to the simple zero eigenvalue. If this
is the case, the assertion of the lemma follows from the correspondence
$\Lambda^{2}=-\frac{T'(E)K_{N}}{T^{2}(E)}\gamma$, where $\gamma$
is an eigenvalue of $\mathcal{S}$.

Setting $c_{j}=\sigma_{j}b_{j}$, we rewrite the eigenvalue problem
$\mathcal{S}{\bf c}=\gamma{\bf c}$ as the difference equation,
\begin{equation}
\sigma_{j}\sigma_{j+1}(b_{j+1}-b_{j})+\sigma_{j}\sigma_{j-1}(b_{j-1}-b_{j})=\gamma b_{j},
\quad j\in\{1,2,...,M\},
\label{difference-eigenvalue}
\end{equation}
subject to the conditions $\sigma_{0}=\sigma_{M+1}=0$. Therefore,
$\gamma=0$ is always an eigenvalue with the eigenvector ${\bf b} = [1,1,...,1] \in \R^M$.
The coefficient matrix in (\ref{difference-eigenvalue}) coincides
with the one analyzed by Sandstede in Lemma 5.4 and Appendix C \cite{Sand}.
This correspondence yields the assertion on the number of eigenvalues
of $\mathcal{S}$.
\end{proof}

Before generalizing the results of Lemmas \ref{lemma-eigenvalue} and
\ref{lemma-count} to other multi-site breathers, we consider two
examples, which are related to the truncated potential (\ref{potential-numerics}).
We shall use the Fourier cosine series for the solution $\varphi\in H_{e}^{2}(0,T)$,
\begin{equation}
\varphi(t)=\sum_{n\in\N}c_{n}(T)\cos\left(\frac{2\pi nt}{T}\right),
\end{equation}
for some square summable set $\{c_{n}(T)\}_{n\in\N}$. Because of
the symmetry of $V$, we have $\varphi(T/4)=0$, which imply that
$c_{n}(T)\equiv0$ for all even $n\in\N$. Solving the linear inhomogeneous equations
(\ref{inhomogen-1}), we obtain
\begin{equation}
\varphi_{k}(t) = \sum_{n\in\N_{{\rm odd}}}
\frac{T^{2k} c_{n}(T)}{\left(T^{2}-4\pi^{2} n^{2}\right)^{k}}
\cos\left(\frac{2\pi nt}{T}\right),\quad k\in\N.
\end{equation}
Using Parseval's equality, we compute the constant $K_{N}$ in Lemma
\ref{lemma-eigenvalue},
\begin{equation}
K_{N}=\int_{0}^{T}\dot{\varphi}(t)\dot{\varphi}_{N-1}(t)dt =
4\pi^{2}\sum_{n\in\N_{{\rm odd}}}
\frac{T^{2N-3} n^{2}|c_{n}(T)|^{2}}{\left(T^{2}-4\pi^{2}n^{2} \right)^{N-1}}.
\label{formula-K-N}
\end{equation}

For the hard potential with $V'(u)=u+u^{3}$, we know from Figure
\ref{fig1} that the period $T(E)$ is a decreasing function of $E$
from $T(0)=2\pi$
to $\lim_{E\to\infty}T(E)=0$. Since $T'(E)<0$ and $T(E)<2\pi$,
we conclude that $T'(E)K_{N}<0$ if $N$ is odd and $T'(E)K_{N}>0$
if $N$ is even. By Lemma \ref{lemma-count}, if $N$ is odd, the
only stable configuration of the multi-site breathers is the one with
all equal $\{\sigma_{j}\}_{j=1}^{M}$ (in-phase breathers),
whereas if $N$ is even, the only stable configuration of the multi-site breathers
is the one with all alternating $\{\sigma_{j}\}_{j=1}^{M}$ (anti-phase breathers).
This conclusion is shown in the first line of Table I.

For the soft potential with $V'(u)=u-u^{3}$, we know from Figure
\ref{fig1} that the period $T(E)$ is an increasing function of $E$ from $T(0)=2\pi$
to $\lim_{E\to E_{0}}T(E)=\infty$, where $E_{0}=\frac{1}{4}$. If
$T(E)$ is close to $2\pi$, then the first positive term in the
series (\ref{formula-K-N}) dominates and $K_{N}>0$ for all $N\in\N$.
At the same time, $T'(E)>0$ and Lemma \ref{lemma-count} implies
that the only stable configuration of the multi-site breathers is
the one with all alternating $\{\sigma_{j}\}_{j=1}^{M}$ (anti-phase breathers).
The conclusion holds for any $T > 2\pi$ if $N$ is odd, because $K_N > 0$ in this case.

This precise conclusion is obtained in Theorem 3.6 of \cite{PKF1}
in the framework of the DNLS equation (\ref{NLSlattice}). It is also
in agreement with perturbative arguments in \cite{Archilla,KK09},
which are valid for $N=1$ (all excited sites are adjacent on the
lattice). To elaborate this point further, we show in Appendix \ref{sec-equivalence}
the equivalence between the matrix eigenvalue problem (\ref{reduced-eigenvalue})
with $N=1$ and the criteria used in \cite{Archilla,KK09}.

For even $N\in\N$, we observe a new phenomenon, which arise for the
soft potentials with larger values of $T(E)>2\pi$.
To be specific, we restrict our consideration of multisite breathers
with the period $T$ in the interval $(2\pi,6\pi)$. Similar
results can be obtained in the intervals $(6\pi,10\pi)$, $(10 \pi,14 \pi)$,
and so on. For even $N\in\N$, there exists a period $T_N \in (2\pi,6\pi)$ such that
the constant $K_{N}$ in (\ref{formula-K-N}) changes sign
from $K_N > 0$ for $T \in (2\pi,T_N)$ to $K_{N}<0$ for $T \in (T_N,6\pi)$.
When it happens, the conclusion on stability of the multi-site breather change directly
to the opposite: the only stable configuration of the multi-site breathers
is the one with all equal $\{\sigma_{j}\}_{j=1}^{M}$ (in-phase breathers).
This conclusion is shown in the second line of Table I.

\medskip{}

\begin{center}
\begin{tabular}{|>{\centering}m{25mm}|>{\centering}m{25mm}|>{\centering}m{50mm}|}
\hline
 & $N$ odd  & $N$ even\tabularnewline
\hline
hard potential

$V'(u)=u+u^{3}$

$0 < T < 2\pi$ & in-phase  & anti-phase\tabularnewline
\hline
soft potential

$V'(u)=u-u^{3}$

$2 \pi < T < 6 \pi$ & anti-phase & $2\pi<T<T_{N}$ anti-phase

$T_{N}<T<6\pi$ in-phase \tabularnewline
\hline
\end{tabular}
\par\end{center}

\textbf{Table I:} Stable multi-site breathers in hard and soft potentials.
The stability threshold $T_{N}$ corresponds to the zero of $K_{N}$ for $T \in (2\pi,6\pi)$.

\medskip{}

We conclude this section with the stability theorem for general multi-site
breathers. For the sake of clarity, we formulate the theorem in the
case when $T'(E)>0$ and all $K_{N}>0$, which arises
for the soft potential with odd $N$. Using Lemma \ref{lemma-count},
the count can be adjusted to the cases of $T'(E) < 0$ and/or $K_{N}<0$.

\begin{theorem}
Let $\{n_{j}\}_{j=1}^{M}\subset\Z$ be an increasing
sequence with $M\in\N$. Let ${\bf u}^{(\epsilon)}\in l^{2}(\mathbb{Z},H_e^{2}(0,T))$
be a solution of the discrete Klein--Gordon equation (\ref{KGlattice})
in Theorem \ref{theorem-continuum} with
${\bf u}^{(0)}(t)=\sum_{j=1}^{M}\sigma_{j}\varphi(t){\bf e}_{n_{j}}$
for small $\epsilon>0$. Let $\{\varphi_{m}\}_{m=0}^{\infty}$ be defined
by the linear equations (\ref{inhomogen-1}) starting with $\varphi_{0}=\varphi$.

Define $\{N_{j}\}_{j=1}^{M-1}$ and $\{K_{N_{j}}\}_{j=1}^{M-1}$ by
$N_{j}=n_{j+1}-n_{j}$ and $K_{N_{j}}=\int_{0}^{T}\dot{\varphi}\dot{\varphi}_{N_{j}-1}dt$.
Assume $T'(E)>0$ and $K_{N_{j}}>0$ for all $N_{j}$.

Let $n_{0}$ be the numbers of negative elements in the sequence $\{\sigma_{j}\sigma_{j+1}\}_{j=1}^{M-1}$.
The eigenvalue problem (\ref{KGlattice-BVP}) at the discrete breather
${\bf u}^{(\epsilon)}$ has exactly $n_{0}$ pairs of purely imaginary
eigenvalues $\lambda$ and $M-1-n_{0}$ pairs of purely real eigenvalues
$\lambda$ counting their multiplicities, in addition to the double
zero eigenvalue.
\label{theorem-eigenvalue}
\end{theorem}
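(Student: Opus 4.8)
The plan is to push the perturbative reduction behind Lemma~\ref{lemma-eigenvalue} to the case of unequal gaps $N_j=n_{j+1}-n_j$, where the coupling between the $j$-th and $(j+1)$-th breathers activates at order $\epsilon^{N_j}$ rather than at one common order; the reduced problem then becomes a weighted version of (\ref{difference-eigenvalue}) whose inertia is the same as that of $\mathcal S$ because the weights stay positive. First I would extend Lemma~\ref{lemma-tail-interaction} from two to $M$ excited sites: exactly as in its proof, the flux from the $j$-th excited site reaches a neighbouring site at lattice distance $m$ only at order $\epsilon^m$ and is governed by the recurrence (\ref{inhomogen-1}), while a nonlinear collision of two such fluxes is pushed to order $\tfrac32N_j$ and collisions involving non-adjacent pairs of excited sites are pushed still further. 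Hence, to and including order $\epsilon^{N_j}$ on the bond $(n_j,n_{j+1})$, the breather ${\bf u}^{(\epsilon)}$ is the superposition $\sum_j\sigma_j\tau_{n_j}\mbox{\boldmath\ensuremath{\phi}}^{(\epsilon)}$ corrected on that bond by the $\epsilon^{N_j}$-order term built from $\psi_{N_j}-\varphi_{N_j}$, exactly as in (\ref{expansion-u}) with $N$ replaced by $N_j$.

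Next I would run the Lyapunov--Schmidt reduction. At $\epsilon=0$ the zero exponent of (\ref{KGlattice-BVP}) has geometric multiplicity $M$ and algebraic multiplicity $2M$ and is isolated from the rest of the spectrum, and by the structural stability of the remaining semisimple unit-circle multipliers only these $2M$ exponents can leave the origin. I would write ${\bf W}$ as $\sum_j c_j\,\tau_{n_j}\mbox{\boldmath\ensuremath{\theta}}^{(\epsilon)}+\lambda\sum_j c_j\,\tau_{n_j}\mbox{\boldmath\ensuremath{\mu}}^{(\epsilon)}$ plus shifted boundary corrections (as in the decomposition used for Lemma~\ref{lemma-eigenvalue}) plus a remainder orthogonal to the generalized kernel, substitute into (\ref{KGlattice-BVP}), read off the equations at the excited sites $n=n_j$, and project onto $\dot\varphi=\ker(\partial_t^2+V''(\varphi))$. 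Using the generalized tail expansion, the flux arriving at $n_j$ from $n_{j\pm1}$ supplies $\epsilon^{N_j}c_{j+1}\dot\varphi_{N_j-1}$ and $\epsilon^{N_{j-1}}c_{j-1}\dot\varphi_{N_{j-1}-1}$, and the expansion of $V''(u_{n_j})$ supplies $-\epsilon^{N_j}\sigma_j\sigma_{j+1}c_j\,V'''(\varphi)\psi_{N_j}\dot\varphi-\epsilon^{N_{j-1}}\sigma_j\sigma_{j-1}c_j\,V'''(\varphi)\psi_{N_{j-1}}\dot\varphi$. The identities $\int_0^TV'''(\varphi)\psi_{N}\dot\varphi^2\,dt=K_{N}$ and $\int_0^T(\dot\varphi^2+2v\ddot\varphi)\,dt=-T^2(E)/T'(E)$, established inside Lemma~\ref{lemma-eigenvalue}, then collapse the problem to
\begin{equation*}
-\frac{T^2(E)}{T'(E)}\,\lambda^2 c_j=\epsilon^{N_j}K_{N_j}\bigl(c_{j+1}-\sigma_j\sigma_{j+1}c_j\bigr)+\epsilon^{N_{j-1}}K_{N_{j-1}}\bigl(c_{j-1}-\sigma_j\sigma_{j-1}c_j\bigr)+(\text{h.o.t.}),\quad 1\le j\le M,
\end{equation*}
with $\sigma_0=\sigma_{M+1}=c_0=c_{M+1}=0$; call the leading matrix on the right $\mathcal S_\epsilon$.

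Finally I would count. Setting $c_j=\sigma_j b_j$ turns $\mathcal S_\epsilon{\bf c}=\gamma{\bf c}$ into the signed weighted path-Laplacian recurrence $\sigma_j\sigma_{j+1}\epsilon^{N_j}K_{N_j}(b_{j+1}-b_j)+\sigma_j\sigma_{j-1}\epsilon^{N_{j-1}}K_{N_{j-1}}(b_{j-1}-b_j)=\gamma b_j$, whose edge magnitudes $\epsilon^{N_j}K_{N_j}$ are strictly positive (here we use $K_{N_j}>0$) and whose sign pattern $\{\sigma_j\sigma_{j+1}\}$ is exactly that of (\ref{difference-eigenvalue}). Deforming the positive magnitudes continuously to $1$ leaves the kernel one-dimensional (solving the recurrence from $j=1$ forces ${\bf b}=[1,\dots,1]$), so that by Sylvester's law of inertia and continuity of eigenvalues the numbers of positive and negative eigenvalues are unchanged; by Lemma~\ref{lemma-count} (Sandstede's Lemma~5.4 in \cite{Sand}) they are $n_0$ and $M-1-n_0$, besides the simple zero. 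Since $\lambda^2=-\tfrac{T'(E)}{T^2(E)}\gamma$ and $T'(E)>0$, each positive $\gamma$ yields a purely imaginary pair of exponents, each negative $\gamma$ a purely real pair, and $\gamma=0$ the persistent double zero from time translations; the structural-stability remark then shows that (\ref{KGlattice-BVP}) has no other exponents near the origin, which is the assertion of the theorem.

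I expect the main obstacle to be the bookkeeping of this graded perturbation: because the bonds $(n_j,n_{j+1})$ become active at the different orders $\epsilon^{N_j}$, one must verify that the higher-order corrections are genuinely subordinate to $\mathcal S_\epsilon$ along each of the $2M$ branches separately — in particular for the exponents that scale with the largest gap, whose $\lambda^2$ is the smallest — so that $\mathcal S_\epsilon$ determines not merely the leading order but the actual sign of $\lambda^2$. The non-degeneracy $K_{N_j}\neq0$ on every bond is exactly what makes this work, and the cases $T'(E)<0$ or $K_{N_j}<0$ are then obtained by the sign adjustment recorded in Lemma~\ref{lemma-count}.
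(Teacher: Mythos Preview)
Your argument is sound at the level of a proof sketch, but it is organized differently from the paper's. The paper does not try to write a single reduced matrix $\mathcal S_\epsilon$ with unequal weights $\epsilon^{N_j}K_{N_j}$ on the bonds; instead it proceeds hierarchically in the distances $N_j$. First the excited sites are grouped into \emph{clusters} of adjacent sites sharing the smallest gap, and Lemma~\ref{lemma-eigenvalue} is applied verbatim to each cluster at order $\epsilon^{N_{\min}}$, producing the corresponding real or imaginary pairs together with one surviving double zero per cluster. Those surviving zero modes are then treated as effective single sites, and the procedure is repeated at the next larger gap, and so on; at each stage the splitting is governed by a copy of (\ref{reduced-eigenvalue}) and counted by Lemma~\ref{lemma-count}. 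Your route---one tridiagonal matrix with positive bond magnitudes, then the substitution $c_j=\sigma_jb_j$ and a homotopy of the magnitudes to~$1$ keeping the kernel one-dimensional---gives the inertia of $\mathcal S_\epsilon$ in one stroke and is a genuinely different and rather elegant replacement for the iterated use of Sandstede's count. The price is exactly the obstacle you name at the end: the ``h.o.t.'' in row $j$ live at order $\epsilon^{\min(N_{j-1},N_j)+1}$, which can dominate the $\epsilon^{N_{\max}}$-small eigenvalue of $\mathcal S_\epsilon$, so a naive perturbation of $\mathcal S_\epsilon$ is not enough and one must argue branch by branch that those corrections act within the already-resolved strong-bond subspace. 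The paper's hierarchical scheme is precisely the mechanism that makes this separation of scales automatic, whereas in your formulation it has to be supplied as an additional structural argument (e.g.\ showing that the full reduced matrix retains the exact one-dimensional kernel from time translation and that the large corrections are confined to the span of the large eigenvectors).
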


\begin{proof}
The limiting configuration
${\bf u}^{(0)}(t)=\sum_{j=1}^{M}\sigma_{j}\varphi(t){\bf e}_{n_{j}}$
defines clusters of excited sites with equal distances
$N_{j}$ between the adjacent excited sites.

According to Lemma \ref{lemma-eigenvalue}, splitting of $M$ double
Jordan blocks associated to the decompositions (\ref{limiting-eigenvector})
and (\ref{limiting-generalized-eigenvector}) occurs into different
orders of the perturbation theory, which are determined by the set
$\{ N_j \}_{j =1}^{M-1}$. At each order of the perturbation
theory, the splitting of eigenvalues associated with one cluster with
equal distance between the adjacent excited sites
obeys the matrix eigenvalue problem (\ref{reduced-eigenvalue}),
which leaves exactly one double eigenvalue at zero and yields symmetric
pairs of purely real or purely imaginary eigenvalues, in accordance
to the count of Lemma \ref{lemma-count}.

The double zero eigenvalue corresponds to the eigenvector ${\bf W}$
and the generalized eigenvector ${\bf Z}$ generated by the translational
symmetry of the multi-site breather bifurcating from a particular
cluster of excited sites in the limiting configuration ${\bf u}^{(0)}$.
The splitting of the double zero eigenvalues associated with the
cluster happens at the higher orders in $\epsilon$, when the fluxes
from adjacent clusters reach each others. Since the end-point fluxes from
the multi-site breathers are equivalent to the fluxes (\ref{fluxes})
generated from the fundamental breathers, they still obey Lemma \ref{lemma-tail-interaction}
and the splitting of double zero eigenvalue associated with different
clusters still obeys Lemma \ref{lemma-eigenvalue}.

At the same time, the small pairs of real and imaginary eigenvalues
arising at a particular order in $\epsilon$ remain at the real and
imaginary axes in higher orders of the perturbation theory because
their geometric and algebraic multiplicity coincides, thanks to the
fact that these eigenvalues are related to the eigenvalues of the
symmetric matrix $\mathcal{S}$ in the matrix eigenvalue problem (\ref{reduced-eigenvalue}).

Avoiding lengthy algebraic proofs, these arguments yield the assertion
of the theorem.
\end{proof}

\section{Numerical results \label{sec:numerics}}

\label{sec-numerics}

We illustrate our analytical results on existence and stability of
discrete breathers near the anti-continuum limit by using numerical
approximations. The discrete Klein--Gordon equation (\ref{KGlattice})
can be truncated at a finite system of differential equations by applying
the Dirichlet conditions at the ends.

\subsection{Three-site model}

The simplest model which allows gaps in the initial configuration
$\mathbf{u}^{(0)}$ is the one restricted to three lattice sites,
e.g. $n\in\{-1,0,1\}$. We choose the soft potential $V'(u)=u-u^{3}$
and rewrite the truncated discrete Klein--Gordon equation as a system
of three Duffing oscillators with linear coupling terms,
\begin{equation}
\left\{ \begin{array}{l}
\ddot{u}_{0}+u_{0}-u_{0}^{3}=\epsilon(u_{1}-2u_{0}+u_{-1}),\\
\ddot{u}_{\pm1}+u_{\pm1}-u_{\pm1}^{3}=\epsilon(u_{0}-2u_{\pm1}).\end{array}\right.
\label{eq:dKG_trunc}
\end{equation}

A fast and accurate approach to construct $T$-periodic solutions
for this system is the shooting method. The idea is to find $\mathbf{a}\in\R^{3}$
such that the solution $\mathbf{u}(t)\in C^{1}(\R_{+},\R^{3})$ with
initial conditions $\mathbf{u}(0)=\mathbf{a}$, $\dot{\mathbf{u}}(0)=0$
satisfy the conditions of $T$-periodicity, $\mathbf{u}(T)={\bf a}$,
$\dot{\mathbf{u}}(T)=0$. However, these constraints would generate
an over-determined system of equations on ${\bf a}$. To set up the
square system of equations, we can use the symmetry $t\mapsto-t$
of system (\ref{eq:dKG_trunc}). If we add the constraint $\dot{\mathbf{u}}(T/2)=0$,
then even solutions of system (\ref{eq:dKG_trunc}) satisfy $\mathbf{u}(-T/2)=\mathbf{u}(T/2)$
and $\dot{\mathbf{u}}(-T/2)=-\dot{\mathbf{u}}(T/2)=0$, that is,
these solutions are $T$-periodic. Hence, the values of ${\bf a}\in\R^{3}$ become
the roots of the vector ${\bf F}({\bf a})=\dot{\mathbf{u}}(T/2)\in\R^{3}$.

We now construct a periodic solution $\mathbf{u}$ to system (\ref{eq:dKG_trunc})
that corresponds to the anti-continuum limit $\mathbf{u}^{(0)}$ as
follows. Using the initial data $\mathbf{u}^{(0)}(0)$ as an initial
guess for the shooting method for a fixed value of $T$, we continue
the initial displacement $\mathbf{u}(0)$ with respect to the coupling
constant $\epsilon>0$. After that, we fix a value of $\epsilon$
and use the shooting method again to continue the initial displacement
$\mathbf{u}(0)$ with respect to period $T$.

Let us apply this method to determine initial conditions for the fundamental
breather,
\begin{equation}
u_{0}^{(0)}=\varphi,\quad u_{\pm1}^{(0)}=0,
\label{fund-breat}
\end{equation}
and for a two-site breather with a hole,
\begin{equation}
u_{0}^{(0)}=0,\quad u_{\pm1}^{(0)}=\varphi.
\label{hole-breat}
\end{equation}
In both cases, we can use the symmetry $u_{-1}(t)=u_{1}(t)$ to reduce
the dimension of the shooting method to two unknowns $a_{0}$ and
$a_{1}$.

Figure \ref{fig:breather_branches} shows solution branches for these
two breathers on the period--amplitude plane by plotting $T$
versus $a_{0}$ and $a_{1}$ for $\epsilon = 0.01$. For $2\pi<T<6\pi$,
solution branches are close to the limiting solutions (dotted line),
in agreement with Theorem \ref{theorem-continuum}. However, a new
phenomenon emerges near $T=6\pi$: both breather solutions experience
a pitchfork bifurcation and two more solution branches split off the
main solution branch. The details of the pitchfork bifurcation for the
fundamental solution branch are shown on the insets of Figure \ref{fig:breather_branches}.

Let $T_S$ be the period at the point of the pitchfork bifurcation.
We may think intuitively that $T_S$ should approach to the point of
$1 : 3$ resonance for small $\epsilon$, that is, $T_S \to 6 \pi$ as $\epsilon \to 0$.
We have checked numerically that this conjecture is in fact false and
the value of $T_S$ gets larger as $\epsilon$ gets smaller.
This property of the pitchfork bifurcation is analyzed
in Section \ref{sec-resonance} below (see Remark \ref{remark-5} and Figure \ref{fig:delta_vs_eps}).

Figure \ref{fig:breather_branches} also shows two branches of solutions
for $T>6\pi$ with negative values of $a_{1}$ for positive values of $a_0$
and vice versa. One of
the two branches is close to the breathers at the anti-continuum limit,
as prescribed by Theorem \ref{theorem-continuum}. We note that the
breather solutions prescribed by Theorem \ref{theorem-continuum}
for $T<6\pi$ and $T>6\pi$ belong to different solution branches.
This property is also analyzed in Section \ref{sec-resonance} below
(see Remark \ref{remark-4} and Figure \ref{fig:a_vs_delta}).

Figure \ref{fig:solutions} shows the fundamental breather before
$(T=5\pi)$ and after $(T=5.8\pi)$ pitchfork bifurcation. The symmetry
condition $\mathbf{u}(T/4)=0$ for the solution at the main branch
is violated for two new solutions that bifurcate from the main branch.
Note that the two new solutions bifurcating for $T > T_S$ look different on
the graphs of $a_0$ and $a_1$ versus $T$. Nevertheless, these two solutions
are related to each other by the symmetry of the system (\ref{eq:dKG_trunc}).
If $\mathbf{u}(t)$ is one solution of the system (\ref{eq:dKG_trunc}),
then $-\mathbf{u}(t+T/2)$ is another solution of the same system.
If $\mathbf{u}(T/4) \neq 0$, then these two solutions are different from each other.

Let us now illustrate the stability result of Theorem \ref{theorem-eigenvalue}
using the fundamental breather (\ref{fund-breat}) and the breather
with a hole (\ref{hole-breat}). We draw a conclusion on linearized
stability of the breathers by testing whether their Floquet multipliers,
found from the monodromy matrix associated with the linearization
of system (\ref{eq:dKG_trunc}), stay on the unit circle.

Figure \ref{fig:FM_010} shows the real part of Floquet multipliers
versus the breather's period for the fundamental breather (left) and
the new solution branches (right) bifurcating from the fundamental breather
due to the pitchfork bifurcation. Because Floquet multipliers
are on the unit circle for all periods below the bifurcation value $T_S$,
the fundamental breather remains stable for these periods, in agreement
with Theorem \ref{theorem-eigenvalue}. Once the bifurcation occurs,
one of the Floquet multiplier becomes real and unstable (outside the
unit circle). Two new stable solutions appear during the bifurcation and
have the identical Floquet multipliers because of the aforementioned symmetry
between the new solutions. These solutions become
unstable for periods slightly larger than the bifurcation value $T_S$, because
of the period-doubling bifurcation associated with Floquet multipliers
at $-1$.

\begin{figure}
\begin{centering}
\begin{tabular}{cc}
\includegraphics[width=0.47\textwidth]{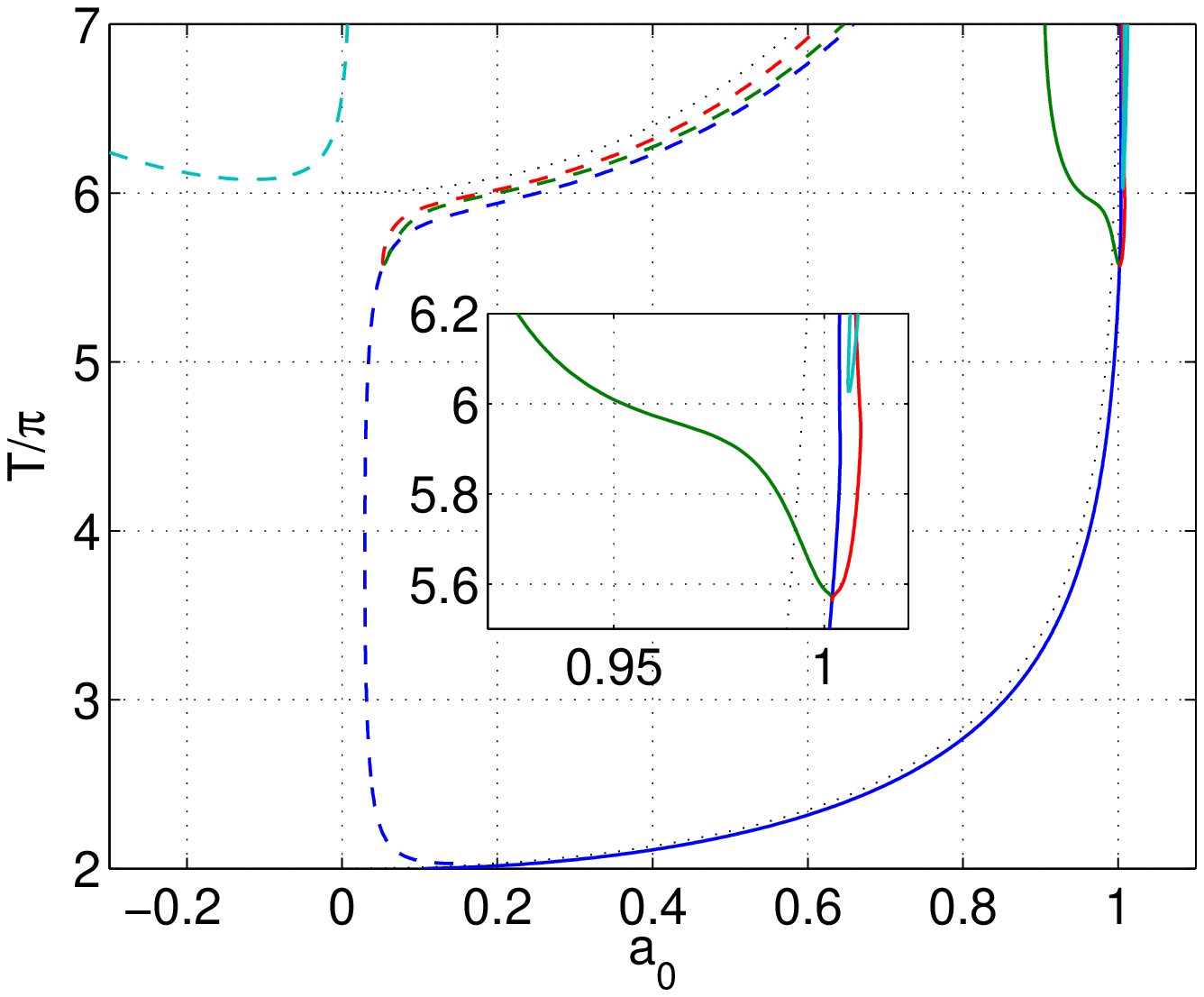}  & \includegraphics[width=0.47\textwidth]{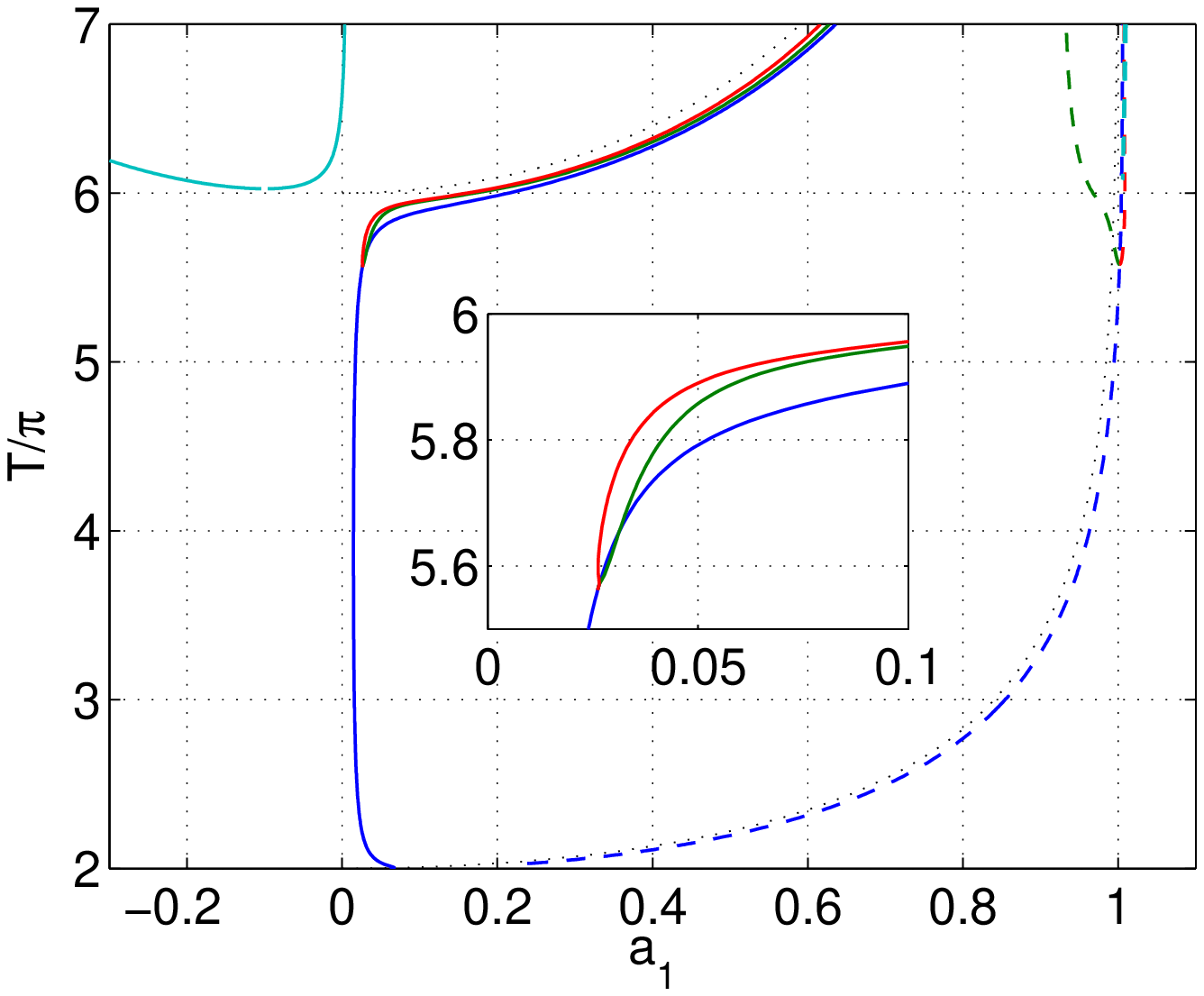}\tabularnewline
\end{tabular}
\par\end{centering}
\caption{The initial displacements $a_0$ and $a_1$ for the $T$-periodic solutions
to system (\ref{eq:dKG_trunc}) with $\epsilon=0.01$.
The solid and dashed lines correspond to the fundamental (\ref{fund-breat})
and two-site (\ref{hole-breat}) breathers respectively. The dotted lines correspond to the $T$-periodic
solutions to equation (\ref{nonlinear-oscillator}).
The insets show the pitchfork bifurcation of the fundamental breather.
\label{fig:breather_branches}}
\end{figure}
\begin{figure}
\begin{centering}
\begin{tabular}{cc}
\includegraphics[width=0.47\textwidth]{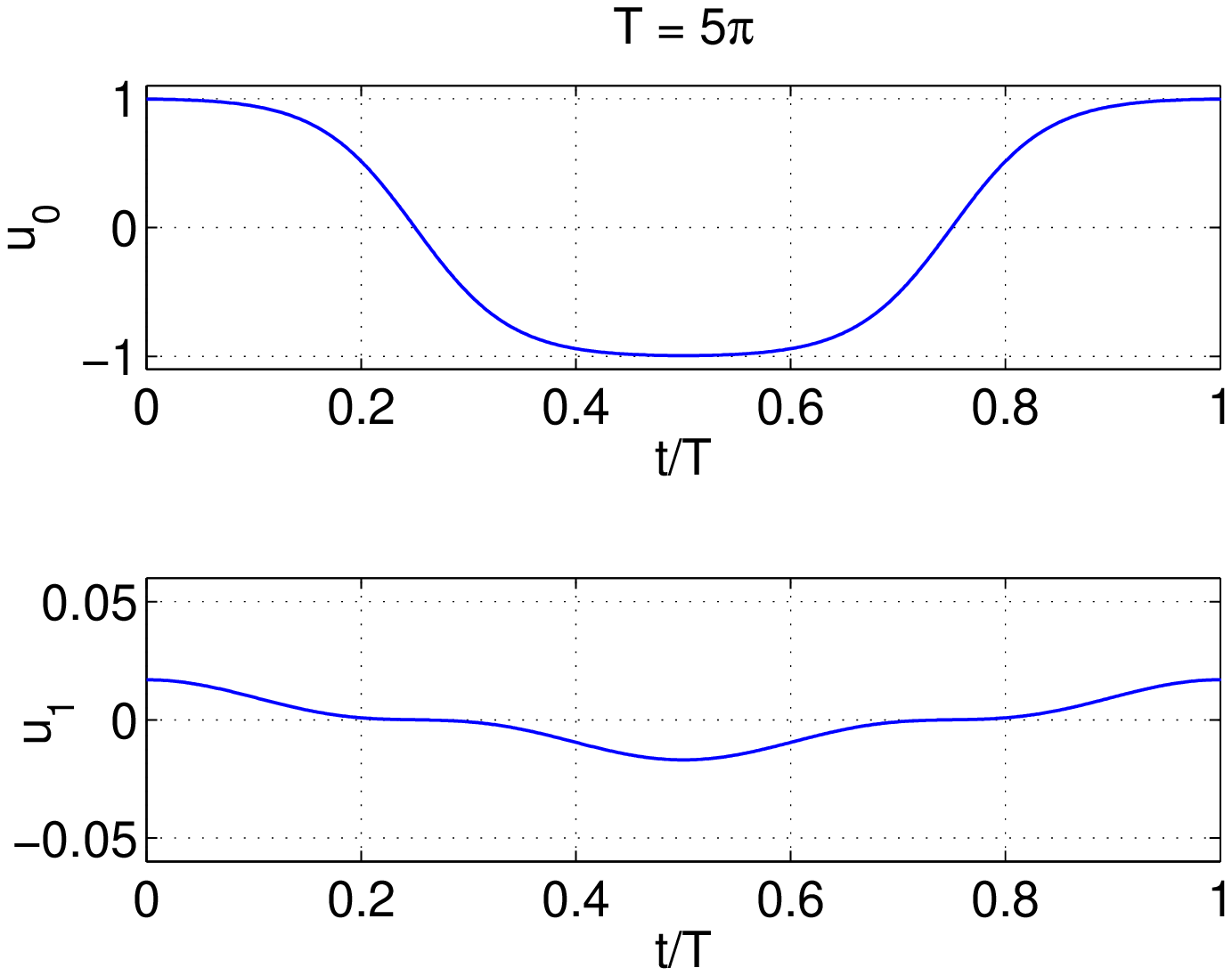}  & \includegraphics[width=0.47\textwidth]{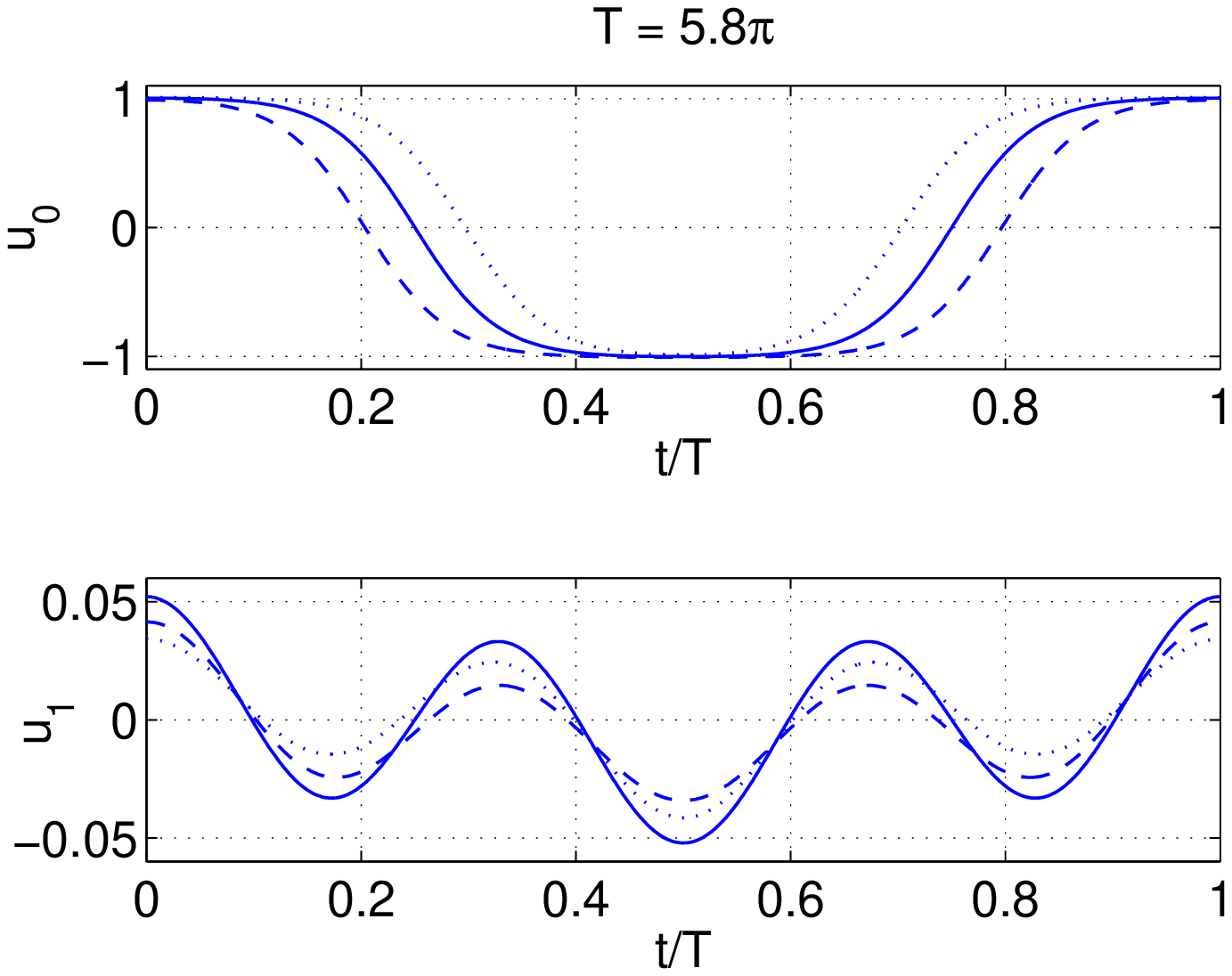}\tabularnewline
\end{tabular}
\par\end{centering}
\caption{Fundamental breathers for system (\ref{eq:dKG_trunc})
before (left) and after (right) the symmetry-breaking bifurcation
at $\epsilon=0.01$. \label{fig:solutions}}
\end{figure}

\begin{figure}
\begin{centering}
\begin{tabular}{cc}
\includegraphics[width=0.47\textwidth]{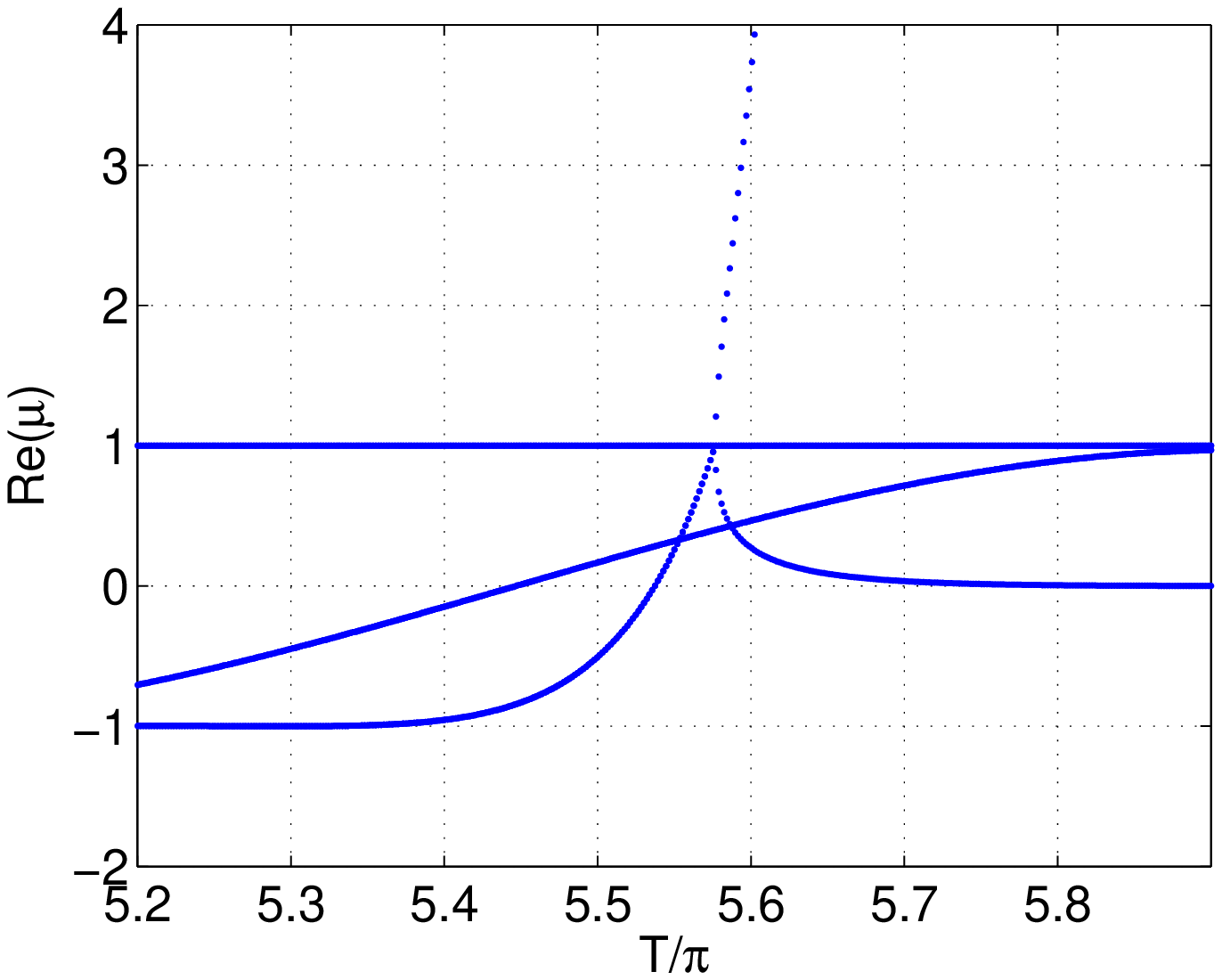} & \includegraphics[width=0.47\textwidth]{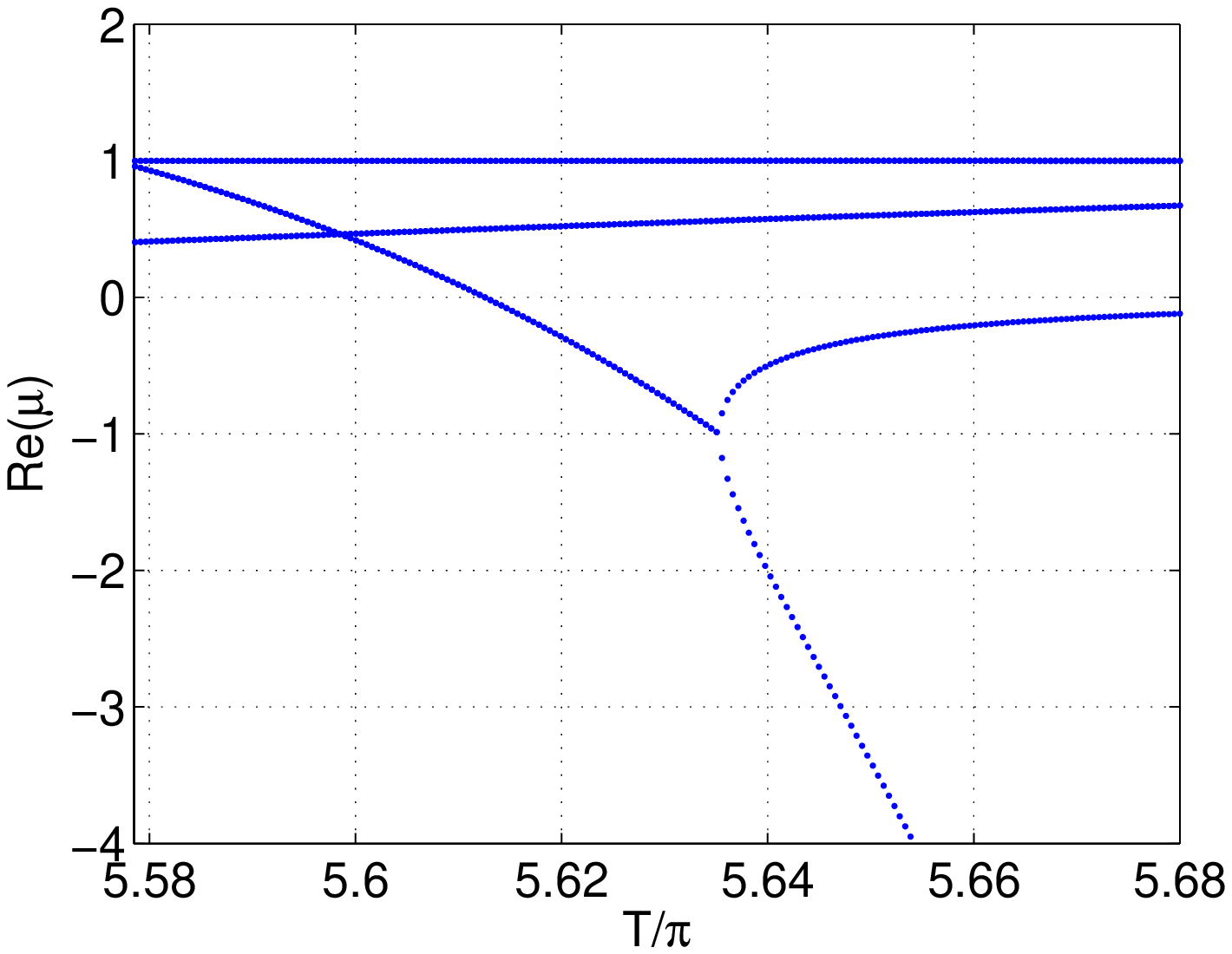}\tabularnewline
\end{tabular}
\par\end{centering}
\caption{Real parts of Floquet multipliers $\mu$ for the fundamental breather at
$\epsilon=0.01$ near the bifurcation for the main branch (left) and
side branches (right).
\label{fig:FM_010}}
\end{figure}
\begin{figure}
\begin{centering}
\begin{tabular}{cc}
\includegraphics[width=0.47\textwidth]{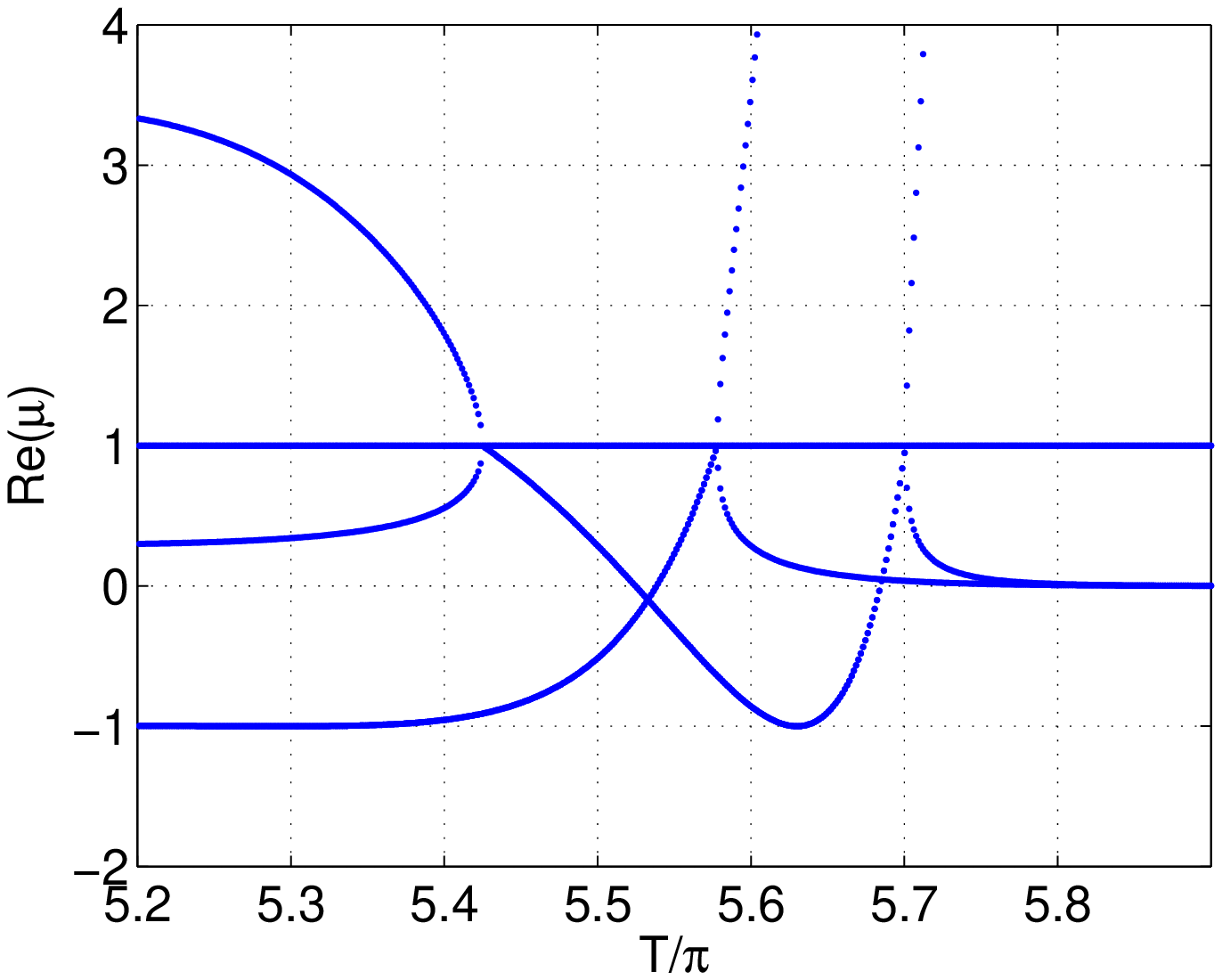} & \includegraphics[width=0.47\textwidth]{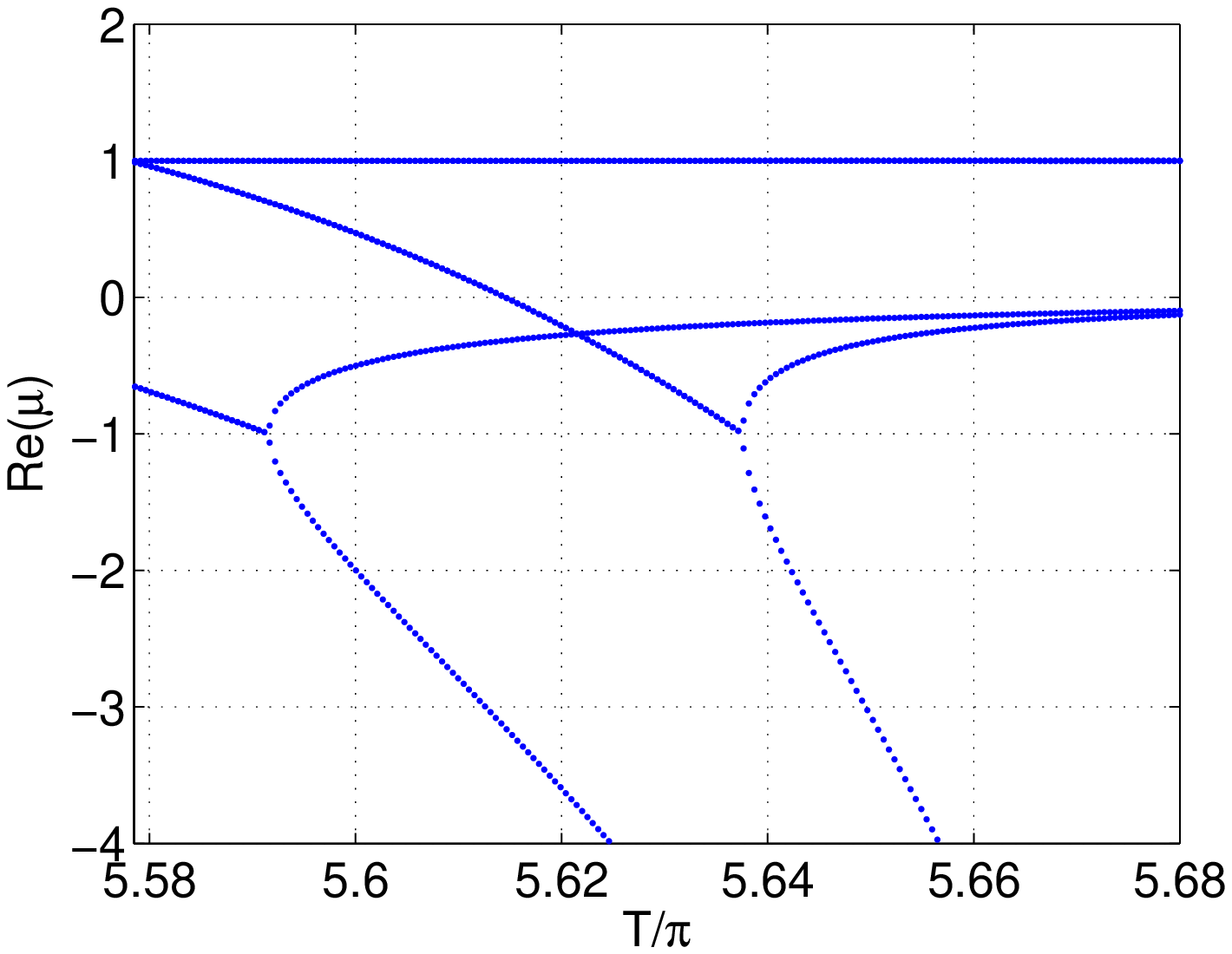}\tabularnewline
\end{tabular}
\par\end{centering}
\caption{Real parts of Floquet multipliers $\mu$ for the two-site breather with a
hole at $\epsilon=0.01$ near the bifurcation for the main branch
(left) and side branches (right). \label{fig:FM_101}}
\end{figure}

We perform similar computations for the two-site breather with the
central hole (\ref{hole-breat}). Figure \ref{fig:FM_101} (left) shows that
at the coupling $\epsilon=0.01$ the breather is unstable for periods
$2\pi < T < T_{*}^{(\epsilon)}$ and stable for periods $T \gtrapprox T_{*}^{(\epsilon)}$
with $T_{*}^{(\epsilon)}\approx5.425\pi$ for $\epsilon = 0.01$. This can be compared using
the change of stability predicted by Theorem \ref{theorem-eigenvalue}.
According to equation (\ref{formula-K-N}), $K_{2}$ changes sign
from positive to negative at $T_{N = 2} \approx5.476\pi$. Since $T'(E)$
is positive for the soft potential, Theorem \ref{theorem-eigenvalue}
predicts that in the anti-continuum limit the two-site breather is
unstable for $2 \pi < T < T_{N = 2}$ and stable for $T_{N = 2} < T < 6\pi$.
This change of stability agrees with Figure \ref{fig:FM_101} where we note that
$|T_{*}^{(\epsilon)}-T_{N = 2}|\approx0.05\pi$ at $\epsilon=0.01$.

At $T\approx5.6\pi$ and $T\approx5.7\pi$, two bifurcations occur
for the two-site breather with the central hole
and unstable multipliers bifurcate from the unit multiplier for larger
values of $T$. The behavior of Floquet multipliers is similar to
the one on Figure \ref{fig:FM_010} (left) and it marks two
consequent pitchfork bifurcations for the two-site breather with the hole.
The first bifurcation
is visible on Figure \ref{fig:breather_branches} in the space of
symmetric two-site breathers with $u_{-1}(t)=u_{1}(t)$.
The Floquet multipliers for the side branches of these symmetric two-site breathers is
shown on Figure \ref{fig:FM_101} (right), where we can see two consequent period-doubling
bifurcations in comparison with one such bifurcation on Figure \ref{fig:FM_010} (right). The second
bifurcation is observed in the space of asymmetric two-site breathers
with $u_{-1}(t)\neq u_{1}(t)$.

We display the two pitchfork bifurcations
on the top panel of Figure \ref{fig:asym_breathers}. One can see for the second
bifurcation that the value of $a_{0}$ is the same for both breathers splitting
of the main solution branch. Although the values of $a_{-1}$ and $a_1$
look same for the second bifurcation, dashed and dotted lines indicate that $a_1$
is greater than $a_{-1}$ at one asymmetric branch and vice versa for the other one.
The bottom panels of Figure \ref{fig:asym_breathers} show
the asymmetric breathers with period $T=5.75\pi$ that appear as a result
of the second pitchfork bifurcation.

\subsection{Five-site model}

We can now truncate the discrete Klein--Gordon equation (\ref{KGlattice})
at five lattice sites, e.g. at $n\in\{-2,-1,0,1,2\}$. The fundamental
breather (\ref{fund-breat}) and the breather with a central hole
(\ref{hole-breat}) are continued in the five-site lattice subject to
the symmetry conditions $u_{n}(t)=u_{-n}(t)$ for $n=1,2$. We would
like to illustrate that increasing the size of the lattice does not
qualitatively change the previous existence and stability results,
in particular, the properties of the pitchfork bifurcations.

Figure \ref{fig:gap_five-node} gives
analogues of Figure \ref{fig:breather_branches} for the fundamental
breather and the breather with a hole. The associated
Floquet multipliers are shown on Figure \ref{fig:FMs_gap},
in full analogy with  Figures \ref{fig:FM_010} and \ref{fig:FM_101}.
We can see that both existence and stability results are analogous
between the three-site and five-site lattices.

\begin{figure}
\begin{centering}
\begin{tabular}{ccc}
\includegraphics[width=0.32\textwidth]{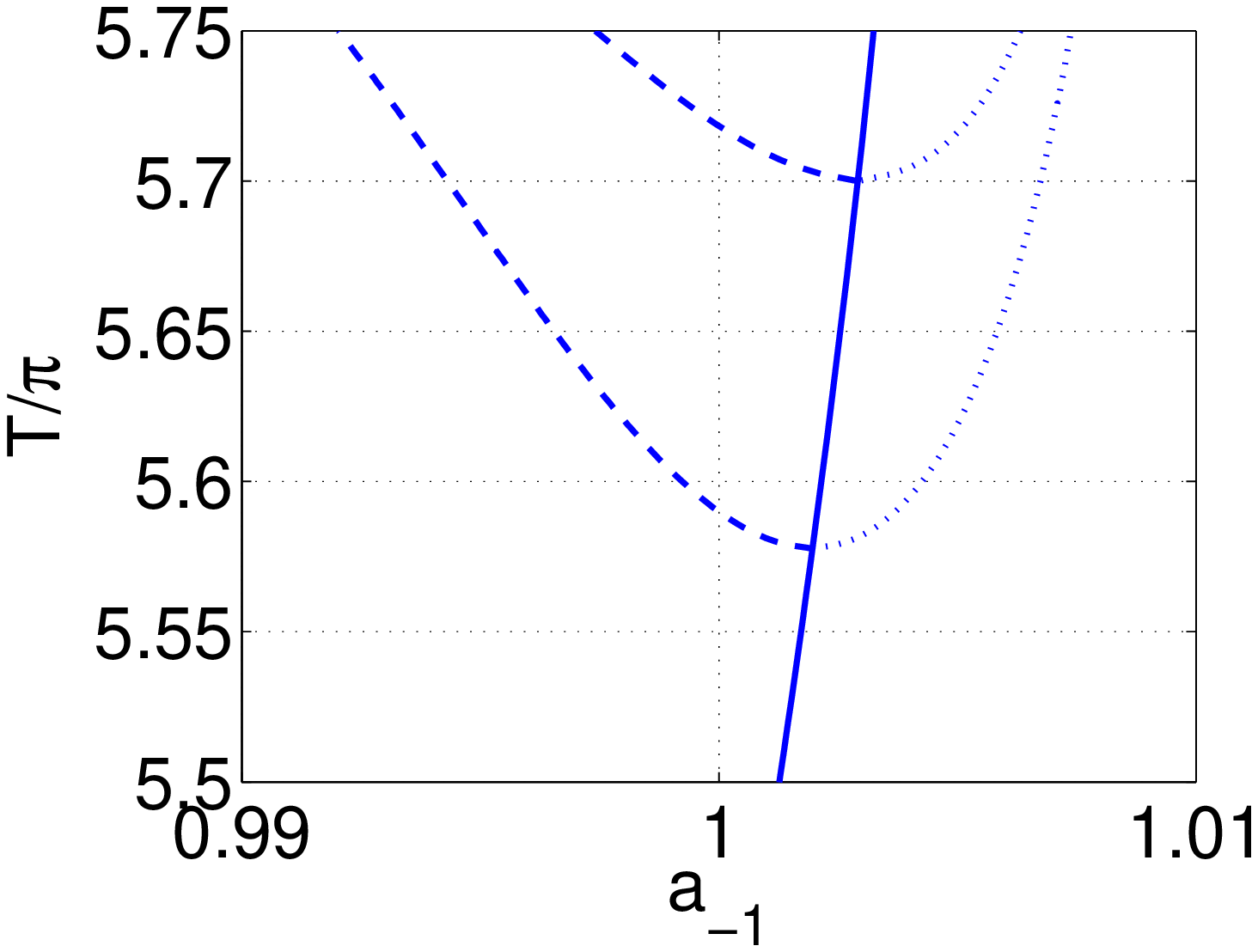}  &
\includegraphics[width=0.32\textwidth]{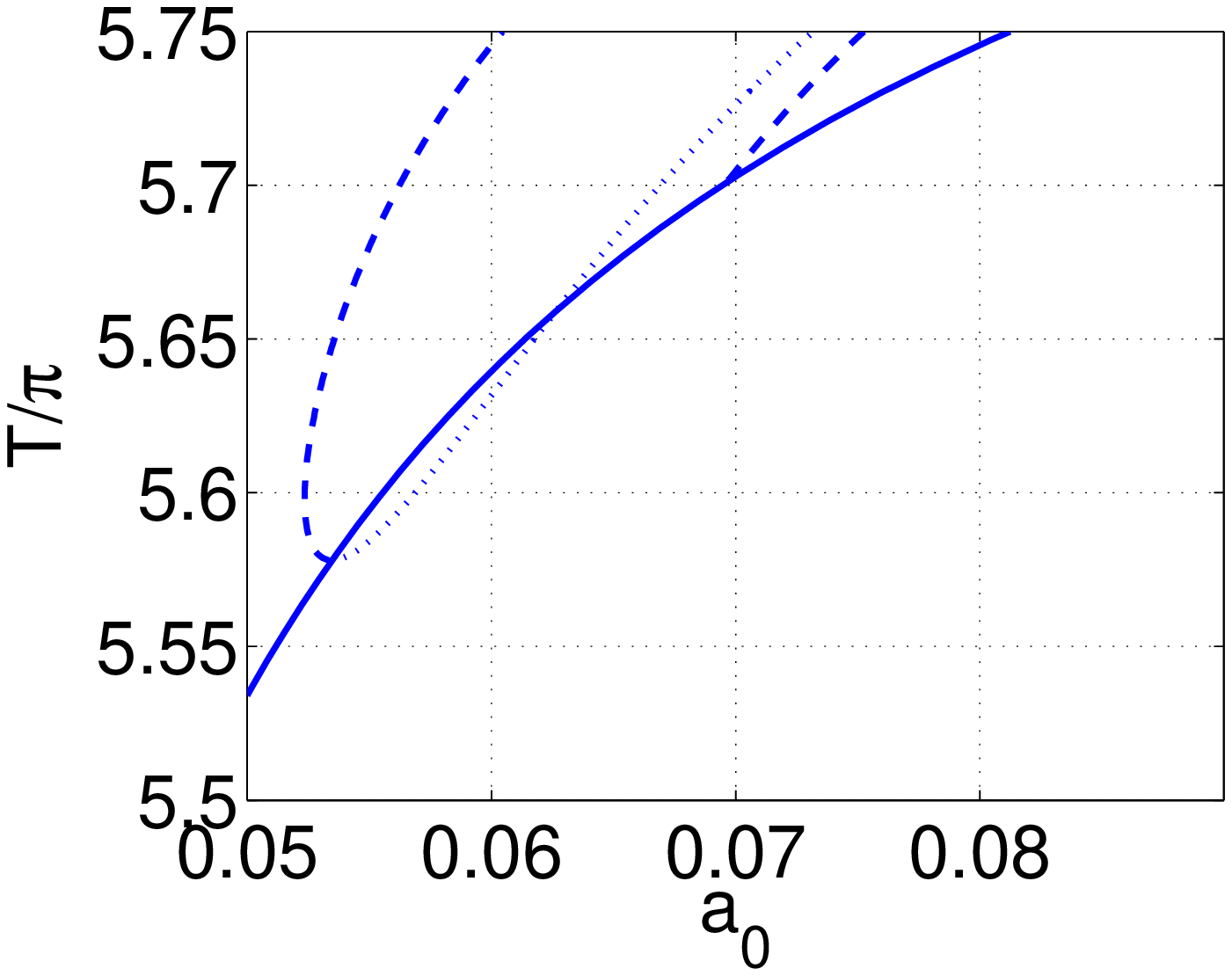}  & \includegraphics[width=0.32\textwidth]{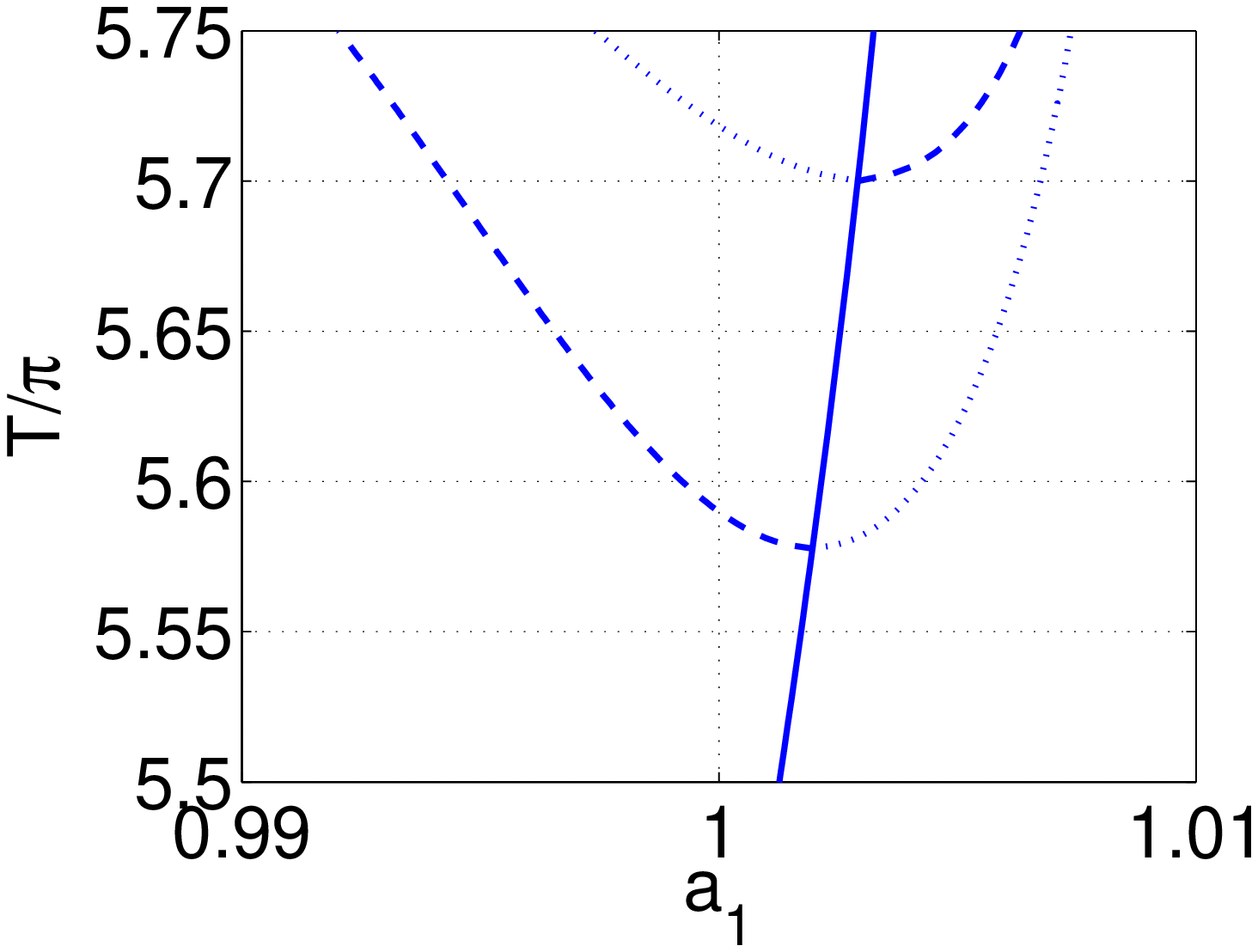}\tabularnewline
\includegraphics[width=0.29\textwidth]{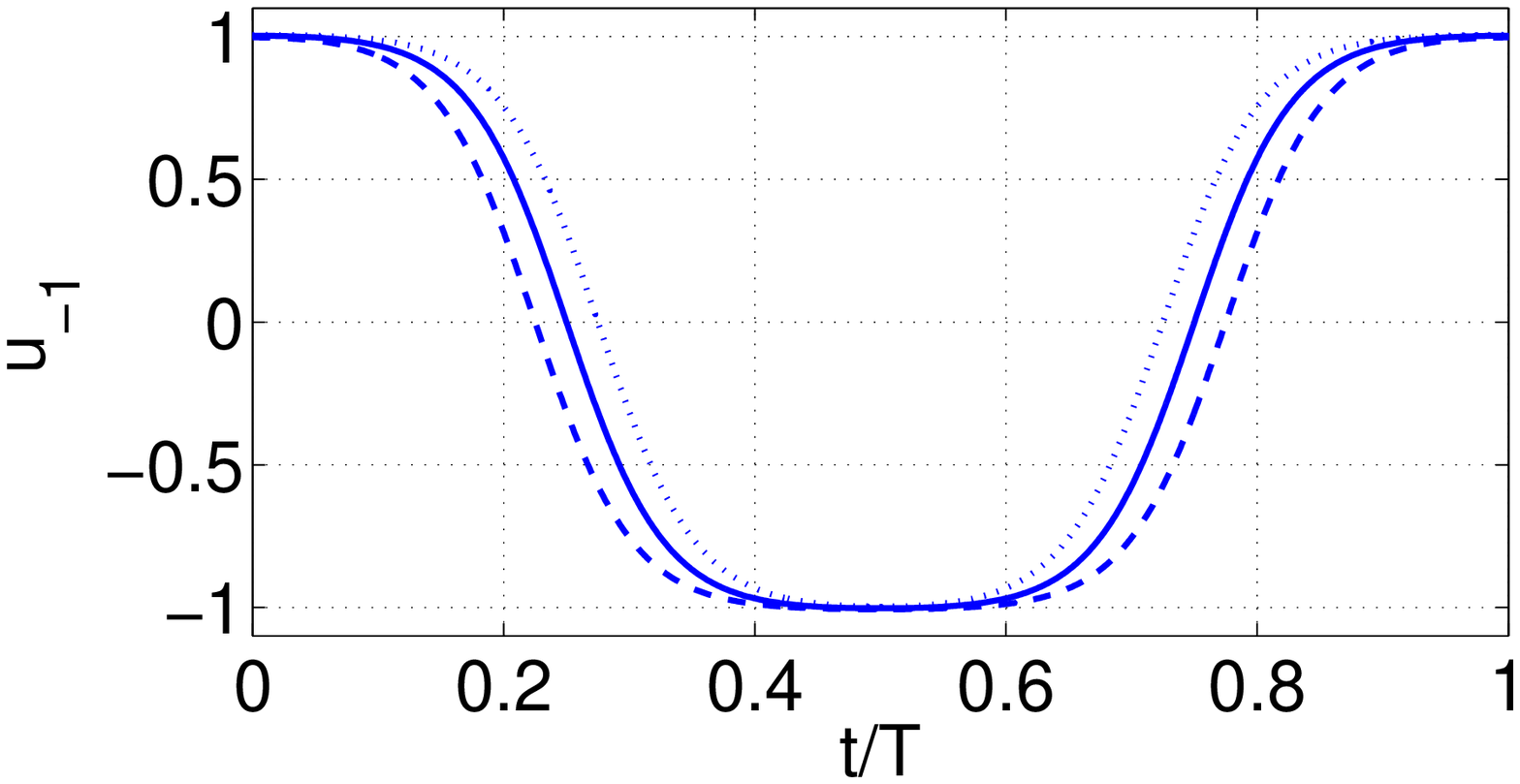} &
\includegraphics[width=0.29\textwidth]{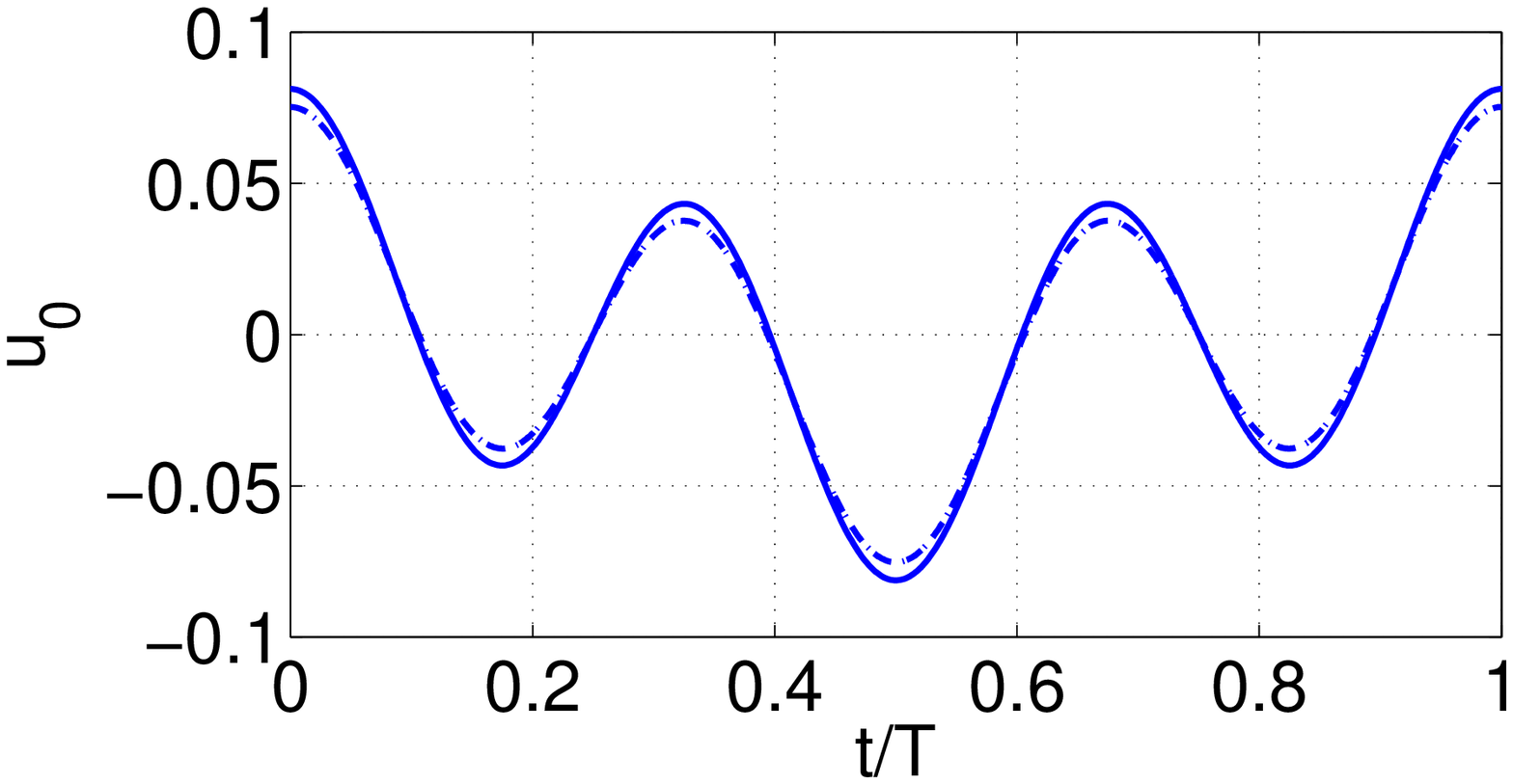} & \includegraphics[width=0.29\textwidth]{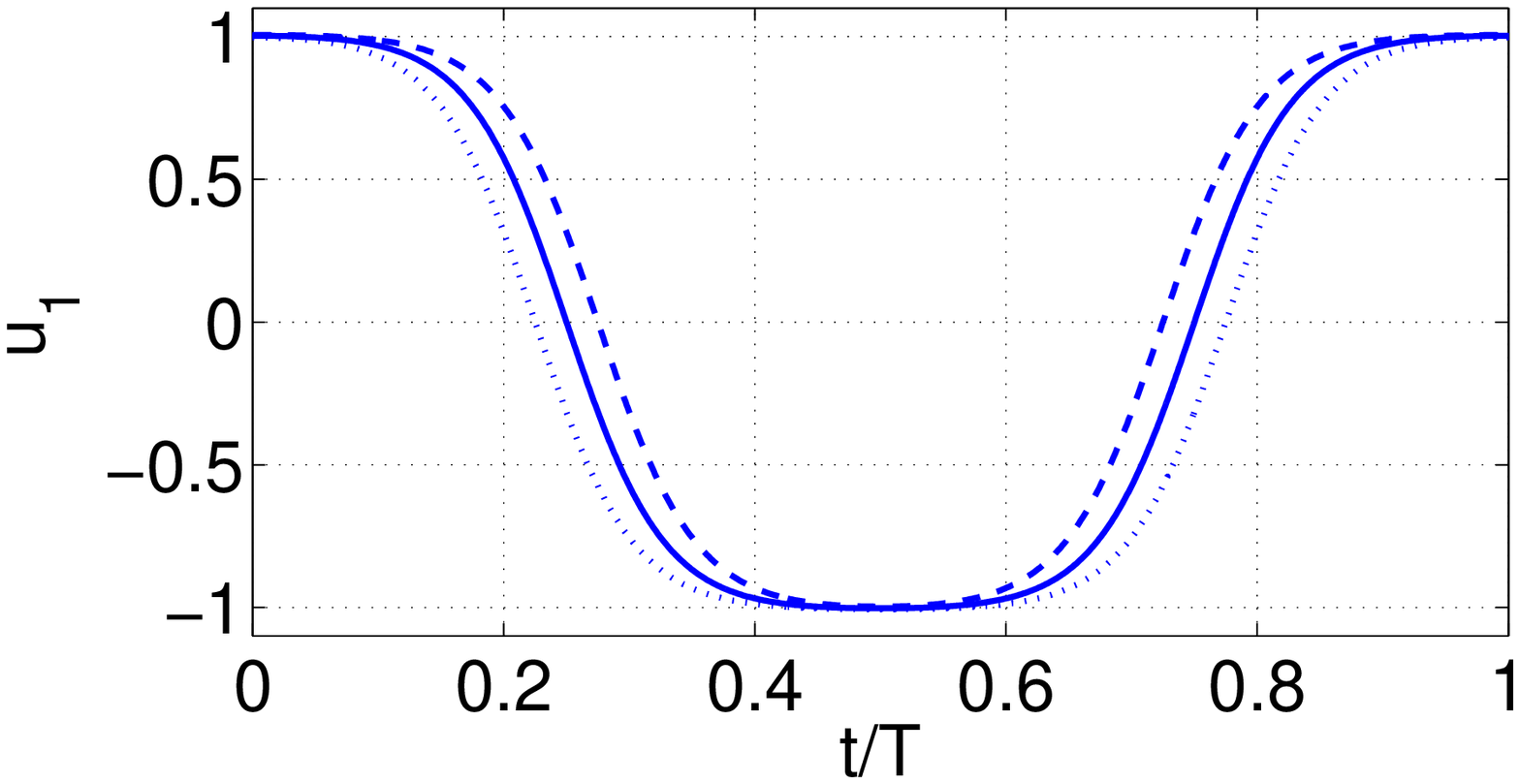}\tabularnewline
\end{tabular}
\par\end{centering}
\caption{Top: The initial displacements $a_{-1}$, $a_0$, and $a_1$
for the $T$-periodic breather with a hole
on the three-site lattice with $\epsilon=0.01$.
Bottom: Asymmetric breathers with period $T=5.75\pi$ on the three-site
lattice with $\epsilon=0.01$.
\label{fig:asym_breathers}}
\end{figure}
\begin{figure}
\begin{centering}
\begin{tabular}{ccc}
\includegraphics[width=0.32\textwidth]{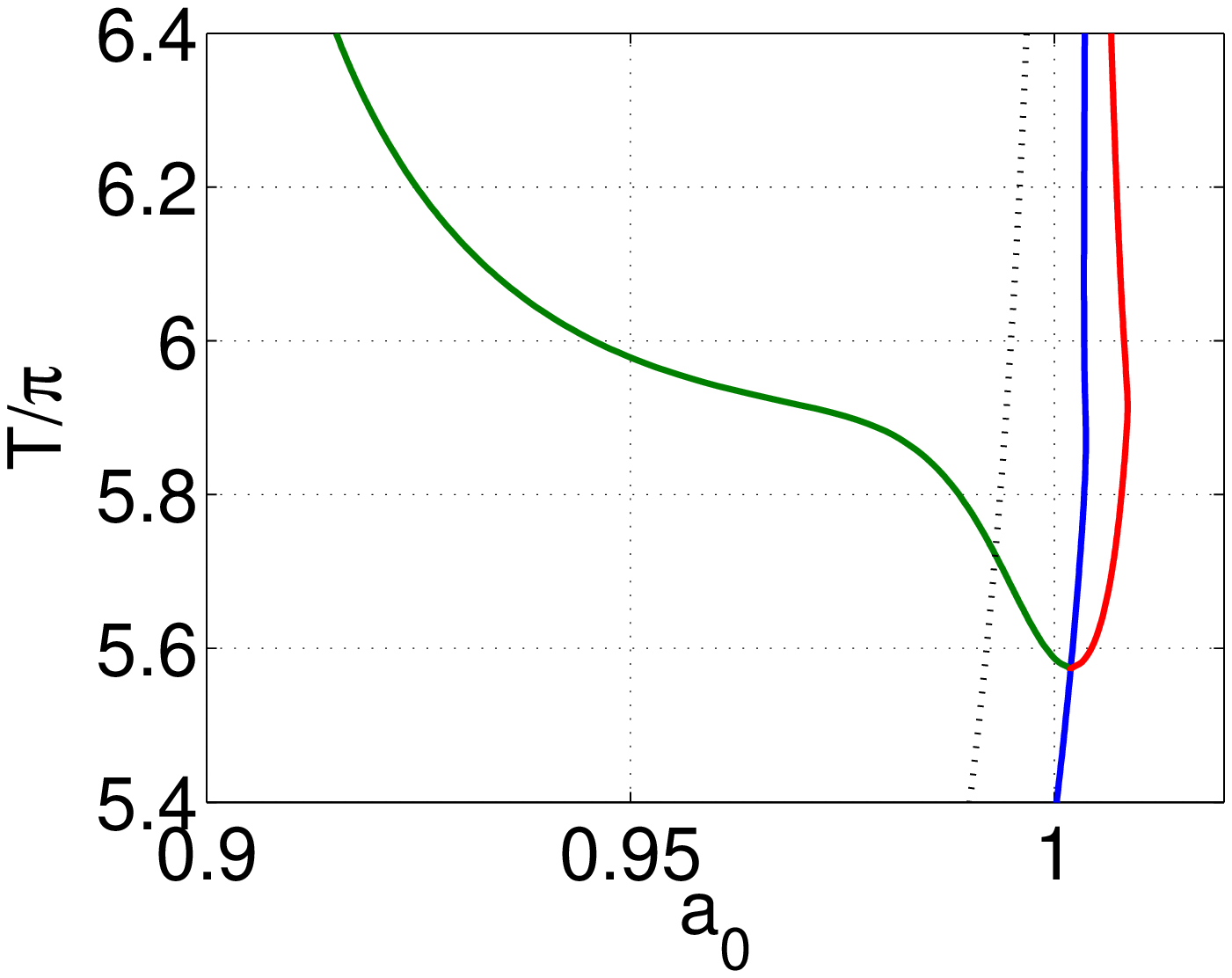}  &
\includegraphics[width=0.32\textwidth]{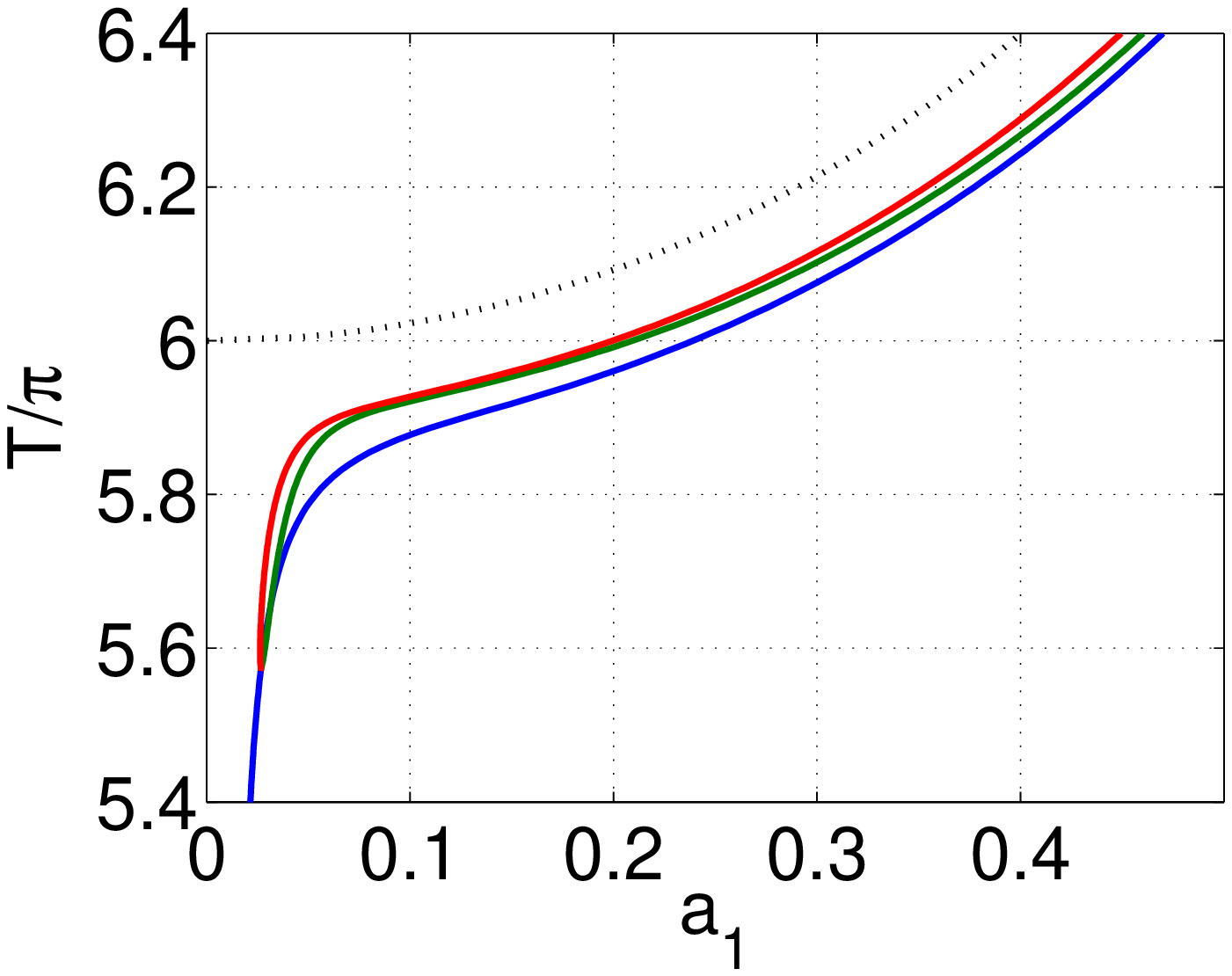}  & \includegraphics[width=0.32\textwidth]{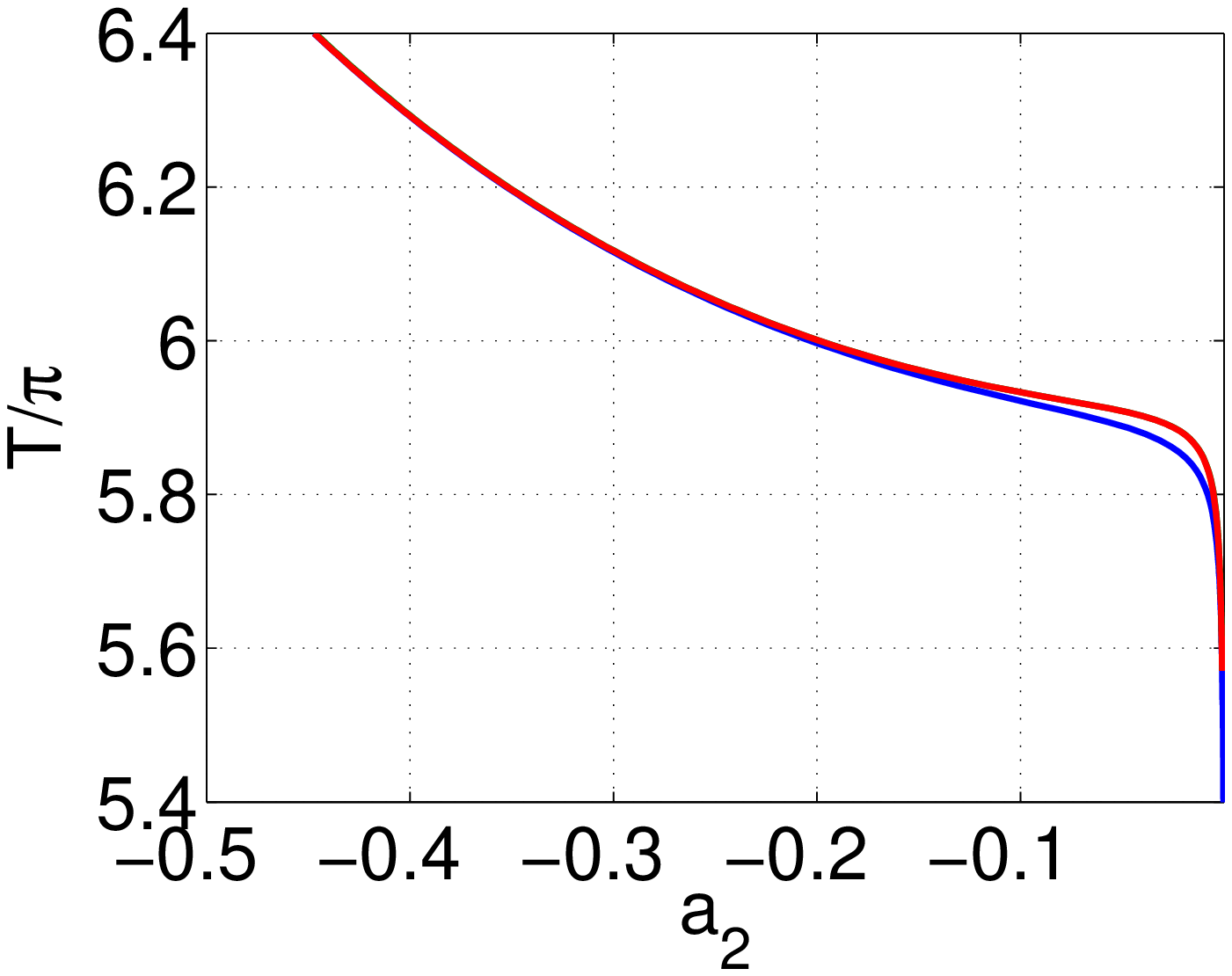}\tabularnewline
\includegraphics[width=0.32\textwidth]{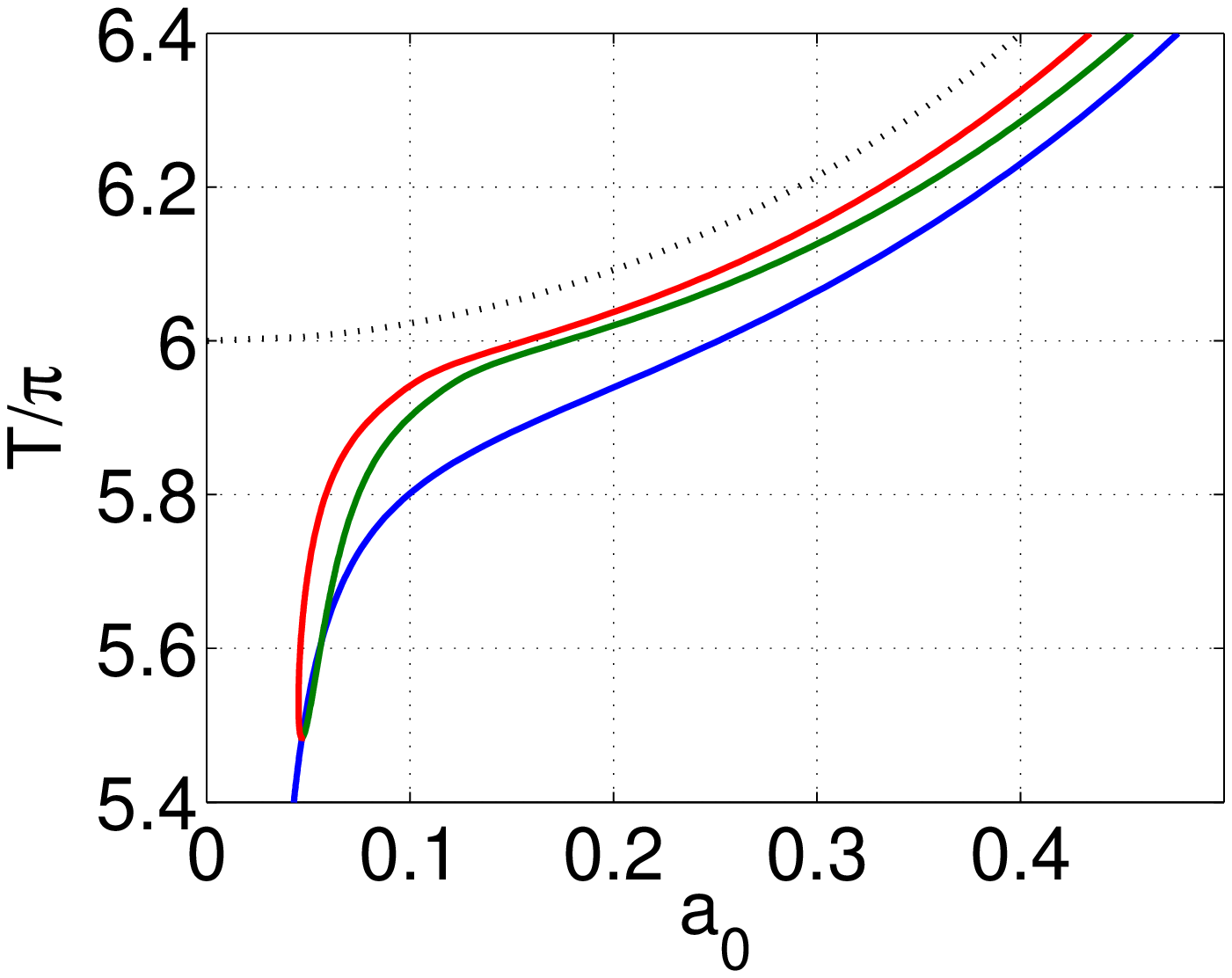}  &
\includegraphics[width=0.32\textwidth]{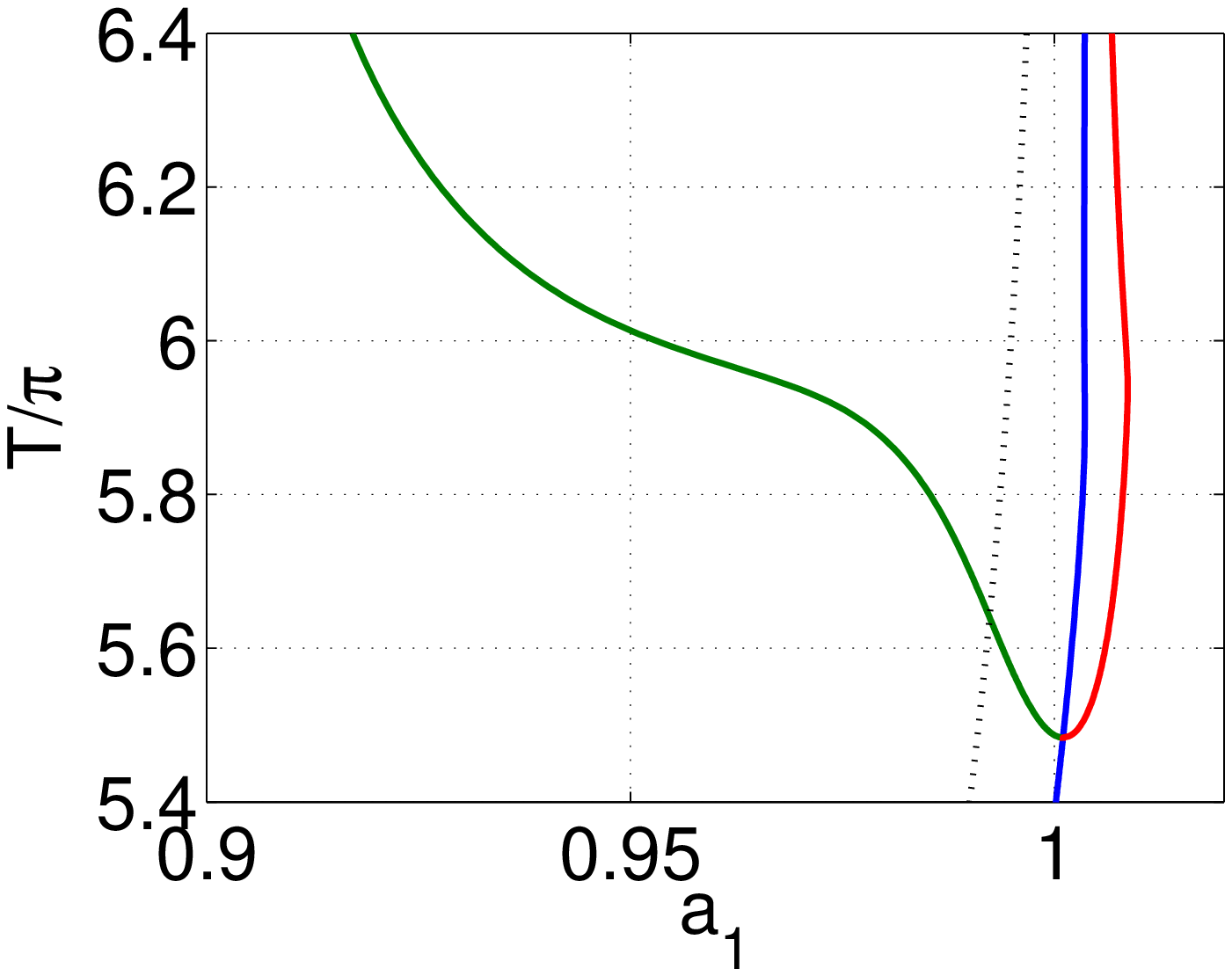}  & \includegraphics[width=0.32\textwidth]{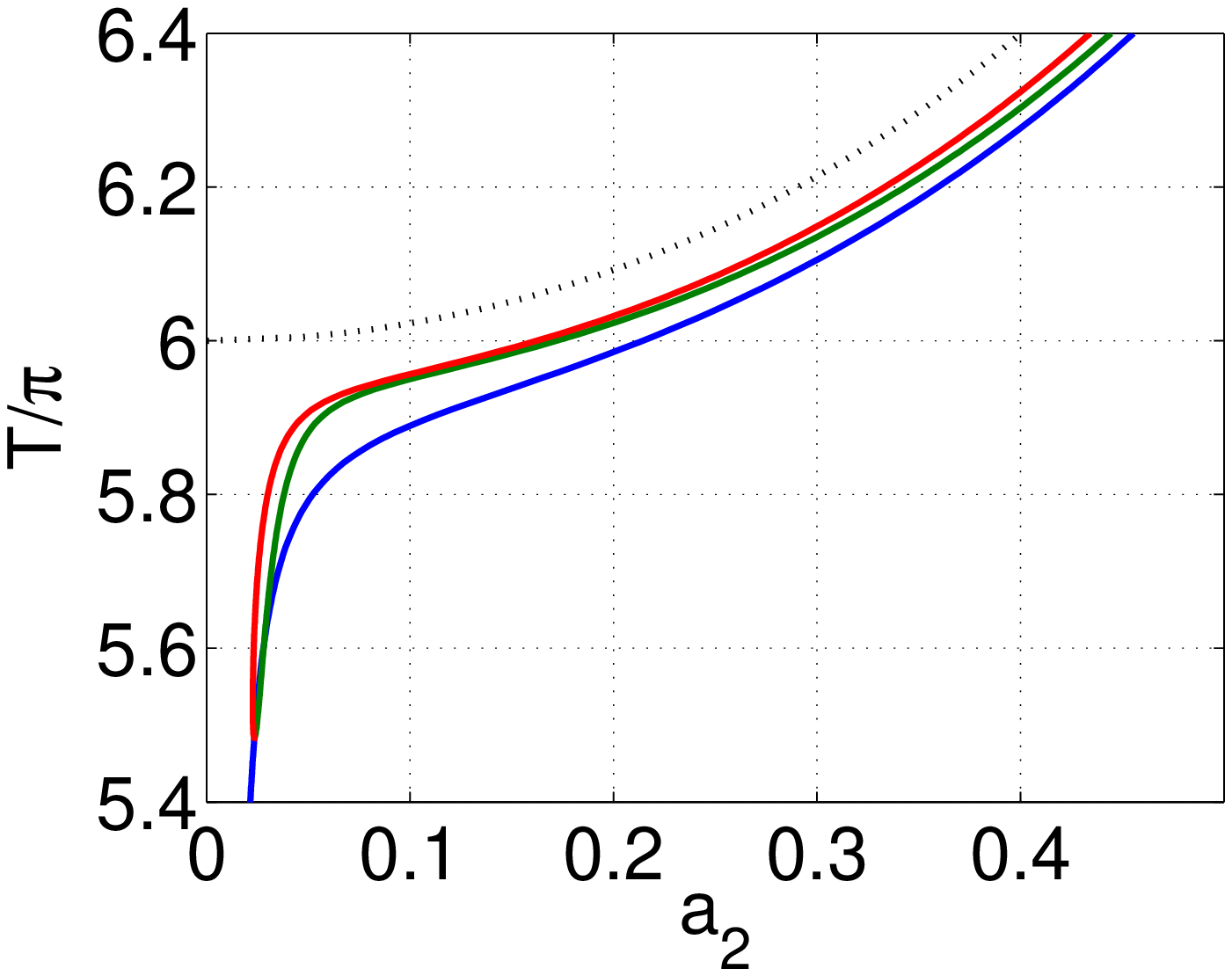}\tabularnewline
\end{tabular}
\par\end{centering}
\caption{Top: The initial displacements $a_0$, $a_1$, and $a_2$ for the $T$-periodic
fundamental breather of the five-site
lattice with $\epsilon=0.01$.
Bottom: The same for the two-site breather with a hole.
The dotted lines correspond to the $T$-periodic
solutions to equation (\ref{nonlinear-oscillator}).
\label{fig:gap_five-node}}
\end{figure}

\begin{figure}
\begin{centering}
\begin{tabular}{cc}
\includegraphics[width=0.47\textwidth]{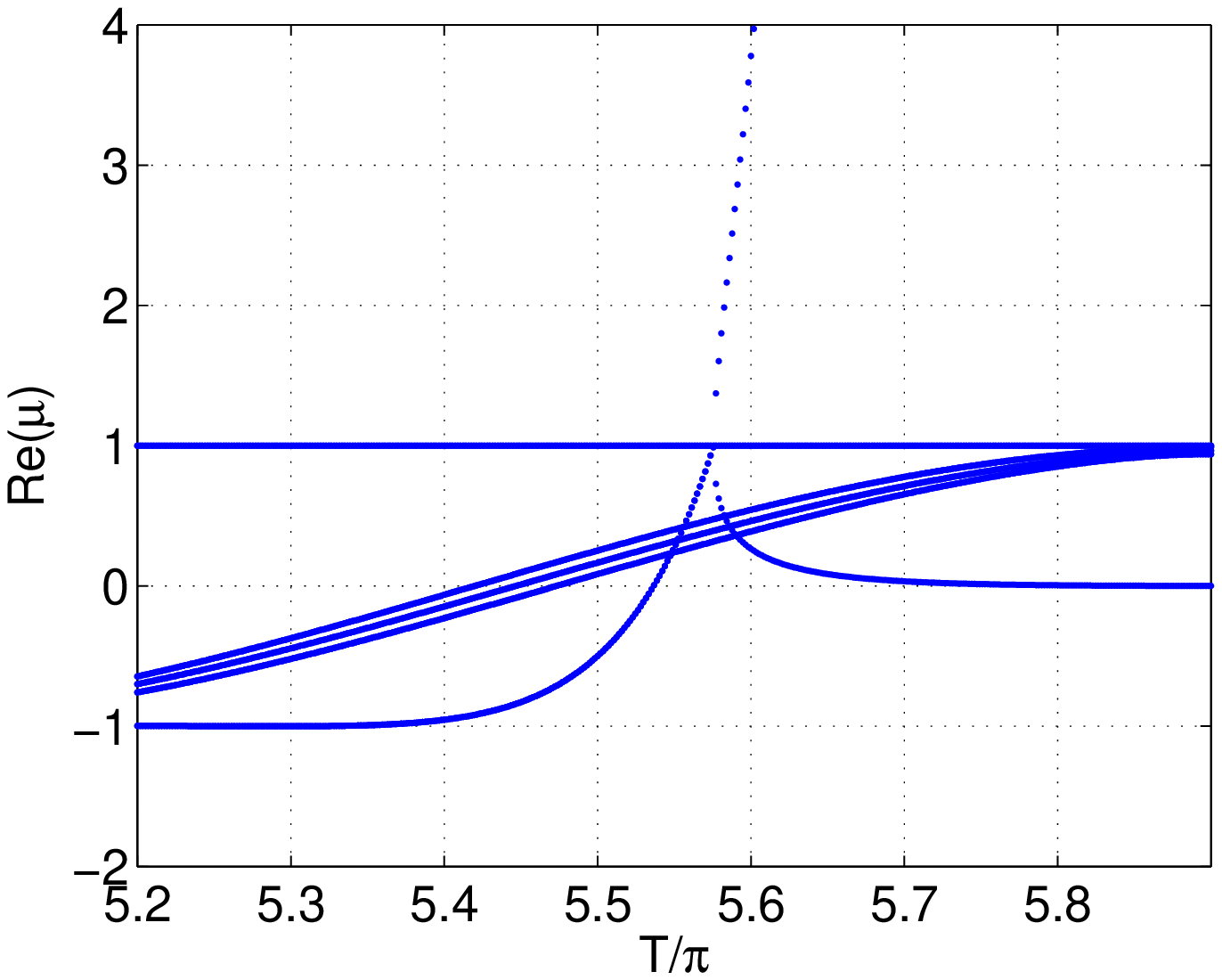} &
\includegraphics[width=0.47\textwidth]{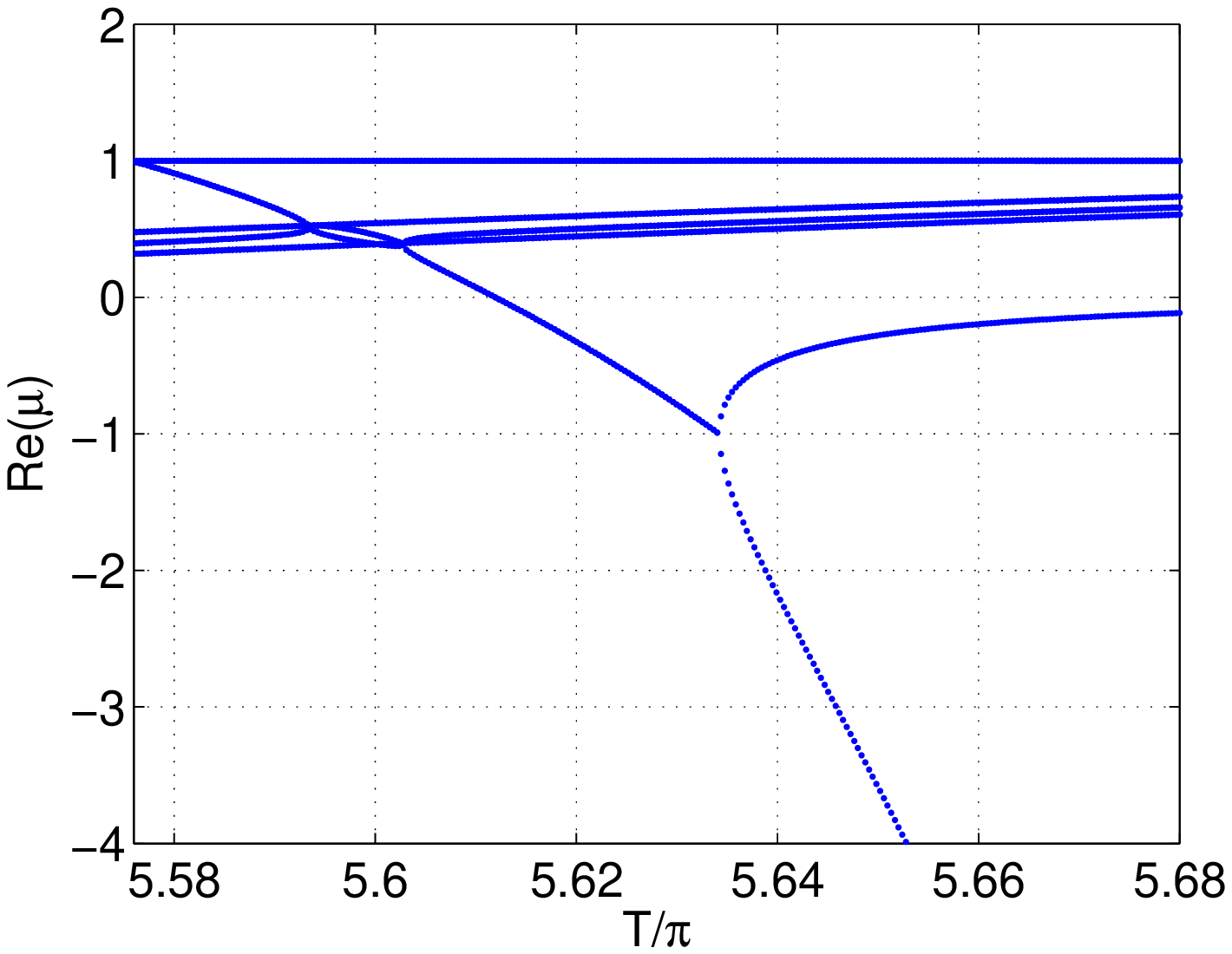}\tabularnewline
\includegraphics[width=0.47\textwidth]{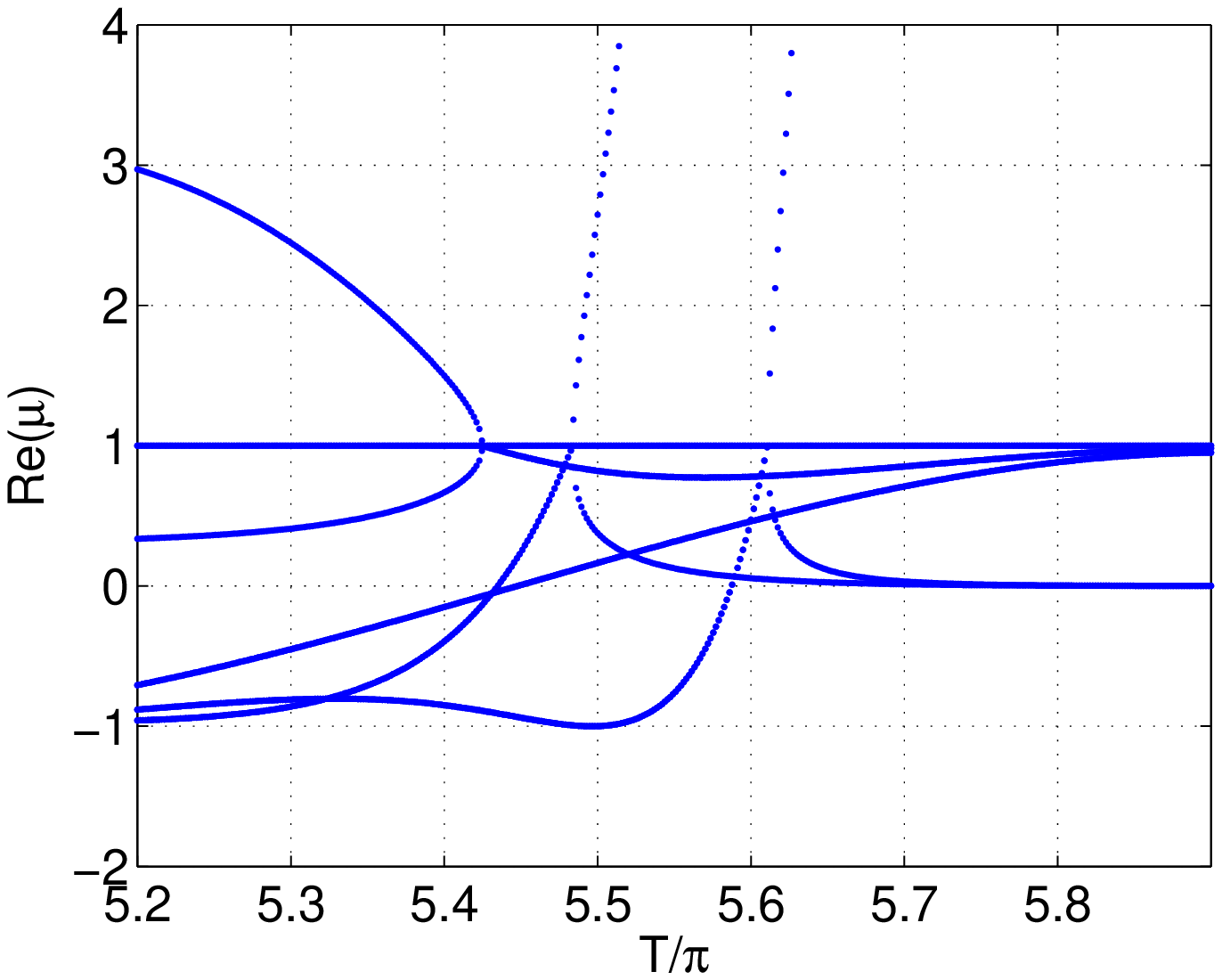} &
\includegraphics[width=0.47\textwidth]{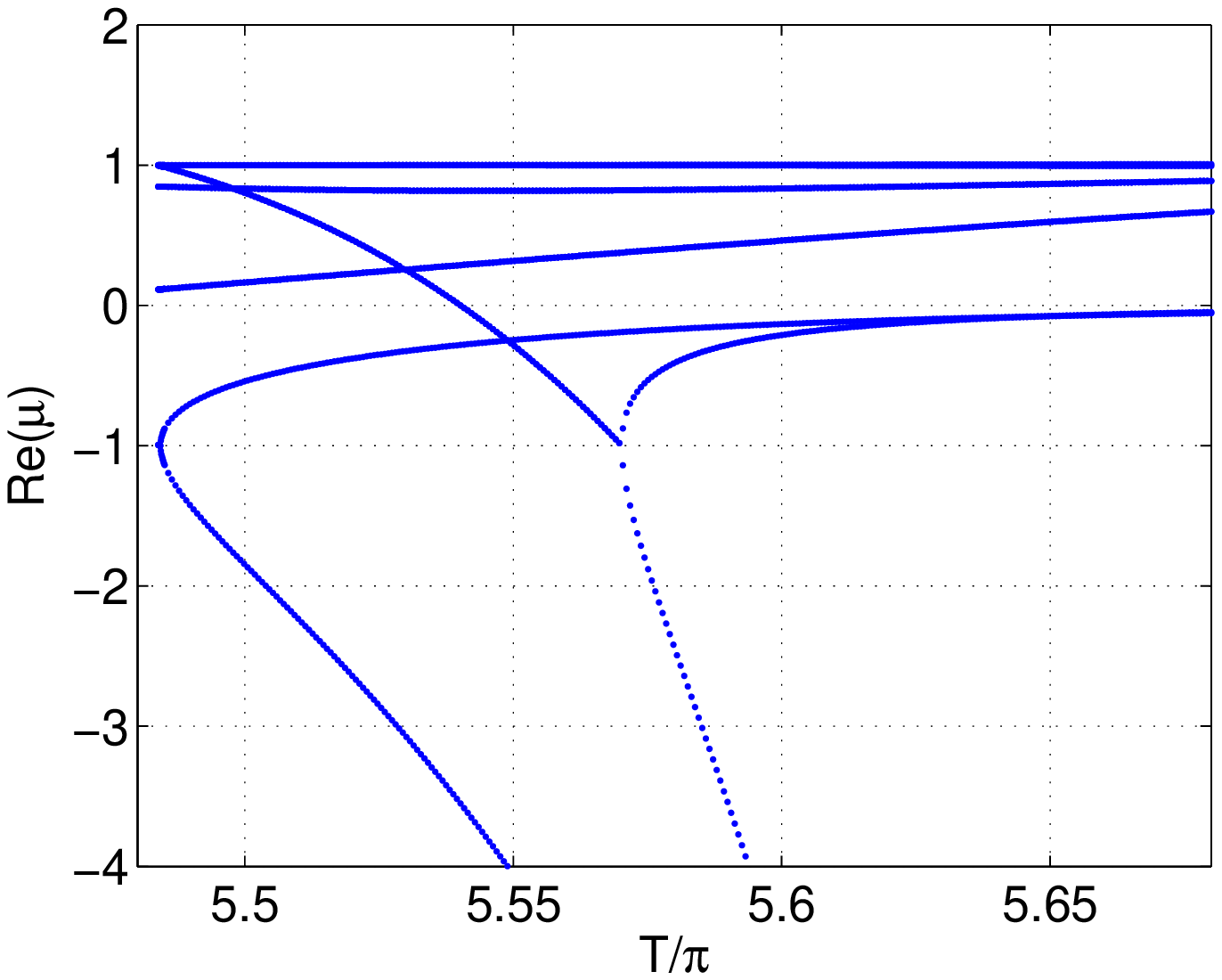}\tabularnewline
\end{tabular}
\par\end{centering}
\caption{Top: Real parts of Floquet multipliers $\mu$ for the fundamental breather
near the bifurcation for the main branch (left) and side branches (right).
Bottom: The same for the two-site breather with a hole.
\label{fig:FMs_gap}}
\end{figure}

\section{Pitchfork bifurcation near 1:3 resonance \label{sec:sym-breaking}}
\label{sec-resonance}

We study here the symmetry-breaking (pitchfork) bifurcation of the fundamental
breather. This bifurcation illustrated on Figure \ref{fig:solutions}
occurs for soft potentials near the point of 1:3 resonance, when the
period $T$ is close to $6\pi$. We point out that the period $T_S$
of the pitchfork bifurcation is close to $6\pi$ for small but
finite values of $\epsilon$. As we have discovered numerically, $T_S$ gets larger
as $\epsilon$ gets smaller. This property indicates that the asymptotic analysis of this bifurcation
is not uniform with respect to two small parameters $\epsilon$ and $T - 6\pi$,
which we explain below in more details.

When ${\bf u}=\mbox{\boldmath\ensuremath{\phi}}^{(\epsilon)}$ is
the fundamental breather and $T\neq2\pi n$ is fixed, Theorem \ref{theorem-continuum}
and Lemma \ref{lemma-tail-interaction} imply that
\begin{equation}
\left\{ \begin{array}{lclcl}
u_{0}(t) & = & \varphi(t)-2\epsilon\psi_{1}(t) & + & \mathcal{O}_{H_{{\rm per}}^{2}(0,T)}(\epsilon^{2}),\\
u_{\pm1}(t) & = & \phantom{texttext} \epsilon\varphi_{1}(t) & + & \mathcal{O}_{H_{{\rm per}}^{2}(0,T)}(\epsilon^{2}),\\
u_{\pm n}(t) & = & \phantom{t} & + & \mathcal{O}_{H_{{\rm per}}^{2}(0,T)}(\epsilon^{2}),\quad n\geq2,
\end{array}\right.
\label{fourier-series-0}
\end{equation}
where $\varphi$ can be expanded in the Fourier series,
\begin{equation}
\varphi(t)=\sum_{n\in\N_{{\rm odd}}}c_{n}(T)\cos\left(\frac{2\pi nt}{T}\right),
\label{fourier-series-1}
\end{equation}
and the Fourier coefficients $\{c_{n}(T)\}_{n\in\N_{{\rm odd}}}$
are uniquely determined by the period $T$. The correction terms $\varphi_{1}$
and $\psi_{1}$ are determined by the solution of the linear inhomogeneous
equations (\ref{inhomogen-1}) and (\ref{inhomogen-2}), in particular,
we have
\begin{equation}
\label{fourier-series-0a}
\varphi_{1}(t)=\sum_{n\in\N_{{\rm odd}}}\frac{T^{2}c_{n}(T)}{T^{2}-4\pi^{2}n^{2}}\cos\left(\frac{2\pi nt}{T}\right).
\end{equation}

In what follows, we restrict our consideration of soft potentials
to the case of the quartic potential $V'(u)=u-u^{3}$. We shall assume
that $c_3(6\pi) \neq 0$ and the numerical approximations suggest that
$c_3(6\pi) < 0$ for the quartic potential.

Expansion (\ref{fourier-series-0}) and solution (\ref{fourier-series-0a})
imply that if $T$ is fixed in $(2 \pi, 6\pi)$,
then $\|u_{\pm1}\|_{H_{{\rm per}}^{2}(0,T)}={\cal O}(\epsilon)$
and the cubic term $u_{\pm1}^{3}$ is neglected at the order $\mathcal{O}(\epsilon)$,
where the linear inhomogeneous equation (\ref{inhomogen-1}) is valid.
Near the resonant period $T=6\pi$, the norm $\|u_{\pm1}\|_{H_{{\rm per}}^{2}(0,T)}$
is much larger than ${\cal O}(\epsilon)$ if $c_3(6\pi) \neq 0$. As a result,
the cubic term $u_{\pm1}^{3}$ must be incorporated at the leading order
of the asymptotic approximation.

We shall reduce the discrete Klein--Gordon equation (\ref{KGlattice})
for the fundamental breather near $1 : 3$ resonance to a normal form
equation, which coincides with the nonlinear Duffing oscillator perturbed by a
small harmonic forcing (equation (\ref{normal-form-0}) below). The normal
form equation features the same properties of the pitchfork bifurcation
of $T$-periodic solutions as the discrete Klein--Gordon equation (\ref{KGlattice}).
To prepare for the reduction to the normal form equation, we
introduce the scaling transformation,
\begin{equation}
T=\frac{6\pi}{1+\delta\epsilon^{2/3}},\quad\tau=(1+\delta\epsilon^{2/3})t,\quad
u_{n}(t)=(1+\delta\epsilon^{2/3})U_{n}(\tau),
\label{scal-T}
\end{equation}
where $\delta$ is a new parameter, which is assumed to be $\epsilon$-independent.
The discrete Klein--Gordon equation (\ref{KGlattice}) with $V'(u) = u - u^3$
can be rewritten in new variables (\ref{scal-T}) as follows,
\begin{equation}
\ddot{U}_{n}+U_{n}-U_{n}^{3}=\beta U_{n}+\gamma(U_{n+1}+U_{n-1}),\quad n\in\mathbb{Z},
\label{KGlattice-tilded}
\end{equation}
where
\begin{equation}
\beta=1-\frac{1+2\epsilon}{(1+\delta\epsilon^{2/3})^{2}},\quad
\gamma=\frac{\epsilon}{(1+\delta\epsilon^{2/3})^{2}}.
\label{scaling-normal-form}
\end{equation}
$T$-periodic solutions of the discrete Klein--Gordon equation (\ref{KGlattice})
in variables $\{u_{n}(t)\}_{n\in\Z}$ become
now $6\pi$-periodic solutions of the rescaled Klein--Gordon equation
(\ref{KGlattice-tilded}) in variables $\{U_{n}(\tau)\}_{n\in\Z}$.
To reduce the system of Klein--Gordon equation (\ref{KGlattice-tilded})
to the Duffing oscillator perturbed by a small harmonic forcing near 1:3 resonance,
we consider the fundamental breather, for which $U_n = U_{-n}$ for all $n \in \mathbb{N}$.
Using this reduction, we write equations (\ref{KGlattice-tilded})
separately at $n = 0$, $n = 1$, and $n \geq 2$:
\begin{eqnarray}
\label{system-0}
\ddot{U}_{0}+U_{0}-U_{0}^{3} & = & \beta U_{0}+ 2 \gamma U_{1}, \\
\label{system-1}
\ddot{U}_{1}+U_{1}-U_{1}^{3} & = & \beta U_{1} + \gamma U_{2} + \gamma U_{0}, \\
\label{system-2}
\ddot{U}_{n}+U_{n}-U_{n}^{3} & = & \beta U_{n}+\gamma(U_{n+1}+U_{n-1}), \quad n \geq 2.
\end{eqnarray}

Let us represent a $6\pi$-periodic function $U_{0}$ with the symmetries
\begin{equation}
\label{sym-U-0}
U_0(-\tau) = U_0(\tau) = -U_0(3 \pi -\tau), \quad \tau \in \R,
\end{equation}
by the Fourier series,
\begin{equation}
U_{0}(\tau) = \sum_{n\in\N_{{\rm odd}}} b_{n} \cos\left(\frac{n\tau}{3}\right),
\label{fourier-series-2}
\end{equation}
where $\{b_{n}\}_{n\in\N_{{\rm odd}}}$ are some Fourier coefficients. If
$U_0$ converges to $\varphi$ in $H^2$-norm as $\epsilon \to 0$ (when
$\beta, \gamma \to 0$), then $b_n \to c_n(6\pi)$ as $\epsilon \to 0$ for all $n \in \N_{\rm odd}$, where
the Fourier coefficients $\{c_{n}(6\pi)\}_{n\in\N_{{\rm odd}}}$ are uniquely defined by
the Fourier series (\ref{fourier-series-1}) for $T = 6\pi$. We assume
again that $c_3(6\pi) \neq 0$ and $\delta$ is fixed independently of small $\epsilon > 0$.

We shall now use a Lyapunov--Schmidt reduction method to show that the components
$\{ U_n \}_{n \in \N}$ are uniquely determined from the system (\ref{system-1})--(\ref{system-2})
for small $\epsilon > 0$ if $U_0$ is represented by the Fourier series (\ref{fourier-series-2}).
To do so, we decompose the solution into two parts:
\begin{equation}
U_{n}(\tau)= A_{n} \cos(\tau) + V_{n}(\tau),\quad n\in\N,
\label{substitution-U-V}
\end{equation}
where $V_n(\tau)$ is orthogonal to $\cos(\tau)$ in the sense
$\langle V_n, \cos(\cdot) \rangle_{L^2_{\rm per}(0,6\pi)} = 0$.
Projecting the system (\ref{system-1})--(\ref{system-2}) to
$\cos(\tau)$, we obtain a difference equation for $\{ A_n \}_{n \in \N}$:
\begin{eqnarray}
\label{system-1-A}
\beta A_{1} + \gamma A_{2} + \gamma b_{3} & = & -\frac{1}{3\pi} \int_0^{6\pi} \cos(\tau) (A_1 \cos(\tau) + V_1(\tau) )^3 d \tau, \\
\label{system-2-A}
\beta A_{n} + \gamma (A_{n+1} + A_{n-1}) & = & -\frac{1}{3\pi} \int_0^{6\pi} \cos(\tau) (A_n \cos(\tau) + V_n(\tau) )^3 d \tau, \quad n \geq 2.
\end{eqnarray}
Projecting the system (\ref{system-1})--(\ref{system-2}) to the orthogonal complement of
$\cos(\tau)$, we obtain a lattice differential equation for $\{ V_n(\tau) \}_{n \in \N}$:
\begin{eqnarray}
\nonumber
\ddot{V}_{1}+V_{1} & = & \beta V_{1} + \gamma V_{2} + \gamma \sum_{k\in\N_{{\rm odd}}\backslash\{3\}}b_{k}\cos\left(\frac{k\tau}{3}\right) \\
\label{system-1-V}
& \phantom{t} &  + (A_{1} \cos(\tau) + V_1)^{3} - \cos(\tau)
\frac{\langle \cos(\cdot),(A_1 \cos(\cdot) + V_1)^3 \rangle_{L^2_{\rm per}(0,6\pi)}}{
\langle \cos(\cdot),\cos(\cdot)\rangle_{L^2_{\rm per}(0,6\pi)}}, \\
\nonumber
\ddot{V}_{n}+V_{n} & = & \beta V_{n} +\gamma (V_{n+1}+V_{n-1}) \\
\label{system-2-V}
& \phantom{t} &  + (A_{n} \cos(\tau) + V_n)^{3} - \cos(\tau)
\frac{\langle \cos(\cdot),(A_n \cos(\cdot) + V_n)^3 \rangle_{L^2_{\rm per}(0,6\pi)}}{
\langle \cos(\cdot),\cos(\cdot)\rangle_{L^2_{\rm per}(0,6\pi)}}, \quad n \geq 2.
\end{eqnarray}
Recall that $\beta = \mathcal{O}(\epsilon^{2/3})$ and $\gamma = \mathcal{O}(\epsilon)$
as $\epsilon \to 0$ if $\delta$ is fixed independently of small $\epsilon > 0$.
Provided that the sequence
$\{ A_{n}\}_{n\in\N}$ is bounded and $\| {\bf A} \|_{l^{\infty}(\N)}$ is small as $\epsilon \to 0$,
the Implicit Function Theorem applied to the system (\ref{system-1-V})--(\ref{system-2-V})
yields a unique even solution for ${\bf V} \in l^{2}(\N,H_{e}^{2}(0,6\pi))$ such
that $\langle {\bf V}, \cos(\cdot) \rangle_{L^2_{\rm per}(0,6\pi)} = {\bf 0}$
in the neighborhood of zero solution for small $\epsilon>0$ and ${\bf A} \in l^{\infty}(\N)$.
Moreover, for all small $\epsilon > 0$ and ${\bf A} \in l^{\infty}(\N)$,
there is a positive constant $C > 0$ such that
\begin{equation}
\label{solution-V-bound}
\|{\bf V}\|_{l^{2}(\N,H_{{\rm per}}^{2}(0,6\pi))} \leq C (\epsilon + \| {\bf A} \|^3_{l^{\infty}(\N)}).
\end{equation}
The balance occurs if $\| {\bf A} \|_{l^{\infty}(\N)} = \mathcal{O}(\epsilon^{1/3})$ as $\epsilon \to 0$.

Recall now that $\beta = 2\delta \epsilon^{2/3} - 2 \epsilon + \mathcal{O}(\epsilon^{4/3})$
and $\gamma = \epsilon + \mathcal{O}(\epsilon^{5/3})$ as $\epsilon \to 0$. Substituting the solution
of  the system (\ref{system-1-V})--(\ref{system-2-V}) satisfying (\ref{solution-V-bound})
to the system (\ref{system-1-A})--(\ref{system-2-A}) and using the
scaling transformation $A_n = \epsilon^{1/3} a_n$, $n \in \N$, we
obtain the perturbed difference equation for $\{ a_n \}_{n \in \N}$:
\begin{eqnarray}
\label{system-1-a}
2 \delta a_{1} + \frac{3}{4} a_1^3 + b_3 & = & \epsilon^{1/3} (2 a_1 - a_{2}) + \mathcal{O}(\epsilon^{2/3}), \\
\label{system-2-a}
2 \delta a_{n} + \frac{3}{4} a_n^3 & = & \epsilon^{1/3} (2 a_n - a_{n+1} - a_{n-1}) + \mathcal{O}(\epsilon^{2/3}), \quad n \geq 2.
\end{eqnarray}

At $\epsilon = 0$, the system (\ref{system-1-a}) and (\ref{system-2-a})
is decoupled. Let $a(\delta)$ be a root of the cubic equation:
\begin{equation}
2\delta a(\delta) + \frac{3}{4}a^{3}(\delta) + c_{3}(6\pi) = 0,
\label{cubic-equation}
\end{equation}
where $c_3(6\pi) \neq 0$ is given.
Roots of the cubic equation (\ref{cubic-equation}) are shown on Figure
\ref{fig:a_vs_delta} for $c_3(6\pi) < 0$. A positive root continues across $\delta=0$ and the
two negative roots bifurcate for $\delta<0$ by means of a saddle-node
bifurcation.
\begin{figure}
\begin{centering}
\includegraphics[width=0.47\textwidth]{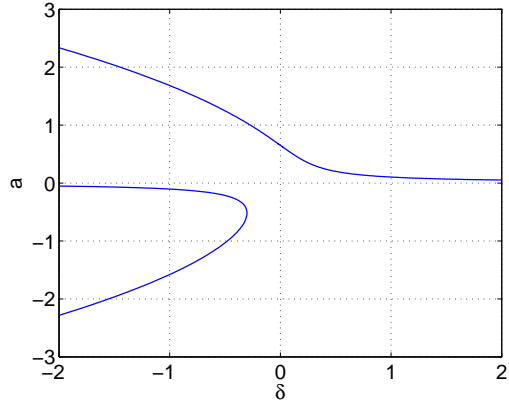}
\par\end{centering}
\caption{Roots of the cubic equation (\ref{cubic-equation}).
\label{fig:a_vs_delta}}
\end{figure}

Let $a(\delta)$ denote any root of cubic equation (\ref{cubic-equation})
such that $8\delta+9a^{2}(\delta)\neq0$. Assuming that
$b_3 = c_3(6 \pi) + \mathcal{O}(\epsilon^{2/3})$ as $\epsilon \to 0$
(this assumption is proved later in Lemma \ref{lemma-pitchfork}), the
Implicit Function Theorem yields a unique continuation of this root
in the system (\ref{system-1-a})--(\ref{system-2-a}) for small $\epsilon>0$
and any fixed $\delta \neq 0$:
\begin{equation}
\label{bound-a}
\left\{ \begin{array}{lclcl}
a_{1} & = & a(\delta)+\epsilon^{1/3}\frac{8a(\delta)}{8\delta+9a^{2}(\delta)} & + & \mathcal{O}(\epsilon^{2/3}),\\
a_{2} & = & \phantom{text}-\epsilon^{1/3}\frac{a(\delta)}{2\delta} & + & \mathcal{O}(\epsilon^{2/3}),\\
a_{n} & = & \phantom{t} & + & \mathcal{O}(\epsilon^{2/3}),\quad n\geq3.\end{array}\right.
\end{equation}
Again, these expansions are valid for any fixed $\delta \neq 0$ such that
$8\delta+9a^{2}(\delta)\neq0$.

\begin{remark}
The condition $8\delta+9a^{2}(\delta)=0$ implies
bifurcations among the roots of the cubic equation (\ref{cubic-equation}),
e.g., the fold bifurcation, when two roots coalesce and disappear
after $\delta$ crosses a bifurcation value. The condition $\delta=0$
does not lead to new bifurcations but implies that the values of
$a_{n}$ for $n\geq2$ are no longer as small as $\mathcal{O}(\epsilon^{1/3})$.
Refined scaling shows that if $\delta=0$, then $a_{1}=a(0)+\mathcal{O}(\epsilon^{1/3})$,
$a_{2}=\mathcal{O}(\epsilon^{1/9})$, and $a_{n}=\mathcal{O}(\epsilon^{4/27})$,
$n\geq3$, where $a(0)$ is a unique real root of
the cubic equation (\ref{cubic-equation}) for $\delta = 0$.
\end{remark}

We can now focus on the last remaining equation (\ref{system-0}) of the
rescaled discrete Klein--Gordon equation (\ref{KGlattice-tilded}).
Substituting $U_1 = \epsilon^{1/3} a(\delta) \cos(\tau) + \mathcal{O}_{H^2_{\rm per}(0,6\pi)}(\epsilon^{2/3})$
into equation (\ref{system-0}), we obtain the perturbed normal
form for 1:3 resonance,
\begin{equation}
\ddot{U}_{0}+U_{0}-U_{0}^{3}=\beta U_{0}+\nu\cos(\tau)+\mathcal{O}_{H_{{\rm per}}^{2}(0,6\pi)}(\epsilon^{5/3}),
\label{normal-form-0}
\end{equation}
where $\nu = 2\gamma\epsilon^{1/3}a(\delta) = \mathcal{O}(\epsilon^{4/3})$
as $\epsilon \to 0$. Because $a(\delta) \neq 0$,
we know that $\nu \neq 0$ if $\epsilon \neq 0$. The perturbed normal form
(\ref{normal-form-0}) coincides with the nonlinear Duffing oscillator
perturbed by a small harmonic forcing. The following lemma summarizes the reduction
of the discrete Klein--Gordon equation to the perturbed Duffing equation,
which was proved above with the Lyapunov--Schmidt reduction arguments.

\begin{lemma}
\label{lemma-reduction}
Let $\delta \neq 0$ be fixed independently of small $\epsilon>0$.
Let $a(\delta)$ be a root of the cubic equation (\ref{cubic-equation}) such
that $8\delta+9a^{2}(\delta)\neq0$. Assume that $c_3(6\pi) \neq 0$
among the Fourier coefficients (\ref{fourier-series-1}). For any
$6 \pi$-periodic solution $U_0$ of the perturbed
Duffing equation (\ref{normal-form-0})
satisfying symmetries (\ref{sym-U-0}) such that
\begin{equation}
\label{assumption-U-0}
U_0(\tau) = \varphi(\tau) + \mathcal{O}_{H_{{\rm per}}^{2}(0,6\pi)}(\epsilon^{2/3})
\quad \mbox{\rm as} \quad \epsilon \to 0,
\end{equation}
there exists a solution of the discrete Klein--Gordon equation (\ref{KGlattice-tilded})
such that
\begin{equation}
\left\{ \begin{array}{lclcl}
U_{\pm1}(\tau) & = & \epsilon^{1/3}a(\delta)\cos(\tau)+\epsilon^{2/3}\frac{8a(\delta)}{8\delta+9a^{2}(\delta)}\cos(\tau) & + & \mathcal{O}_{H_{{\rm per}}^{2}(0,6\pi)}(\epsilon),\\
U_{\pm2}(\tau) & = & \phantom{texttexttextte}-\epsilon^{2/3}\frac{a(\delta)}{2\delta}\cos(\tau) & + & \mathcal{O}_{H_{{\rm per}}^{2}(0,6\pi)}(\epsilon),\\
U_{\pm n}(\tau) & = & \phantom{t} & + & \mathcal{O}_{H_{{\rm per}}^{2}(0,6\pi)}(\epsilon),\quad n\geq3.\end{array}\right.
\label{fourier-series-3}
\end{equation}
 \end{lemma}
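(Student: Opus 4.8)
The plan is to run the Lyapunov--Schmidt reduction sketched in the paragraphs preceding the lemma, reading it in the reconstruction direction. We are handed a $6\pi$-periodic $U_0$ solving the perturbed Duffing equation (\ref{normal-form-0}) with the symmetries (\ref{sym-U-0}) and the proximity property (\ref{assumption-U-0}); the goal is to produce the remaining components $\{U_n\}_{n\in\N}$ (and their mirror images $U_{-n}=U_n$) so that the full triple solves the rescaled lattice (\ref{KGlattice-tilded}) with the precise expansions (\ref{fourier-series-3}). First I would expand $U_0$ in the Fourier cosine series (\ref{fourier-series-2}); comparing with (\ref{fourier-series-1}) at $T=6\pi$ and using (\ref{assumption-U-0}) gives $b_n=c_n(6\pi)+\mathcal{O}(\epsilon^{2/3})$ for all $n\in\N_{\rm odd}$, in particular $b_3=c_3(6\pi)+\mathcal{O}(\epsilon^{2/3})$ --- this is exactly the hypothesis invoked earlier to continue the roots of the cubic (\ref{cubic-equation}) --- and $b_3\neq0$ for all small $\epsilon$ because $c_3(6\pi)\neq0$.

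The heart of the proof is two successive applications of the Implicit Function Theorem, with $U_0$ (hence the coefficients $\{b_n\}$) held fixed. I would write $U_n=A_n\cos(\tau)+V_n$ with $V_n\perp\cos(\tau)$ as in (\ref{substitution-U-V}) and split (\ref{system-1})--(\ref{system-2}) into its $\cos(\tau)$-projection (\ref{system-1-A})--(\ref{system-2-A}) and its orthogonal complement (\ref{system-1-V})--(\ref{system-2-V}). Since $\partial_\tau^2+1$ on even $6\pi$-periodic functions has the one-dimensional kernel spanned by $\cos(\tau)$ and is boundedly invertible on its complement, and since $\beta=\mathcal{O}(\epsilon^{2/3})$, $\gamma=\mathcal{O}(\epsilon)$, the first application --- to (\ref{system-1-V})--(\ref{system-2-V}) with ${\bf A}\in l^\infty(\N)$ as a parameter --- yields a unique small even solution ${\bf V}={\bf V}({\bf A})\in l^2(\N,H_e^2(0,6\pi))$ obeying the bound (\ref{solution-V-bound}). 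Feeding ${\bf V}({\bf A})$ into (\ref{system-1-A})--(\ref{system-2-A}), substituting $\beta=2\delta\epsilon^{2/3}-2\epsilon+\mathcal{O}(\epsilon^{4/3})$ and $\gamma=\epsilon+\mathcal{O}(\epsilon^{5/3})$, rescaling $A_n=\epsilon^{1/3}a_n$, and using $\int_0^{6\pi}\cos^4(\tau)\,d\tau=\tfrac{9\pi}{4}$ reduces the amplitude equations to (\ref{system-1-a})--(\ref{system-2-a}).

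At $\epsilon=0$ this last system decouples, and the sequence $(a(\delta),0,0,\dots)$ with $a(\delta)$ a root of the cubic (\ref{cubic-equation}) is a solution; the linearization of the left-hand sides of (\ref{system-1-a})--(\ref{system-2-a}) at that sequence is the diagonal operator on $l^\infty(\N)$ with entries $\tfrac14(8\delta+9a^2(\delta))$ and $2\delta$, boundedly invertible precisely because $8\delta+9a^2(\delta)\neq0$ and $\delta\neq0$, while the coupling terms on the right-hand side are $\mathcal{O}(\epsilon^{1/3})$. The second application of the Implicit Function Theorem (in the variable $\epsilon^{1/3}$, and using $b_3=c_3(6\pi)+\mathcal{O}(\epsilon^{2/3})$) then continues this root uniquely to the expansion (\ref{bound-a}). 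Undoing $A_n=\epsilon^{1/3}a_n$ and $U_n=A_n\cos(\tau)+V_n$, and noting that (\ref{solution-V-bound}) gives $\|{\bf V}\|_{l^2(\N,H_{{\rm per}}^2(0,6\pi))}=\mathcal{O}(\epsilon)$ once $\|{\bf A}\|_{l^\infty(\N)}=\mathcal{O}(\epsilon^{1/3})$, produces the expansion (\ref{fourier-series-3}). Finally I would close the loop by checking the one remaining equation (\ref{system-0}): from (\ref{fourier-series-3}) with $n=1$, $U_1=\epsilon^{1/3}a(\delta)\cos(\tau)+\mathcal{O}_{H_{{\rm per}}^2(0,6\pi)}(\epsilon^{2/3})$, so with $\gamma=\epsilon+\mathcal{O}(\epsilon^{5/3})$ the right-hand side of (\ref{system-0}) equals $\beta U_0+\nu\cos(\tau)+\mathcal{O}_{H_{{\rm per}}^2(0,6\pi)}(\epsilon^{5/3})$ with $\nu=2\gamma\epsilon^{1/3}a(\delta)$; hence (\ref{system-0}) coincides with the perturbed Duffing equation (\ref{normal-form-0}) that $U_0$ satisfies by assumption, and the reconstructed $\{U_n\}_{n\in\Z}$ solves (\ref{KGlattice-tilded}).

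The main obstacle is not any individual estimate but the uniform-in-$\epsilon$ bookkeeping of the three distinct scales --- $\beta=\mathcal{O}(\epsilon^{2/3})$, $\gamma=\mathcal{O}(\epsilon)$, and the amplitude scaling $A_n=\epsilon^{1/3}a_n$ --- through the two infinite-dimensional Implicit Function Theorem steps, and in particular showing that the linearized operators on $l^2(\N,H_e^2(0,6\pi))$ and on $l^\infty(\N)$ are invertible with norms bounded independently of $\epsilon$ as $\epsilon\to0$. The two exclusions $\delta\neq0$ and $8\delta+9a^2(\delta)\neq0$ are exactly what is needed to keep the diagonal limiting operator in the second step bounded away from noninvertibility; their failure is where the reduction breaks down.
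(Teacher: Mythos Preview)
Your proposal is correct and follows essentially the same Lyapunov--Schmidt reduction that the paper carries out in the paragraphs immediately preceding the lemma (decomposition (\ref{substitution-U-V}), the two projected systems (\ref{system-1-A})--(\ref{system-2-V}), the bound (\ref{solution-V-bound}), the rescaled difference equations (\ref{system-1-a})--(\ref{system-2-a}), and the continuation (\ref{bound-a})), read in the reconstruction direction that the lemma statement requires. Your explicit closing of the loop on equation (\ref{system-0}) is a worthwhile addition that the paper leaves implicit, and your identification of the roles of $\delta\neq0$ and $8\delta+9a^2(\delta)\neq0$ in the invertibility of the limiting diagonal operator matches the paper's reasoning exactly.
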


\begin{remark}
Figure \ref{fig:a_vs_delta} shows that two negative roots
of the cubic equation (\ref{cubic-equation}) bifurcate at $\delta_* < 0$
via the saddle-node bifurcation and exist for $\delta < \delta_*$.
Negative values of $\delta$ correspond
to $T>6\pi$. As $\epsilon$ is small, this saddle-node bifurcation
gives a birth of two periodic solutions with
$$
u_1(0) = \epsilon^{1/3} a(\delta) + \mathcal{O}(\epsilon^{2/3}) < 0.
$$
This bifurcation is observed on Figure \ref{fig:breather_branches} (right),
one of the two new solutions still satisfies the asymptotic
representation (\ref{fourier-series-0}) as $\epsilon\to0$ for fixed
$T>6\pi$. \label{remark-4}
\end{remark}

In what follows, we shall consider the positive root of the cubic equation
(\ref{cubic-equation}) that continues across $\delta = 0$.
We are interested in $6\pi$-periodic solutions of the perturbed normal
form (\ref{normal-form-0}) in the limit of small $\epsilon > 0$
(when $\beta = \mathcal{O}(\epsilon^{2/3})$ and $\nu = \mathcal{O}(\epsilon^{4/3})$ are small).
Since the remainder term is small as $\epsilon\to0$ and the
persistence analysis is rather straightforward, we obtain main results
by studying the truncated Duffing equation with a small harmonic forcing:
\begin{equation}
\ddot{U}+U-U^{3}=\beta U+\nu\cos(\tau).
\label{normal-form-1}
\end{equation}
The following lemma guarantees the persistence of $6\pi$-periodic solutions
with even symmetry in the Duffing equation
(\ref{normal-form-1}) for all small values of $\beta$ and $\nu$.
Note that this persistence is assumed in
equation (\ref{assumption-U-0}) of the statement of Lemma \ref{lemma-reduction}.

\begin{lemma}
There are positive constants $\beta_{0}$, $\nu_{0}$, and $C$ such that
for all $\beta\in(-\beta_{0},\beta_{0})$ and $\nu\in(-\nu_{0},\nu_{0})$,
the normal form equation (\ref{normal-form-1}) admits a unique $6\pi$-periodic
solution $U_{\beta,\nu}\in H_e^{2}(0,6\pi)$ satisfying
symmetries
\begin{equation}
\label{sym-UU}
U_{\beta,\nu}(-\tau) = U_{\beta,\nu}(\tau) = -U_{\beta,\nu}(3 \pi -\tau), \quad \tau \in \R,
\end{equation}
and bound
\begin{equation}
\|U_{\beta,\nu}-\varphi\|_{H_{{\rm per}}^{2}}\leq C(|\beta|+|\nu|).
\end{equation}
Moreover, the map $\R \times \R \ni (\beta,\nu) \mapsto U_{\beta,\nu} \in H^2_e(0,6\pi)$
is $C^{\infty}$ for all $\beta\in(-\beta_{0},\beta_{0})$ and $\nu\in(-\nu_{0},\nu_{0})$.
\label{lemma-pitchfork}
\end{lemma}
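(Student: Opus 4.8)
The plan is to obtain $U_{\beta,\nu}$ from the Implicit Function Theorem applied at $(\beta,\nu)=(0,0)$, where (\ref{normal-form-1}) reduces to the quartic oscillator $\ddot U+U-U^3=0$ whose even $6\pi$-periodic solution $\varphi$ is furnished by (\ref{nonlinear-oscillator}). First I would fix the functional setting. Let $X\subset H_e^2(0,6\pi)$ be the closed subspace of even $6\pi$-periodic functions additionally satisfying $U(3\pi-\tau)=-U(\tau)$ --- equivalently, the closed span of $\{\cos(n\tau/3)\}_{n\in\N_{{\rm odd}}}$ --- and let $Y$ be the analogous subspace of $L_{\rm per}^2(0,6\pi)$. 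Since $H_{\rm per}^2(0,6\pi)$ is a Banach algebra, $\cos(3\pi-\tau)=-\cos(\tau)$, and the cube of a function in this symmetry class stays in the class, the map
$$
F(U,\beta,\nu):=\ddot U+U-U^3-\beta U-\nu\cos(\tau)
$$
is $C^\infty$ from $X\times\R\times\R$ into $Y$. The oscillator solution $\varphi$ inherits the reflection $\varphi(3\pi-\tau)=-\varphi(\tau)$ from the half-period relation $\varphi(t)=-\varphi(t+3\pi)$ (a consequence of the oddness of $V'$), so $\varphi\in X$ and $F(\varphi,0,0)=0$.

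Next I would compute the derivative $D_UF(\varphi,0,0)=\partial_\tau^2+1-3\varphi^2=L_e$, which is precisely the operator analyzed in the proof of Theorem \ref{theorem-continuum} with $V''(\varphi)=1-3\varphi^2$. That proof shows $L_e:H_e^2(0,6\pi)\to L_e^2(0,6\pi)$ is invertible whenever $T'(E)\neq0$, since the only kernel element of $L=\partial_\tau^2+V''(\varphi)$ on $H_{\rm per}^2$ is the odd function $\dot\varphi$; here $T'(E)\neq0$ at the energy level with $T(E)=6\pi$ because $T$ is strictly increasing in $E$ for the soft potential $V'(u)=u-u^3$ (see Figure \ref{fig1}). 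Because $\varphi^2$ is invariant under $\tau\mapsto3\pi-\tau$, $L_e$ maps $X$ into $Y$; applying this reflection to the identity $L_e(L_e^{-1}f)=f$ and invoking the injectivity of $L_e$ shows $L_e^{-1}$ maps $Y$ into $X$, so $L_e|_X:X\to Y$ is a bounded bijection.

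With $F(\varphi,0,0)=0$ and $D_UF(\varphi,0,0)$ boundedly invertible from $X$ to $Y$, the Implicit Function Theorem (Theorem 4.E in \cite{Zeidler}) yields $\beta_0,\nu_0>0$ and a $C^\infty$ map $(\beta,\nu)\mapsto U_{\beta,\nu}\in X$ on $(-\beta_0,\beta_0)\times(-\nu_0,\nu_0)$ with $U_{0,0}=\varphi$ and $F(U_{\beta,\nu},\beta,\nu)=0$; this is the asserted even solution with symmetries (\ref{sym-UU}), unique near $\varphi$ in $X$. The bound $\|U_{\beta,\nu}-\varphi\|_{H_{\rm per}^2}\le C(|\beta|+|\nu|)$ follows from the mean value inequality for this $C^1$ map, with leading corrections governed by $\partial_\beta U|_{(0,0)}=L_e^{-1}\varphi$ and $\partial_\nu U|_{(0,0)}=L_e^{-1}\cos(\tau)$. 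The one point that genuinely needs care is the invertibility of $L_e$ on the symmetry-restricted space $X$ rather than on all of $H_e^2$: this requires both $T'(E)\neq0$ at $T(E)=6\pi$ --- guaranteed by monotonicity of $T$ in $E$ for soft potentials --- and that restriction to $X$ preserve surjectivity, which is handled by the reflection-equivariance argument above; the rest is routine Implicit Function Theorem bookkeeping as in the proof of Theorem \ref{theorem-continuum}.
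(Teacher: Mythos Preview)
Your proof is correct and follows essentially the same approach as the paper: both arguments reduce to the Implicit Function Theorem applied via the invertibility of $L_e=\partial_\tau^2+V''(\varphi)$ on the space of even $6\pi$-periodic functions, exploiting that the only kernel element $\dot\varphi$ is odd. The paper's version proceeds in two steps (first continuing in $\beta$ at $\nu=0$, then in $\nu$) and checks the anti-symmetry $U(\tau)=-U(3\pi-\tau)$ a posteriori from the symmetry of the equation, whereas you apply the IFT once in both parameters and build that anti-symmetry into the function space $X$ from the outset; your setup is slightly cleaner but the substance is the same.
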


\begin{proof} The proof follows by the Lyapunov--Schmidt reduction
arguments. For $\nu=0$ and small $\beta\in(-\beta_{0},\beta_{0})$, there
exists a unique $6\pi$-periodic solution $U_{\beta,0}$
satisfying the symmetry (\ref{sym-UU}), which is $\mathcal{O}(\beta)$-close to $\varphi$
in the $H_{{\rm per}}^{2}(0,6\pi)$ norm. Because the Duffing oscillator
is non-degenerate, the Jacobian operator $L_{\beta,0}$ has a one-dimensional kernel
spanned by the odd function $\dot{U}_{\beta,0}$, where
\begin{equation}
L_{\beta,\nu}=\partial_{t}^{2}+1-\beta-3U_{\beta,\nu}^{2}(t).
\end{equation}
Therefore, $\langle\dot{U}_{\beta,0},\cos(\cdot)\rangle_{L_{{\rm per}}^{2}(0,6\pi)}=0$,
and the unique even solution persists for small $\nu\in(-\nu_{0},\nu_{0})$.
The symmetry (\ref{sym-UU}) persists for all $\nu \in (-\nu_0,\nu_0)$
because both the Duffing oscillator and the forcing term $\cos(\tau)$ satisfy this symmetry.
\end{proof}

\begin{remark}
Lemma \ref{lemma-pitchfork} excludes the pitchfork bifurcation in the limit $\epsilon \to 0$
for fixed $\delta \neq 0$. This result implies that the period of the pitchfork bifurcation
$T_S$ does not converge to $6 \pi$ as $\epsilon \to 0$. Indeed, we mentioned in the context of
Figure \ref{fig:breather_branches} that $T_S$ gets larger as $\epsilon$ gets smaller.
\label{remark-5}
\end{remark}

By the perturbation theory arguments, the kernel of the Jacobian operator $L_{\beta,\nu}$
is empty for small $\beta$ and $\nu$ provided that $\nu \neq 0$. Indeed, expanding
the solution of Lemma \ref{lemma-pitchfork} in power series in $\beta$
and $\nu$, we obtain
\begin{equation}
U_{\beta,\nu}=\varphi+\beta L_{e}^{-1}\varphi+\nu L_{e}^{-1}\cos(\cdot)+\mathcal{O}_{H_{{\rm per}}^{2}(0,6\pi)}(\beta^{2},\nu^{2}),
\end{equation}
where $L_{e}$ is the operator in (\ref{inhomogen-2}). Although
$L_{e}$ has a one-dimensional kernel spanned by $\dot{\varphi}$,
this eigenfunction is odd in $\tau$, whereas $\varphi$ and $\cos(\cdot)$
are defined in the space of even functions.
Expanding potentials of the operator $L_{\beta,\nu}$, we obtain
\begin{equation}
L_{\beta,\nu}\dot{U}_{\beta,\nu}=\nu\sin(\cdot)+\mathcal{O}_{H_{{\rm per}}^{2}(0,6\pi)}(\beta^{2},\nu^{2}).
\end{equation}
We note that
\[
\langle\dot{\varphi},\cos(\cdot)\rangle_{L_{{\rm per}}^{2}(0,6\pi)} =
\langle\varphi,\sin(\cdot)\rangle_{L_{{\rm per}}^{2}(0,6\pi)}\neq 0
\]
if $c_3(6\pi)\neq0$, where $c_3(T)$ is defined by the Fourier
series (\ref{fourier-series-1}). By the perturbation theory,
the kernel of $L_{\beta,\nu}$
is empty for small $\nu\in(-\nu_{0},\nu_{0})$.

If the linearization operator $L_{\beta,\nu}$ becomes non-invertible
along the curve $\nu=\nu_S(\beta)$ of the codimension one bifurcation,
the symmetry-breaking (pitchfork) bifurcation occurs at $\nu=\nu_S(\beta)$.
This property gives us a criterion to find the pitchfork bifurcation
numerically, in the context of the Duffing equation (\ref{normal-form-1}).
Figure \ref{fig:FMs_vs_nu} (left) shows the behavior of Floquet multipliers
of equation $L_{\beta,\nu}W=0$ with respect to parameter $\nu$ at
$\beta=0$. We can see from this picture that the pitchfork bifurcation occurs
at $\nu\approx0.00015$.

The right panel of Figure \ref{fig:FMs_vs_nu} gives the dependence of the bifurcation
value $\nu_S$ on $\beta$,
for which the operator $L_{\beta,\nu_S(\beta)}$ in not invertible on $L^2_e(0,6\pi)$.
Using the formula for $\beta$ in (\ref{scaling-normal-form}),
we obtain
$$
T = 6\pi  \frac{\sqrt{1-\beta}}{\sqrt{1+2\epsilon}}.
$$
As the coupling constant $\epsilon$ goes to zero, so does parameter
$\nu$. As shown on Figure \ref{fig:FMs_vs_nu} (right), parameter
$\beta$ at the bifurcation curve goes to negative infinity as $\nu\to0$.
This means that the closer we get to the anti-continuum limit, the
further away from $6\pi$ moves the pitchfork bifurcation period $T_S$.
This confirms the early observation that $T_S$ gets larger as $\epsilon$ gets smaller
(see Remark \ref{remark-5}).

Figure \ref{fig:form-1_solutions} shows one solution of Lemma \ref{lemma-pitchfork}
for $0\leq\nu\leq\nu_S(\beta)$ and three solutions for $\nu>\nu_S(\beta)$,
where $\beta=0$. The new solution branches are still given by
even functions but the symmetry $U(\tau) = -U(3\pi-\tau)$ is now broken.
This behavior resembles the pitchfork bifurcation shown on
Figure \ref{fig:solutions}.

Figure \ref{fig:delta_vs_eps} transfers the behavior of Figures \ref{fig:FMs_vs_nu}
and \ref{fig:form-1_solutions} to parameters $T$, $\epsilon$, and $a_0 = u_0(0)$.
The dashed line on the left panel shows the dependence of period $T_S$ at the pitchfork
bifurcation versus $\epsilon$ for the full system (\ref{eq:dKG_trunc}).
The right panel of Figure \ref{fig:delta_vs_eps} can be compared with
the inset on the left panel of Figure \ref{fig:breather_branches}.

\begin{remark}
Numerical results on Figures \ref{fig:form-1_solutions} and \ref{fig:delta_vs_eps}
indicate that the Duffing equation with a small harmonic forcing (\ref{normal-form-1})
allows us to capture the main features of the symmetry-breaking bifurcations
in the discrete Klein--Gordon equation (\ref{eq:dKG_trunc}). Nevertheless,
we point out that the rigorous results of Lemmas \ref{lemma-reduction} and
\ref{lemma-pitchfork} are obtained far from the pitchfork bifurcation, because
parameter $\delta$ is assumed to be fixed independently of $\epsilon$ in these lemmas.
To observe the pitchfork bifurcation on Figures \ref{fig:form-1_solutions} and \ref{fig:delta_vs_eps},
parameter $\delta$ must be sent to $-\infty$ as $\epsilon$ reduces to zero.
\end{remark}

\begin{center}
\begin{figure}
\begin{centering}
\begin{tabular}{cc}
\includegraphics[width=0.45\textwidth]{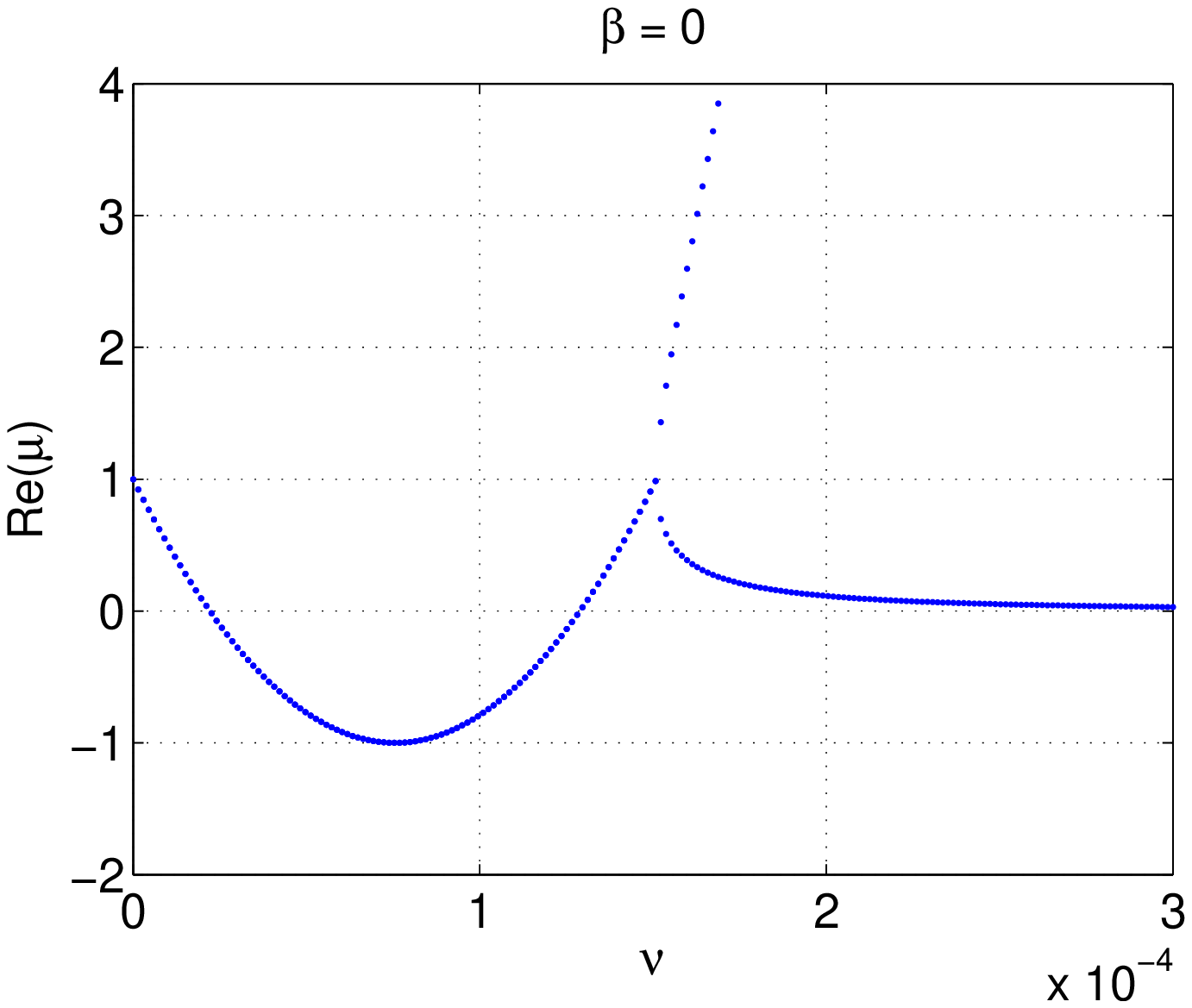}  & \includegraphics[width=0.45\textwidth]{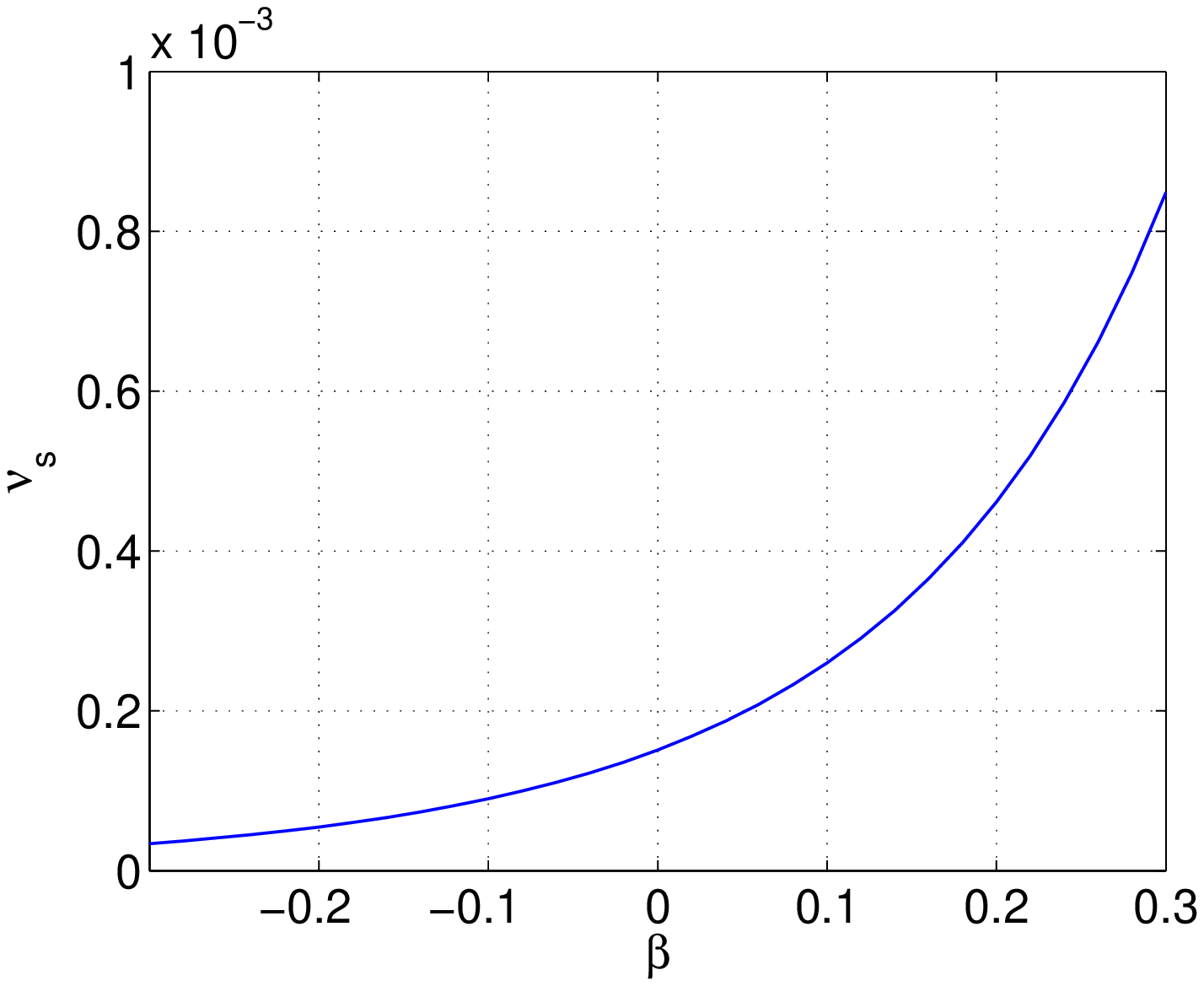}\tabularnewline
\end{tabular}
\par\end{centering}
\caption{Left: Floquet multipliers $\mu$ of equation $L_{\beta,\nu}W=0$.
Right: Parameter $\nu$
versus $\beta$ at the symmetry-breaking bifurcation. \label{fig:FMs_vs_nu}}
\end{figure}
\begin{figure}
\begin{centering}
\begin{tabular}{cc}
\includegraphics[width=0.45\textwidth]{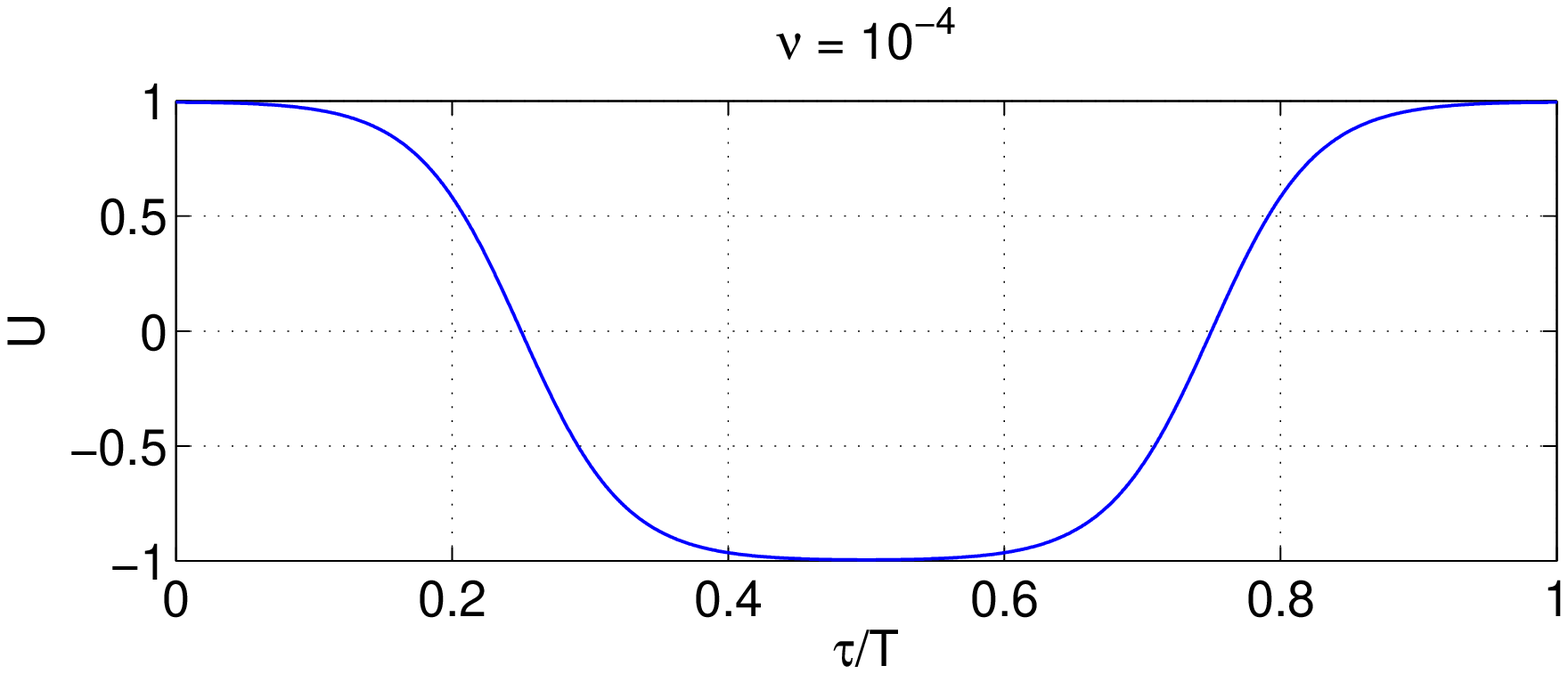} & \includegraphics[width=0.45\textwidth]{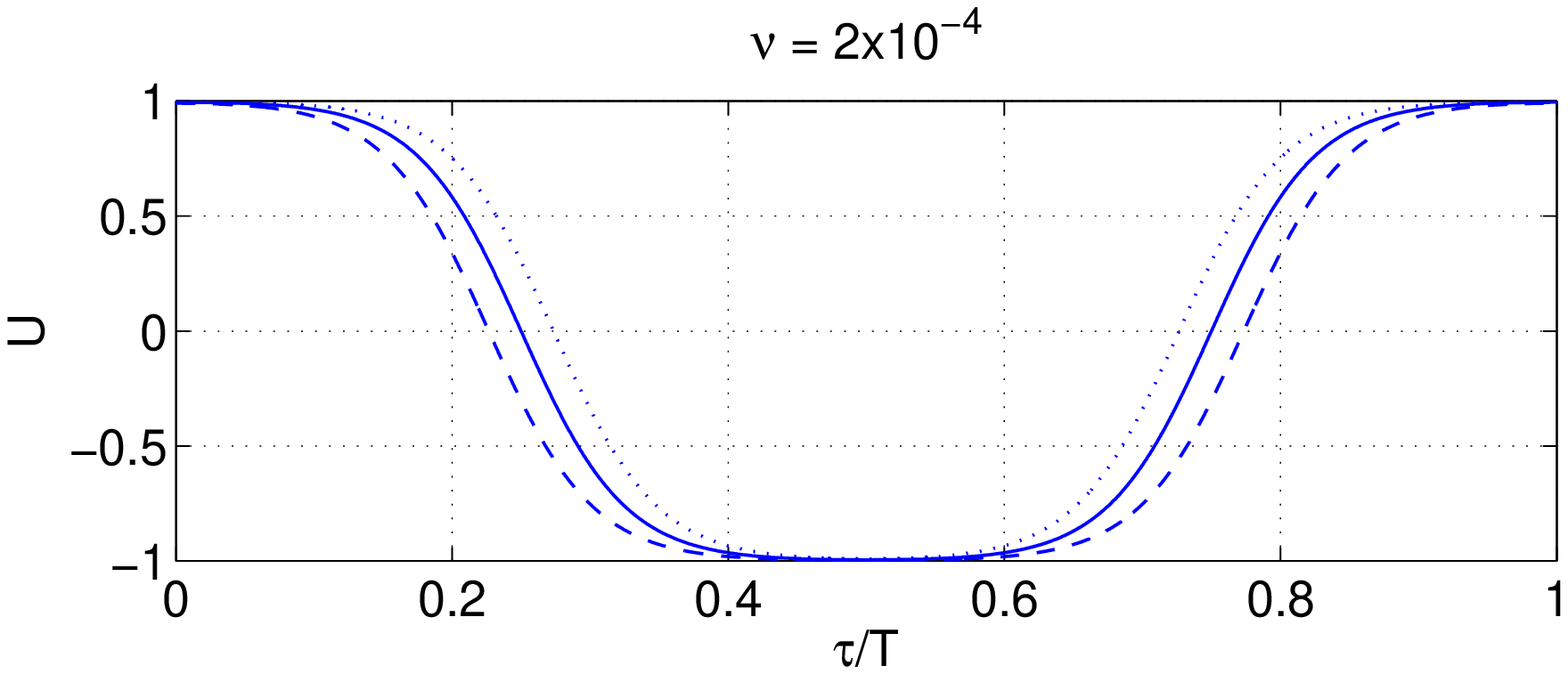}\tabularnewline
\end{tabular}
\par\end{centering}
\caption{Solutions with period $T=6\pi$ to equation (\ref{normal-form-1})
at $\beta=0$ before (left) and after (right) the symmetry-breaking
bifurcation. \label{fig:form-1_solutions}}
\end{figure}
\begin{figure}
\begin{centering}
\begin{tabular}{cc}
\includegraphics[width=0.45\textwidth]{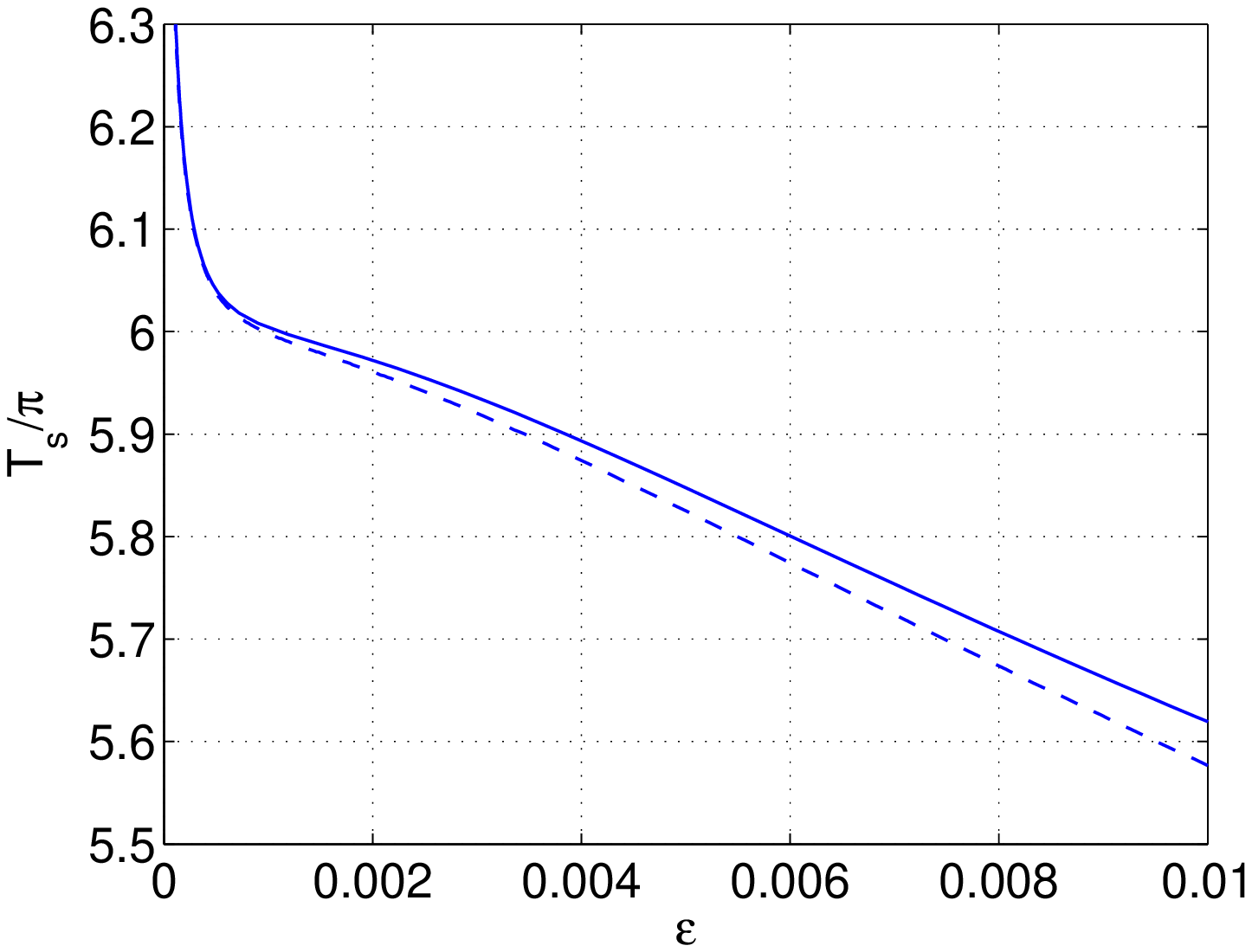}  & \includegraphics[width=0.45\textwidth]{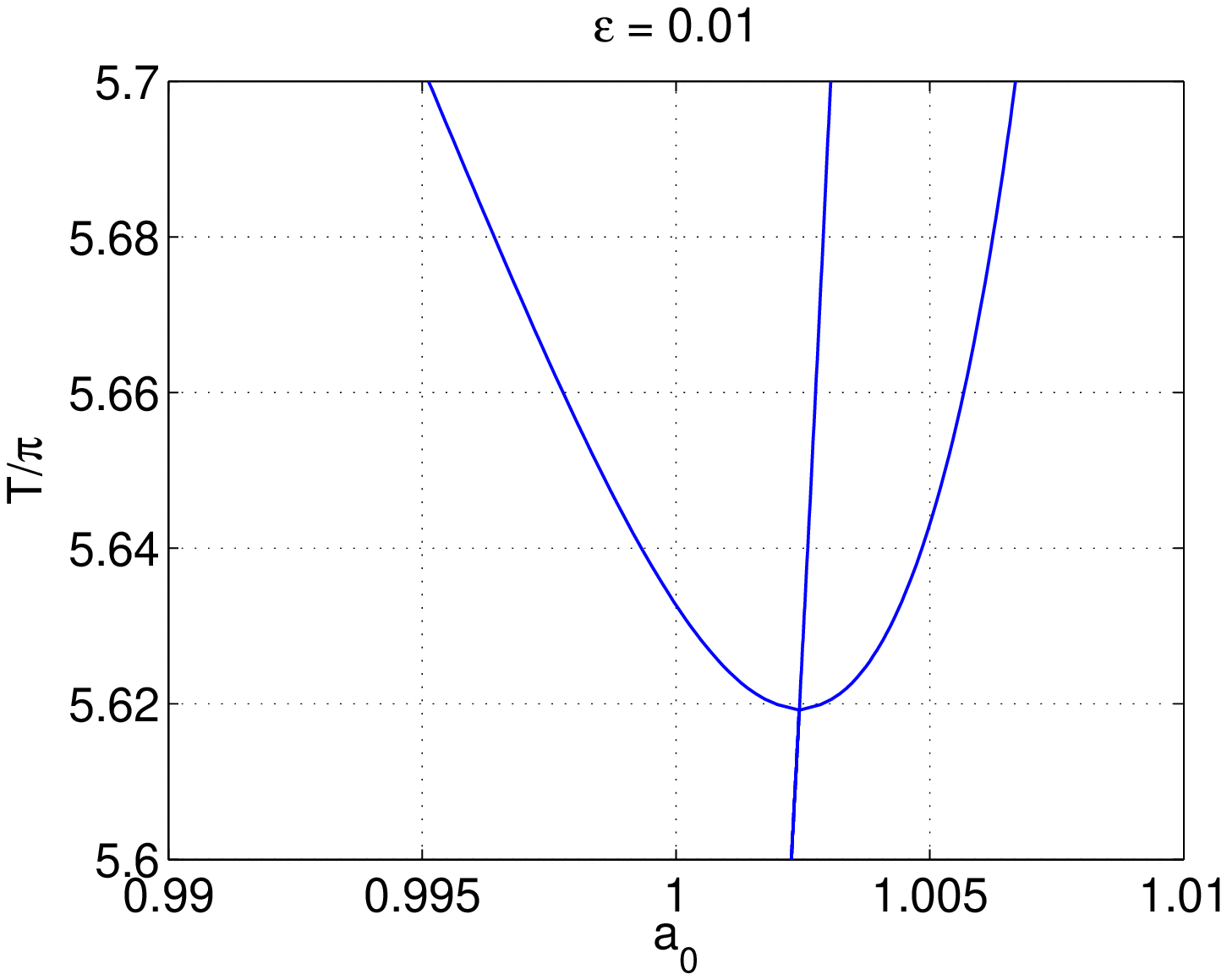}\tabularnewline
\end{tabular}
\par\end{centering}
\caption{Left: Period $T_S$ versus $\epsilon$ at the symmetry--breaking
bifurcation of the fundamental breather modelled by equation (\ref{normal-form-1})
(solid line) and equation (\ref{eq:dKG_trunc}) (dashed line). Right:
Bifurcation diagram for the initial displacement $u_0(0)=a_{0}$
and period $T$ in variables  (\ref{scal-T}) computed from the $6\pi$-periodic
solution to equation (\ref{normal-form-1}). \label{fig:delta_vs_eps}}
\end{figure}
\end{center}

\section{Conclusion}

We have considered existence and stability of multi-site breathers in the Klein--Gordon
lattices with linear couplings between neighboring particles. We have described explicitly
how the stability or instability of a multi-site breather
depends on the phase difference and distance between the excited oscillators.

It is instructive to compare our results to those obtained by Yoshimura \cite{Yoshimura-2} for
the lattices with purely anharmonic coupling:
\begin{equation}
\ddot{u}_{n}+u_{n}\pm u_{n}^{k}=\epsilon(u_{n+1}-u_{n})^{k}-\epsilon(u_{n}-u_{n-1})^{k},\label{eq:Yoshimura}
\end{equation}
where $k\geq3$ is an odd integer. Table II summarizes the result of \cite{Yoshimura-2}
for stable configurations of two-site breathers from the configuration
$$
{\bf u}^{(0)}(t) = \sigma_j \varphi(t) {\bf e}_j + \sigma_k \varphi(t) {\bf e}_k,
$$
where $N = |j-k| \geq 1$.

Note that the original results of \cite{Yoshimura-2}
were obtained for finite lattices with open boundary conditions but can be extrapolated
to infinite lattices, which preserve the symmetry of the multi-site breathers.

\medskip{}

\begin{center}
\begin{tabular}{|>{\centering}m{25mm}|>{\centering}m{25mm}|>{\centering}m{50mm}|}
\hline
 & $N$ odd  & $N$ even\tabularnewline
\hline
hard potential

$V'(u)=u+u^{3}$ & in-phase  & anti-phase\tabularnewline
\hline
soft potential

$V'(u)=u-u^{3}$ & anti-phase \\ in-phase
& anti-phase\tabularnewline
\hline
\end{tabular}
\par\end{center}

\textbf{Table II:} Stable two-site breathers in soft and hard potentials from \cite{Yoshimura-2}.

\medskip{}

Table II is to be compared with Table I summarizing our results. Note that
Table I actually covers $M$-site breathers with equal distance $N$ between the excited sites,
whereas Table II only gives the results in the case $M = 2$.
We have identical results for hard potentials and different results for soft potentials.
First, spectral stability of a two-site breather in the anharmonic potentials
is independent of its period of oscillations and is solely determined by
its initial configuration (Table II). This is different from the transition from
stable anti-phase to stable in-phase breathers for even $N$ in soft potentials (Table I).
Second, both anti-phase and in-phase two-site breathers with odd $N$ are
stable in the anharmonic lattice. The surprising stability of in-phase breathers
is explained by additional symmetries in the anharmonic potentials. The symmetries
trap the unstable Floquet multipliers $\mu$
associated with in-phase breathers for odd $N$ at the point $\mu = 1$. Once the symmetries are
broken (e.g., for even $N$), the Floquet multipliers $\mu$ split along the real axis and the
in-phase two-site breather becomes unstable in soft potentials.

We have also illustrated bifurcations
of breathers near the point of $1:3$ resonance. It is important
to note that a similar behavior is observed near points
of $1:k$ resonance, with $k$ being an odd natural number.
For the non-resonant periods, a breather has large amplitudes
on excited sites and small amplitudes on the other sites. As we increase
the breather's period approaching a resonant point $T = 2\pi k$ for odd $k$,
the amplitudes at all sites become large, a cascade of pitchfork bifurcations
occurs for these breathers, and families of these breathers deviate from the
one prescribed by the anti-continuum limit. However, due to the saddle-node
bifurcation, another family of breathers satisfying Theorem \ref{theorem-continuum}
emerges for periods just above the resonance value.
The period--amplitude curves, similar to those on Figure \ref{fig:breather_branches},
start to look like trees with branches at all resonant points $T = 2\pi k$ for odd $k$.
In the anti-continuum limit, the gaps at the period--amplitude curves vanish while the points
of the pitchfork bifurcations go to infinity. The period--amplitude
curves turn into those for the set of uncoupled anharmonic oscillators.

\textbf{Acknowledgements.} The authors thank P. Kevrekidis and K.
Yoshimura for posing the problem and stimulating discussions, which resulted in this work.
%They also thank G. James and T. Kapitula for refereeing this paper and
%providing a number of useful suggestions.

\appendix
\section{Comparison of Floquet theory with spectral band theory and
Hamiltonian averaging}

\label{sec-equivalence}

We will show here that the criterion of Lemma \ref{lemma-count} for
$N=1$ agrees exactly with the main conclusions of the previous studies
\cite{Archilla,KK09} (see also \cite{ArchKK11}).

The matrix eigenvalue problem (\ref{reduced-eigenvalue}) for $N=1$
can be written in the form
\begin{equation}
A(E) \Lambda^{2}{\bf c}=\mathcal{S}{\bf c},\quad{\bf c}\in\C^{M},
\label{red-eig}
\end{equation}
where
\begin{equation}
A(E)=-\frac{T^{2}(E)}{T'(E)\int_{0}^{T}\dot{\varphi}^{2}dt}.
\label{formula-L}
\end{equation}
We will show that the quantity $A(E)$ arises both in the spectral band
theory used in \cite{Archilla} and in the Hamiltonian averaging
used in \cite{KK09}.

For the spectral band theory \cite{Archilla}, we consider solutions of the spectral
problem
\begin{equation}
Lu=\lambda u,\quad L=\partial_{t}^{2}+V''(\varphi(t)) :
H_{{\rm per}}^{2}(0,T)\to L_{{\rm per}}^{2}(0,T).
\end{equation}
Let $M=\Phi(T)$ be the monodromy matrix computed from the fundamental
matrix solution $\Phi(t)$ of the system
\begin{equation}
\frac{d}{dt}\Phi(t)=\left[\begin{array}{cc}
0 & 1\\
\lambda-V''(\varphi(t)) & 0\end{array}\right]\Phi(t),
\label{system-ODEs}
\end{equation}
subject to the initial condition $\Phi(0)=I\in\mathbb{M}^{2\times2}$.
Since ${\rm det}(M)=1$, the Floquet multipliers $\mu_{1}$ and $\mu_{2}$
satisfy
\[
\mu_{1}\mu_{2}=1,\quad\mu_{1}+\mu_{2}={\rm tr}(M)\equiv F(\lambda).
\]
In particular, $\mu_{1}=\mu_{2}=1$ if ${\rm tr}(M)=2$, which is
true at $\lambda=0$ thanks to the exact solution (\ref{exact-solution-2}),
\[
\Phi(t)=\left[\begin{array}{cc}
\frac{\partial_{E}\varphi(t)}{a'(E)} & \frac{\dot{\varphi}(t)}{\ddot{\varphi}(0)}\\
\frac{\partial_{E}\dot{\varphi}(t)}{a'(E)} & \frac{\ddot{\varphi}(t)}{\ddot{\varphi}(0)}\end{array}\right]\quad\Rightarrow\quad
\Phi(T)=\left[\begin{array}{cc}
1 & 0\\
T'(E)[V'(a)]^{2} & 1\end{array}\right].
\]
Hence, we have $F(0) = 2$. We will show that $A(E)$ in (\ref{formula-L}) determines the sign of
$F'(0)$. Denote elements of $M=\Phi(T)$ by $M_{i,j}$ for $1\leq i,j\leq2$.
Since ${\rm det}(M)=1$ for all $\lambda$, we obtain
\[
F'(0)=\partial_{\lambda}(M_{11}+M_{22})|_{\lambda=0} =
M_{21}\partial_{\lambda}M_{12}|_{\lambda=0}=T'(E) [V'(a)]^{2}\partial_{\lambda}M_{12}|_{\lambda=0}.
\]
Let $U(t)$ be a solution of
\begin{equation}
\ddot{U}(t)+V''(\varphi(t))U(t)=\dot{\varphi}(t),
\label{second-order-equation-U}
\end{equation}
subject to the initial condition $U(0)=\dot{U}(0)=0$. Then, $U(T)=\partial_{\lambda}M_{12}|_{\lambda=0}\ddot{\varphi}(0)$.
Solving the second-order equation (\ref{second-order-equation-U}),
we obtain an explicit solution
\[
U(t)=\dot{\varphi}(t)\left(1-\int_{0}^{t}\dot{\varphi}(s)
\partial_{E}\varphi(s)ds\right)+\partial_{E}\varphi(t)\int_{0}^{t}\dot{\varphi}^{2}(s)ds,
\]
from which we find that
\[
F'(0)=-T'(E)\int_{0}^{T}\dot{\varphi}^{2}(t)dt=\frac{T^{2}(E)}{A(E)}.
\]

If $T'(E)<0$ (for the hard potentials), we have $F'(0)>0$, which
implies that the spectral band of the purely continuous spectrum of
operator $L$ in $L^{2}(\R)$ is located to the left of $\lambda=0$.
If $T'(E)>0$ (for the soft potentials), we have $F'(0)<0$, which
implies that the spectral band of $L$ in $L^{2}(\R)$ is located
to the right of $\lambda=0$. If $\lambda=\omega(k)$ is the dispersion
relation of the spectral band for $k\in\left[-\frac{\pi}{T(E)},\frac{\pi}{T(E)}\right]$,
then near $\lambda=0$, we have
\[
\omega(k)=-\frac{T^{2}(E)}{F'(0)}k^{2}+{\cal O}(k^{4})\quad\mbox{{\rm as}}\quad k\to0\quad\Rightarrow\quad\omega''(0)=-2A(E).
\]
The identity $\omega''(0) = -2A(E)$ establishes the equivalence of the
matrix eigenvalue problem (\ref{red-eig}) with the spectral band
theory used in \cite{Archilla}.

For the Hamiltonian averaging \cite{KK09}, we consider the action variable
for the nonlinear oscillator (\ref{nonlinear-oscillator}),
\[
J=4\int_{0}^{a(E)}\sqrt{2(E-V(\varphi))}d\varphi.
\]
Explicit computation shows that
\[
\frac{dJ}{dE}=2\sqrt{2}\int_{0}^{a(E)}\frac{d\varphi}{\sqrt{E-V(\varphi)}}=T(E).
\]
If $\omega=\frac{1}{T(E)}$ is the frequency of oscillations, then
\[
\frac{dE}{dJ}=\omega(J)\quad\Rightarrow\quad\frac{d^{2}E}{dJ^{2}}= \frac{d\omega}{dJ}=-\frac{T'(E)}{T^{3}(E)}=\frac{1}{T(E) A(E) \int_{0}^{T}\dot{\varphi}^{2}dt}.
\]
Therefore, the signs of $A(E)$ and $E''(J)$ coincide and this establishes
the equivalence of the matrix eigenvalue problem (\ref{red-eig})
with the Hamiltonian averaging used in \cite{KK09}.

\end{document}